\newcommand{\bR}{\mathbb{R}}
\newcommand{\mL}{\mathcal{L}}
\newcommand{\mO}{\mathcal{O}}
\newcommand{\mG}{\mathcal{G}}
\newcommand{\mN}{\mathcal{N}}
\newcommand{\mP}{\mathcal{P}}
\newcommand{\mW}{\mathcal{W}}
\newcommand{\TV}{d_{\text{TV}}}
\newcommand{\bP}{\mathbb{P}}
\newcommand{\bE}{\mathbb{E}}
\newcommand{\N}{\mathcal{N}}
\newcommand{\pr}[1]{\textsc{#1}}
\newtheorem{question}{Question}[section]
\newtheorem{definition}{Definition}[section]
\newtheorem{fact}{Fact}[section]
\newtheorem{proposition}{Proposition}[section]
\newtheorem{theorem}{Theorem}[section]
\newtheorem{corollary}{Corollary}[section]
\newtheorem{conjecture}{Conjecture}[section]
\newtheorem{lemma}[theorem]{Lemma}
\newenvironment{fminipage}%
  {\begin{Sbox}\begin{minipage}}%
  {\end{minipage}\end{Sbox}\fbox{\TheSbox}}
\newenvironment{algbox}[0]{\vskip 0.2in
\noindent 
\begin{fminipage}{6.3in}
}{
\end{fminipage}
\vskip 0.2in
}
\begin{document}

\title{Optimal Average-Case Reductions to Sparse PCA: \\ From Weak Assumptions to Strong Hardness}

\author{Matthew Brennan\thanks{Massachusetts Institute of Technology. Department of EECS. Email: \texttt{brennanm@mit.edu}.}
\and 
Guy Bresler\thanks{Massachusetts Institute of Technology. Department of EECS. Email: \texttt{guy@mit.edu}.}}
\date{\today}

\maketitle

\begin{abstract}
In the past decade, sparse principal component analysis has emerged as an archetypal problem for illustrating statistical-computational tradeoffs. This trend has largely been driven by a line of research aiming to characterize the average-case complexity of sparse PCA through reductions from the planted clique ($\pr{pc}$) conjecture -- which conjectures that there is no polynomial-time algorithm to detect a planted clique of size $K = o(N^{1/2})$ in $\mG(N, \frac{1}{2})$ \cite{berthet2013optimal, berthet2013complexity, wang2016statistical, gao2017sparse, brennan2018reducibility}. All of these reductions either fail to show tight computational lower bounds matching existing algorithms or show lower bounds for formulations of sparse PCA other than its canonical generative model, the spiked covariance model. Also, these lower bounds all quickly degrade with the exponent in the $\pr{pc}$ conjecture. Specifically, when only given the $\pr{pc}$ conjecture up to $K = o(N^\alpha)$ where $\alpha < 1/2$, there is no sparsity level $k$ at which these lower bounds remain tight. If $\alpha \le 1/3$ these reductions fail to even show the existence of a statistical-computational tradeoff at any sparsity $k$. Our main results are as follows:
\begin{itemize}
\item We give a reduction from $\pr{pc}$ that yields the first full characterization of the computational barrier in the spiked covariance model, providing tight lower bounds at all sparsities $k$. This partially resolves a question raised in \cite{brennan2018reducibility}.
\item We show the surprising result that weaker forms of the $\pr{pc}$ conjecture up to clique size $K = o(N^\alpha)$ for any given $\alpha \in (0, 1/2]$ imply tight computational lower bounds for sparse PCA at sparsities $k = o(n^{\alpha/3})$. This shows that even a mild improvement in the signal strength needed by the best known polynomial-time sparse PCA algorithms would imply that the hardness threshold for $\pr{pc}$ is \emph{subpolynomial}, rather than on the order $N^{1/2}$ as is widely conjectured.
\end{itemize}
Our second result essentially shows that whether or not there are better efficient algorithms for $\pr{pc}$ is irrelevant to the statistical-computational gap for sparse PCA in the practically relevant highly sparse regime. This is the first instance of a suboptimal hardness assumption implying optimal lower bounds for another problem in unsupervised learning.

The reduction proving this result is more algorithmically involved than prior reductions to sparse PCA, making crucial use of a collection of average-case reduction primitives and introducing new techniques based on several decomposition and comparison properties of random matrices. Our lower bounds remain unchanged assuming hardness of planted dense subgraph instead of $\pr{pc}$, which also has implications for the existence of algorithms that are slower than polynomial time. As a key intermediate step, our reduction maps an instance of $\pr{pc}$ to the empirical covariance matrix of sparse PCA samples, which proves to be a delicate task because of dependence among the entries of this matrix.
\end{abstract}

\pagebreak

\tableofcontents

\pagebreak

\section{Introduction}

Principal component analysis (PCA), the task of projecting multivariate samples onto the leading eigenvectors of their empirical covariance matrix, is one of the most popular dimension reduction techniques in statistics. However in modern high-dimensional settings, PCA no longer provides a meaningful estimate of principal components. More precisely, the BBP transition in random matrix theory implies that PCA yields a statistically inconsistent estimator when the data is distributed according to the spiked covariance model \cite{baik2005phase, paul2007asymptotics, johnstone2009consistency}. Sparse PCA was introduced in \cite{johnstoneSparse04} to alleviate this inconsistency in the high-dimensional setting and has found applications in a diverse range of fields. Examples include online visual tracking \cite{wang2013online}, pattern recognition \cite{naikal2011informative}, image compression \cite{majumdar2009image}, electrocardiography \cite{johnstone2009consistency}, gene expression analysis \cite{zou2006sparse, chun2009expression, parkhomenko2009sparse, chan2010using}, RNA sequence classification \cite{tan2014classification} and metabolomics studies \cite{allen2011sparse}. Further background on sparse PCA can be found in \cite{wang2016statistical}.

The canonical formulation of sparse PCA is the spiked covariance model, which has observed data $X = (X_1, X_2, \dots, X_n)$ distributed i.i.d. according to $\mN(0, I_d + \theta vv^\top)$ where $v\in \bR^d$ is a $k$-sparse unit vector and $\theta$ parametrizes the signal strength. The objective is either to estimate the direction $v$ of the spike or detect its existence. Detection is formulated as a hypothesis testing problem on the data $X = (X_1, X_2, \dots, X_n)$ with hypotheses
$$H_0:X \sim \mN(0, I_d)^{\otimes n} \quad\text{ and }\quad  H_1:X \sim \mN(0, I_d + \theta vv^\top)^{\otimes n}\,.$$
There is an extensive literature on both the statistical limits and efficient algorithms for sparse PCA. A large number of algorithms solving sparse PCA have been proposed \cite{amini2009high, ma2013sparse, cai2013sparse, berthet2013optimal, berthet2013complexity, shen2013consistency, krauthgamer2015semidefinite, deshpande2014sparse, wang2016statistical}. As shown in \cite{berthet2013optimal, cai2015optimal, wang2016statistical}, the statistical limit of detection in the spiked covariance model is $\theta = \Theta(\sqrt{k \log d/n})$ and the minimax rate of estimation under the $\ell_2$ norm is $O(\sqrt{k \log d / n\theta^2})$. These rates are achieved by some of the estimators in the papers listed above, none of which can be computed in polynomial time. In contrast, in the sparse regime of $k = \tilde{O}(\sqrt{n})$, the best known polynomial time algorithms require the much larger signal $\theta = \Omega(\sqrt{k^2/n})$ for detection and achieve a minimax rate of estimation under the $\ell_2$ norm of $O(\sqrt{k^2 / n\theta^2})$. This phenomenon raises the general question: is this statistical-computational gap inherent or are there better polynomial time algorithms?

Since the seminal paper of Berthet and Rigollet \cite{berthet2013complexity}, sparse PCA has emerged as the archetypal example of a high-dimensional statistical problem with a statistical-computational gap and serves as an ideal case-study of the general phenomenon. Consequently, a line of research has aimed to characterize the average-case complexity of sparse PCA. This and prior work seeks to determine the computational phase diagram of sparse PCA by answering the following question.

\begin{question}
For a given scaling of parameters $(n, k, d, \theta)$, is sparse PCA information-theoretically impossible, efficiently solvable or in principle feasible but computationally hard?
\end{question}

There are several feasible approaches to providing an answer to this question. One taken in the literature is to show lower bounds for variants of sparse PCA in classes of algorithms such as the sum of squares hierarchy \cite{ma2015sum, hopkins2017power} and statistical query algorithms \cite{diakonikolas2016statistical, lu2018edge}. A more traditional complexity-theoretic approach is to show lower bounds against all polynomial time algorithms by reducing from a conjecturally hard problem. Since the seminal paper of Berthet and Rigollet, a growing line of research \cite{berthet2013optimal, berthet2013complexity, wang2016statistical, gao2017sparse, brennan2018reducibility} has shown computational lower bounds for sparse PCA through reductions from the planted clique ($\pr{pc}$) conjecture \cite{berthet2013optimal, berthet2013complexity, wang2016statistical, gao2017sparse, brennan2018reducibility}.

The $\pr{pc}$ problem is to detect or find a planted clique of size $K$ in the $N$-vertex Erd\H{o}s-R\'{e}nyi graph $\mG(N, \frac{1}{2})$. Because the largest clique in $\mG(N,\frac12)$ is with high probability of size roughly $2\log_2 N$, planted cliques of size $K \ge (2+\epsilon)\log_2 N$ can be detected in $N^{O(\log N)}$ time by searching over all $(2+\epsilon)\log_2 N$-node subsets. The $\pr{pc}$ conjecture is that there are no polynomial-time algorithms for this task if $K = o(N^{1/2})$, which we state formally in Section \ref{sec:problems}. Since its introduction in \cite{kuvcera1995expected} and \cite{jerrum1992large}, $\pr{pc}$ has been studied extensively. Spectral algorithms, approximate message passing, semidefinite programming, nuclear norm minimization and several other polynomial-time combinatorial approaches all appear to fail to recover the planted clique when $K = o(N^{1/2})$ \cite{alon1998finding, feige2000finding, mcsherry2001spectral, feige2010finding, ames2011nuclear, dekel2014finding, deshpande2015finding, chen2016statistical}. In support of the $\pr{pc}$ conjecture, it has been shown that cliques of size $K = o(N^{1/2})$ cannot be detected by the Metropolis process \cite{jerrum1992large}, low-degree sum of squares relaxations \cite{barak2016nearly} and statistical query algorithms \cite{feldman2013statistical}. Recently, \cite{atserias2018clique} showed that super-polynomial length regular resolution is required to certify that Erd\H{o}s-R\'{e}nyi graphs do not contain cliques of size $K = o(N^{1/4})$. In the other direction, \cite{frieze2008new, brubaker2009random} provide an algorithm for finding cliques of size $K = O(N^{1/r})$ given the ability to compute the 2-norm of a certain random $r$-parity tensor\footnote{Computing this 2-norm was shown to be hard in the worst case in \cite{hillar2013most}, although its average-case complexity remains unknown.}. Because $\pr{pc}$ is easier for larger $K$, the assumption that $\pr{pc}$ is hard for clique sizes $K=o(N^\alpha)$ is weaker for smaller $\alpha$. 

Throughout this paper, we say that a computational lower bound is \emph{tight} or \emph{optimal} if it matches what is achievable by the best known polynomial-time algorithms. All of the previous reductions from $\pr{pc}$ to sparse PCA in \cite{berthet2013optimal, berthet2013complexity, wang2016statistical, gao2017sparse, brennan2018reducibility} either fail to show tight computational lower bounds matching existing algorithms or show lower bounds for formulations of sparse PCA other than the spiked covariance model. In particular, a full characterization of the computational feasibility in the spiked covariance model has remained open. These lower bounds also all quickly degrade with the exponent in the $\pr{pc}$ conjecture. Specificallly, when only given the $\pr{pc}$ conjecture up to $K = o(N^\alpha)$ where $\alpha < 1/2$, there is no sparsity level $k$ at which these lower bounds remain tight. If $\alpha \le 1/3$ these reductions fail to even show the existence of a statistical-computational tradeoff at any sparsity $k$. This phenomenon is unsurprising -- as the assumed lower bound for $\pr{pc}$ is loosened, it is intuitive that the resulting lower bound for sparse PCA always weakens as well.

The main results of this work are to resolve the the feasibility diagram for a formulation of sparse PCA in the spiked covariance model and give an unintuitively strong reduction from $\pr{pc}$. More precisely, our results are as follows:
\begin{itemize}
\item We show the surprising result that weaker forms of the $\pr{pc}$ conjecture up to clique size $K = o(N^\alpha)$ for any given $\alpha \in (0, 1/2]$ imply tight computational lower bounds for sparse PCA at sparsities $k = o(n^{\alpha/3})$.
\item We give a reduction from $\pr{pc}$ that yields the first full characterization of the computational barrier in the spiked covariance model, providing tight lower bounds at all sparsities $k$. This completes the feasibility picture for the spiked covariance model depicted in Figure~\ref{fig:spcaphasediagram} and partially resolves a question raised in \cite{brennan2018reducibility}.
\end{itemize}
This first result has several strong implications for the relationship between the computational complexity of sparse PCA and $\pr{pc}$. Even a mild improvement in the signal strength needed by the best known polynomial-time sparse PCA algorithms, to below $\theta = \tilde{\Theta}(\sqrt{k^2/n})$, would imply that the hardness threshold for $\pr{pc}$ is \emph{subpolynomial}, rather than on the order $N^{1/2}$ as is widely conjectured. Whether or not the hardness threshold for planted clique is in fact at $N^{1/2}$ is irrelevant to the statistical-computational gap for sparse PCA in the practically relevant highly sparse regime. This result is also the first instance of a suboptimal hardness assumption implying optimal lower bounds for another problem in unsupervised learning. Before stating our results in Section~\ref{sec:results}, we review prior reductions to sparse PCA, state the resulting bounds and discuss why there is a fundamental limitation to the pre-existing techniques. We also give a high-level overview of how our reductions overcome these barriers.

\subsection{Prior Reductions and Overcoming the Limits of Reducing Directly to Samples}

As in most previous papers on sparse PCA, we primarily focus on the more sparse regime in which $k, n$ and $d$ satisfy the conditions $k = \tilde{O}(\sqrt{n})$ and $k = \tilde{o}(\sqrt{d})$. The following is a brief overview of previous reductions to sparse PCA in the literature.
\begin{itemize}
\item \textbf{Berthet-Rigollet (2013b)}: \cite{berthet2013optimal} gives a reduction from planted clique to show hardness for solving sparse PCA with semidefinite programs that can be computed in polynomial time.
\item \textbf{Berthet-Rigollet (2013a)}: \cite{berthet2013complexity} gives a reduction from planted clique to a composite vs. composite hypothesis testing formulation of sparse PCA, showing lower bounds for polynomial time algorithms that solve sparse PCA uniformly over all noise distributions satisfying a $d$-dimensional sub-Gaussian concentration condition. In this formulation, \cite{berthet2013complexity} shows tight computational lower bounds up to the threshold of $\theta = \tilde{o}(\sqrt{k^2/n})$ when $k = \tilde{o}(\sqrt{d})$ and $k = \tilde{O}(\sqrt{n})$, matching the best known polynomial time algorithms at these sparsities.
\item \textbf{Wang-Berthet-Samworth (2016)}: \cite{wang2016statistical} shows computational lower bounds for the estimation task in a similar distributional-robust sub-Gaussian variant of sparse PCA as in \cite{berthet2013complexity}, again up to the conjectured threshold of $\theta = \tilde{o}(\sqrt{k^2/n})$.
\item \textbf{Gao-Ma-Zhou (2017)}: \cite{gao2017sparse} shows the first computational lower bounds for sparse PCA in its canonical generative model, the spiked covariance model, in which the noise distribution is exactly a multivariate Gaussian. The reduction is to a simple vs. composite hypothesis testing formulation of sparse PCA and shows lower bounds up to the suboptimal threshold of $\theta = \tilde{o}(\sqrt{k^2/n})$ when $k = \tilde{o}(\sqrt{d})$ and $k = \tilde{O}(\sqrt{n})$. This lower bound is only tight when $\theta = \tilde{\Theta}(1)$ and $k = \tilde{\Theta}(\sqrt{n})$.
\item \textbf{Brennan-Bresler-Huleihel (2018)}: \cite{brennan2018reducibility} provides an alternative reduction based on random rotations to strengthen the lower bounds from \cite{gao2017sparse} to hold for a simple vs. simple hypothesis testing formulation of the spiked covariance model up to the suboptimal threshold of $\theta = \tilde{o}(\sqrt{k^2/n})$. It also shows the first tight lower bounds in the high sparsity regime $k = \omega(\sqrt{n})$ up to the threshold of $\theta = \tilde{o}(1)$ when $d = O(n)$, which are matched by polynomial-time algorithms.
\end{itemize}
Assuming the hardness of planted clique with $N$ vertices and clique size $K$, \cite{berthet2013optimal, berthet2013complexity} show lower bounds for sparse PCA with sparsities $k \le K$, samples $n= \Theta(N)$ and dimension $d = \Theta(N)$, and $\theta$ satisfying (1) below. Given this same planted clique hardness assumption, \cite{gao2017sparse} and \cite{brennan2018reducibility} show lower bounds for sparse PCA with sparsity $k = K$,  samples $n= \Theta(N)$ and dimension $d = \Theta(N)$, and $\theta$ satisfying (2), where
$$\text{(1)} \qquad \theta = \tilde{o}\left( \frac{k \cdot K}{n} \right) \quad \quad \qquad \text{(2)} \qquad \theta = \tilde{o}\left( \frac{k^2}{n} \right)\,.$$
The bound (1), upon plugging in $K = o(N^{1/2})$, implies hardness for any $\theta=\tilde o(\sqrt{k^2/n})$. When only given the planted clique conjecture up to $K = o(N^\alpha)$ for an $\alpha < 1/2$, there is no sparsity level $k$ at which either of these lower bounds remain tight to the conjectured boundary of $\theta = \tilde{o}(\sqrt{k^2/n})$. If $\alpha \le 1/3$, then $k \cdot K/n = o(\sqrt{k/n})$ and these reductions fail to show computational lower bounds beyond the statistical limit of $\theta = \Theta(\sqrt{k \log d/n})$. A more detailed discussion of previous reductions from planted clique to sparse PCA is in Appendix \ref{sec:appendixprevred}. 

All of these prior reductions convert the rows of the $\pr{pc}$ adjacency matrix directly to samples from sparse PCA. This turns out to be lossy for a simple reason: while the planted clique itself is sparse both in its column and row support, sparse PCA data is independent across samples and thus structured only along one axis when arranged into an $n\times d$ matrix. Thus, directly converting a $\pr{pc}$ instance to samples from sparse PCA necessarily eliminates the structure across the rows and destroys some of the underlying structure.

The intuition behind our main reduction is that an instance of $\pr{pc}$ naturally corresponds to a shifted and rescaled \emph{empirical covariance matrix} of samples from the spiked covariance model, rather than the samples themselves. This empirical covariance matrix, although it has dependent entries and is distributionally very different from planted clique, has sparse structure along both of its axes. We elaborate on this intuition in Appendix \ref{sec:appendixprevred}. While previous reductions have all produced samples directly, we map an instance of $\pr{pc}$ to this empirical covariance matrix and then map to samples using an isotropic property of the Wishart distribution. This strategy yields an optimal relationship between $(n, k, d, \theta)$ that no longer degrades with $\alpha$ in the $K = o(N^{\alpha})$ assumption in the $\pr{pc}$ conjecture. Executing this strategy proves to be a delicate task because of dependence among the entries of the empirical covariance matrix, requiring a number of new average-case reduction primitives. Specifically, to overcome this dependence, we require several decomposition and comparison results for random matrices proven in Section \ref{sec:Wishart} and crucially use a recent result in \cite{bubeck2016testing} comparing $\pr{goe}$ to Wishart matrices. Our techniques are outlined in more detail in Section \ref{sec:techniques}.

\subsection{Summary of Results and Structure of the Paper}
\label{sec:results}

\paragraph{Strong Hardness from Weak Assumptions.} We now state our main result, which essentially shows that whether or not there are better polynomial time algorithms for $\pr{pc}$ is irrelevant to the statistical-computational gap for sparse PCA in the highly sparse regime of low $k$, which consists of the ``interpretable'' levels of sparsity that often arise in applications. The following is an informal statement of our main theorem, which is formally stated in Corollary \ref{thm:lowerbounds}. Instead of only showing lower bounds from $\pr{pc}$, we reduce from the more general the \emph{planted dense subgraph} detection problem which entails testing for the presence of a planted $\mG(K, p)$ in a $\mG(N, q)$ where $0 < q < p \le 1$. The problem $\pr{pc}$ is recovered by setting $p = 1$ and $q = 1/2$.

\begin{theorem}[Informal Main Theorem]
Fix some $\alpha \in (0, 1/2]$ and $0 < q < p \le 1$. If there is no randomized polynomial-time algorithm solving the planted dense subgraph detection problem with densities $p$ and $q$ on graphs of size $N$ with any dense subgraph size $K = o(N^\alpha)$, then there is no randomized polynomial time algorithm solving detection in the spiked covariance model for all $(n, k, d, \theta)$ with $\theta = \tilde{O}(\sqrt{k^2/n})$, $k = O(d^{\alpha})$ and $k = O(n^{\alpha/3})$.
\end{theorem}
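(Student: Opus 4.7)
The plan is to exploit the observation, highlighted in the introduction, that a planted dense subgraph instance corresponds naturally to a shifted and rescaled \emph{empirical covariance matrix} of sparse PCA samples rather than to the samples themselves. Accordingly, I would split the reduction into two phases: (i)~map the $N \times N$ adjacency matrix of a $\pr{pds}$ instance to a $k \times k$ symmetric matrix $M$ that is close in total variation to $\mathrm{Wishart}(I_k, n)$ under $H_0$ and to $\mathrm{Wishart}(I_k + \theta u u^\top, n)$ under $H_1$ for some unit vector $u \in \bR^k$; (ii)~realize $M$ as the exact Gram matrix $X_S^\top X_S$ of an $n \times k$ data block $X_S$ by sampling from the conditional distribution of $X_S$ given $X_S^\top X_S = M$. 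Appending an independent $n \times (d - k)$ Gaussian block as the remaining columns and randomly permuting the $d$ columns then yields an $n \times d$ sparse PCA instance with a uniformly random $k$-sparse spike direction.

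For phase~(i), I would proceed in three sub-steps. First, I would extract a $K \times K$ principal submatrix $A_S$ whose entries are iid $\mathrm{Bern}(q)$ under $H_0$ and carry a planted $\mathrm{Bern}(p)$ block on a marginally uniform $k$-subset under $H_1$, using the cloning and random-permutation primitives from prior work. Second, I would Gaussianize by forming $M_{ij} = c(A_{ij} - q) + \sqrt{n - c^2 q(1 - q)}\, Z_{ij}$ with iid $Z_{ij} \sim \mN(0,1)$ and a carefully chosen scale $c$, so that each $M_{ij}$ is close in total variation to $\mN(0, n)$ entrywise and acquires a mean shift of order $c(p - q)$ on the planted block. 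Third, I would invoke the Bubeck--Ding--Eldan--R\'acz comparison between $\pr{goe}(k)$ and $\mathrm{Wishart}(I_k, n)$, pushed through the rank-one spike by a coupling argument, to conclude that $M$ is close in total variation to the desired Wishart law with signal $\theta \asymp c(p-q) \cdot k / n$, provided $k^3 = o(n)$. The $\pr{pds}$ freedom in $(p, q)$ is used here to tune $c(p-q)$ so that the resulting $\theta$ sits at the conjectured algorithmic boundary $\tilde\Theta(\sqrt{k^2/n})$ while keeping the per-entry Gaussianization TV slack small enough that the aggregate over the $K^2$ entries is $o(1)$.

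For phase~(ii), I would use the following distributional fact: for \emph{any} positive semidefinite $\Sigma$, if $Y \sim \mN(0, \Sigma)^{\otimes n}$ then the conditional law of $Y$ given $Y^\top Y$ is $U (Y^\top Y)^{1/2}$ with $U$ Haar-distributed on the $n \times k$ Stiefel manifold. This holds because the Gaussian density $\exp(-\tfrac{1}{2} \mathrm{tr}(\Sigma^{-1} Y^\top Y))$ depends on $Y$ only through $Y^\top Y$, so conditional on this sufficient statistic the law of $Y$ is uniform on its fiber. The ``isotropic sampling'' map $W \mapsto U W^{1/2}$ therefore converts any Wishart-distributed input, whether spiked or not, into the corresponding exact sparse PCA sample law, incurring no additional total variation cost beyond that accumulated in phase~(i).

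The main obstacle is the total variation bookkeeping in phase~(i). Unlike prior reductions that produce sparse PCA samples directly, here the entries of $M$ must approximate a Wishart matrix with its strongly dependent entries, starting from the independent Bernoulli entries of an adjacency matrix. The Gaussianization TV slack per entry must compound favorably over the $K^2$ entries, and the Bubeck--Ding--Eldan--R\'acz estimate must be extended from the unspiked to the spiked case while tracking the size of the rank-one perturbation. The theorem's parameter regime ultimately emerges from simultaneously enforcing $k^3 = o(n)$ for the GOE--Wishart comparison, $k = o(N^\alpha)$ with $N \asymp n^{1/3}$ for the $\pr{pds}$ hardness assumption (yielding both $k = o(d^\alpha)$ at $d = \Theta(N)$ and $k = o(n^{\alpha/3})$), and $p - q \le 1$ without pushing the planted dense subgraph detection problem below its information-theoretic threshold.
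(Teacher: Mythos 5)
Your overall architecture matches the paper's at a high level: map the graph to (a TV-approximation of) the empirical covariance matrix and then regenerate samples by the isotropy trick, and your phase~(ii) is exactly the paper's inverse Wishart sampling step (the fact that $Y$ given $Y^\top Y$ is Haar on the fiber, Proposition 7.4 of Eaton, plus padding with independent Gaussian coordinates and a random embedding). The genuine gaps are all in phase~(i). First, your opening sub-step --- extracting a $K \times K$ (or $k \times k$) principal submatrix that is iid $\mathrm{Bern}(q)$ under $H_0$ and contains the planted block on a marginally uniform $k$-subset under $H_1$ --- is not achievable: the location of the planted set is unknown, and a principal submatrix of sublinear size captures only an $O(K^2/N) = o(1)$ fraction of the planted vertices (recovering it by subsampling forces $k \approx K K'/N$ and collapses you back to the lossy $\theta \asymp Kk/n$ dependence the theorem is meant to beat). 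The reduction must therefore retain all $\Theta(N)$ vertices, so the $\pr{goe}$--Wishart comparison of Bubeck et al.\ has to be applied in dimension $m = \Theta(N)$, and the operative constraint is $n = \omega(N^3)$, not your ``$k^3 = o(n)$''; this is precisely where $k = O(n^{\alpha/3})$ and $k = O(d^\alpha)$ come from in the paper. Second, you never address the diagonal: the adjacency matrix carries no diagonal signal, but the spiked Wishart has diagonal mean $n(1+\theta/k)$ on the support, a per-entry shift of order $1/\sqrt{\log N}$ after normalization; since the support is information-theoretically recoverable from the off-diagonal signal at $\theta \asymp \sqrt{k^2/(n\log N)}$, an adversary can locate $S$ and test the $k$ support diagonals, so omitting them drives the TV to $1-o(1)$ once $k \gg \mathrm{polylog}(N)$. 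The paper needs the diagonal-planting-and-embedding subroutine $\textsc{To-Bernoulli-Submatrix}$ of \cite{brennan2019universality} exactly for this.

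Two further steps you gloss over would also fail as written. Your additive Gaussianization $M_{ij} = c(A_{ij}-q) + \sqrt{n - c^2 q(1-q)}\,Z_{ij}$ matches only two moments, so its per-entry TV distance to the Gaussian target is of order $(c/\sqrt{n})^3 \asymp (\log N)^{-3/2}$ at the required signal scale $c(p-q) \asymp \sqrt{n/\log N}$; summed over the $\Theta(N^2)$ entries this diverges, which is why the paper uses the rejection-kernel map $\textsc{rk}_G$ (per-entry error $O(N^{-3})$) rather than additive smoothing. Finally, the claim that the $\pr{goe}$--Wishart comparison can be ``pushed through the rank-one spike by a coupling argument'' is the technical heart of the matter and is left unproved: conjugating by $\Sigma^{1/2}$ shows the spiked Wishart is close to $n\Sigma + \Sigma^{1/2}(\sqrt{n}\,\pr{goe})\Sigma^{1/2}$, whose fluctuations are \emph{not} those of a plain $\pr{goe}$ plus a deterministic mean shift --- they carry cross-correlations along the spike rows and columns and a random rank-one component. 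The paper handles this via Lemmas \ref{lem:plantedwishart} and \ref{lem:indrep}, cloning the instance three times and generating, besides the mean (biclustering) matrix with a random signal $\sqrt{n/2}+g$, two cross-term matrices via $\chi^2$ random rotations. Your proposal neither produces these components nor argues that they are negligible in the relevant regime, so as written phase~(i) does not deliver a matrix TV-close to the spiked Wishart law.
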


Formulating our theorems in terms of planted dense subgraph means that they subsume statements based on $\pr{pc}$ hardness and have additional consequences such as for quasipolynomial time algorithms for sparse PCA, which are discussed in Section \ref{sec:conclusion}.
The reduction proving our main theorem is sketched in Section~\ref{sec:techniques} and formally described in Section~\ref{sec:wishartreduction}. In Section \ref{sec:problems}, we formally introduce the models of sparse PCA we consider and the known statistical limits and polynomial-time algorithms for these models. In Sections \ref{sec:avgcasereductions} and \ref{sec:mappingsubmatrix}, we introduce average-case reductions in total variation and a number of crucial subroutines that we will use in our reductions. In Sections \ref{sec:Wishart} and \ref{sec:proofmain}, we prove our main theorem.

\begin{figure*}[t!]
\centering
\begin{tikzpicture}[scale=0.4]
\tikzstyle{every node}=[font=\footnotesize]
\def\xmin{0}
\def\xmax{16}
\def\ymin{-1}
\def\ymax{11}

\draw[->] (\xmin,\ymin) -- (\xmax,\ymin) node[right] {$\beta$};
\draw[->] (\xmin,\ymin) -- (\xmin,\ymax) node[above] {$\alpha$};

\node at (15, -1) [below] {$\frac{1}{8}$};
\node at (13.33, -1) [below] {$\frac{1}{9}$};
\node at (0, 0) [left] {$\frac{3}{8}$};
\node at (0, 10) [left] {$\frac{5}{8}$};
\node at (0, 5) [left] {$\frac{1}{2}$};

\filldraw[fill=black, draw=black] (13.33, 0.56) -- (13.33, 2.78) -- (15, 2.5) -- (15, 0) -- (13.33, 0.56);
\filldraw[fill=cyan, draw=blue] (0, 5) -- (13.33, 0.56) -- (13.33, 2.78) -- (0, 5);
\filldraw[fill=gray!25, draw=gray] (0, 5) -- (15, 0) -- (15, -1) -- (0, -1) -- (0, 5);
\filldraw[fill=green!25, draw=green] (0, 5) -- (15, 2.5) -- (15, 10) -- (0, 10) -- (0, 5);
\draw[->] (10, 1.67) -- (10, 3.33);
\draw[->] (10, 3.33) -- (10, 1.67);

\node at (5.6, 3.3) [rotate=-21, anchor=north]{$\theta \asymp \sqrt{\frac{k^2}{n}}$};
\node at (11, 2.3) {tight};
\node at (7.5, 9) {information-theoretically impossible};
\node at (10, 5) {PC-hard given};
\node at (2.5, 6) {interpretable};
\node at (2.5, 5.25) {sparsities};
\node at (10, 4.25) {$K = o(N^{1/3})$};
\node at (7.5, -0.25) {polynomial-time algorithms};
\end{tikzpicture}
\begin{tikzpicture}[scale=0.4]
\tikzstyle{every node}=[font=\footnotesize]
\def\xmin{0}
\def\xmax{16}
\def\ymin{-1}
\def\ymax{11}

\draw[->] (\xmin,\ymin) -- (\xmax,\ymin) node[right] {$\beta$};
\draw[->] (\xmin,\ymin) -- (\xmin,\ymax) node[above] {$\alpha$};

\node at (15, -1) [below] {$1$};
\node at (7.5, -1) [below] {$\frac{1}{2}$};
\node at (0, 0) [left] {$0$};
\node at (0, 10) [left] {$1$};
\node at (0, 5) [left] {$\frac{1}{2}$};

\filldraw[fill=cyan, draw=blue] (0, 5) -- (15, 0) -- (7.5, 0) -- (0, 5);
\filldraw[fill=gray!25, draw=gray] (0, 5) -- (7.5, 0) -- (15, 0) -- (15, -1) -- (0, -1) -- (0, 5);
\filldraw[fill=green!25, draw=green] (0, 5) -- (15, 0) -- (15, 10) -- (0, 10) -- (0, 5);

\node at (1.7, 1.53)[rotate=-37, anchor=south] {$\theta \asymp \sqrt{\frac{k^2}{n}}$};
\node at (7.5, 2.5)[rotate=-20, anchor=south]  {$\theta \asymp \sqrt{\frac{k \log d}{n}}$};
\node at (11.25, -0.5) {$\theta = \tilde{\Theta}(1)$};
\node at (7.5, 9) {information-theoretically impossible};
\node at (3, 0.5) {polynomial-time};
\node at (3, -0.25) {algorithms};
\node at (8, 1.25) {PC-hard};
\end{tikzpicture}

\caption{\small Computational phase diagrams showing efficiently solvable, information-theoretically impossible, and computationally hard parameter regimes for sparse PCA with $k = \tilde{\Theta}(n^\beta)$ and $\theta = \tilde{\Theta}(n^{-\alpha})$. The left shows an example of the regime of tight lower bounds from the main theorem assuming the $\pr{pc}$ conjecture up to $K = o(N^{1/3})$. The right shows the tight lower bounds from the second main theorem.}
\label{fig:spcaphasediagram}
\end{figure*}
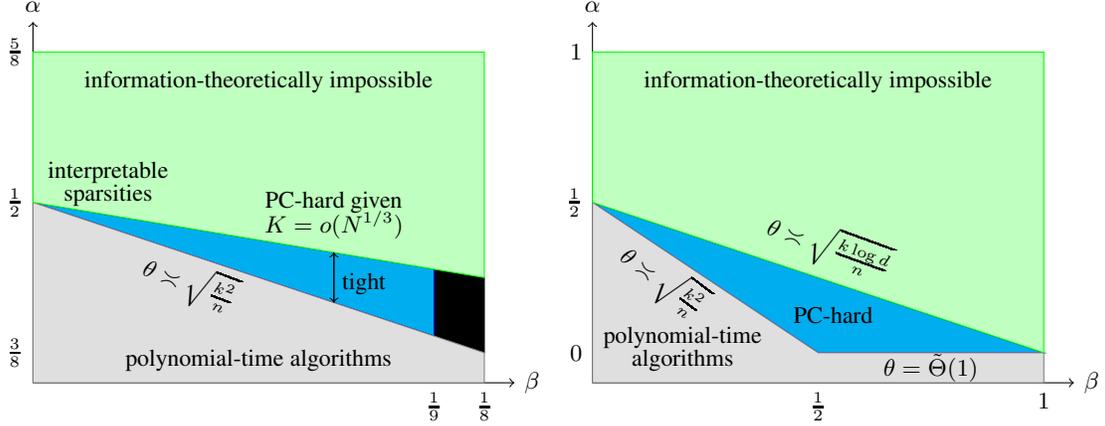

\paragraph{Completing the Computational Phase Diagram for the Spiked Covariance Model.} In Section \ref{sec:randomrotations}, we give a reduction to the spiked covariance model combining a variant of the random rotations reduction of \cite{brennan2018reducibility} with an internal subsampling reduction to yield tight lower bounds for all sparsities $k = o(\sqrt{n})$ in a variant of the spiked covariance model based on the planted clique conjecture for cliques of size $K = o(N^{1/2})$. Together with the lower bounds when $k = \omega(\sqrt{n})$ in Theorem 8.5 of \cite{brennan2018reducibility}, this yields the first complete set of computational lower bounds in a simple vs. composite formulation of the spiked covariance model of sparse PCA that are tight at all sparsities $k$, partially resolving a question in \cite{brennan2018reducibility}, which posed this problem for the simple vs. simple formulation. The $\textsc{fcspca}$ simple vs. composite formulation of the spiked covariance model for which we characterize feasibility, as well as some other formulations of varying flexibility, are described in Section ~\ref{sec:problems}. The $\textsc{fcspca}$ formulation allows a slight variation in the size of the support of $v$ and in the signal strength parameter, and is quite close to the original spiked covariance model. 

The following is an informal statement of our second main theorem, which combines Corollary \ref{thm:chirandomrot} with Theorem 8.5 of \cite{brennan2018reducibility}. The associated diagram is given in right of Figure~\ref{fig:spcaphasediagram}.

\begin{theorem}[Informal Second Main Theorem]
Fix some $0 < q < p \le 1$. If there is no randomized polynomial time algorithm solving the planted dense subgraph detection problem with densities $p$ and $q$ on graphs of size $N$ with dense subgraph size $K = o(\sqrt{N})$, then there is no randomized polynomial time algorithm solving detection in the $\textsc{fcspca}$ simple vs. composite formulation of the spiked covariance model for all $(n, k, d, \theta)$ satisfying either:
\begin{itemize}
\item $\theta = \tilde{O}(\sqrt{k^2/n})$ for sparsities $k = o(\sqrt{d})$ and $k = o(\sqrt{n})$
\item $\theta = \tilde{o}(1)$ for sparsities $k = \tilde{\omega}(\sqrt{n})$ and $d = O(n)$
\end{itemize}
\end{theorem}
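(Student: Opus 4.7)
This theorem combines Corollary \ref{thm:chirandomrot} with Theorem 8.5 of \cite{brennan2018reducibility}; since the second bullet (the high-sparsity regime $k = \tilde\omega(\sqrt n), d = O(n), \theta = \tilde o(1)$) is exactly Theorem 8.5 of \cite{brennan2018reducibility}, my plan targets only the first bullet: establishing $\textsc{fcspca}$ hardness at every $(n, k, d, \theta)$ with $k = o(\sqrt n)$, $k = o(\sqrt d)$, and $\theta = \tilde O(\sqrt{k^2/n})$, assuming $\pr{pdsg}(N,K)$ is hard for all $K = o(\sqrt N)$. The strategy is to combine a variant of the random rotations reduction of \cite{brennan2018reducibility} with a new \emph{internal subsampling} step that converts a single ``corner'' hard instance into the entire tight curve $\theta \asymp k/\sqrt n$ for $k < \sqrt n$.

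I would first set up the base reduction at the critical scaling $K = \tilde\Theta(\sqrt N)$, where the two candidate boundaries $\sqrt{k^2/n}$ and $k^2/n$ differ only by polylogarithmic factors. Applying a variant of the random rotations reduction from \cite{brennan2018reducibility}, together with the average-case reduction primitives of Sections \ref{sec:avgcasereductions}--\ref{sec:mappingsubmatrix} to control total variation, produces an $\textsc{fcspca}$ instance with $n \asymp N$, $d \asymp N$, sparsity $k = K$, and signal $\theta_0 \asymp 1$ that is tight at this single point of the phase diagram. This is essentially the high-sparsity corner of the picture, and the reduction here closely mimics the one underlying Theorem 8.5.

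Second, I would apply the internal subsampling step. The key calculation is that restricting the planted $K$-sparse unit vector $v$ to a uniformly random coordinate subset $S \subseteq [d]$ of size $s$ produces a vector of squared norm $|\mathrm{supp}(v) \cap S|/K$; after renormalization the effective signal scales as $\theta_0 \cdot |\mathrm{supp}(v) \cap S|/K$. Choosing $s \asymp k\sqrt n$ makes the expected intersection equal to $k$, so the subsampled instance has sparsity concentrated around $k$, dimension $\asymp k\sqrt n$, and signal $\theta \asymp k/\sqrt n$, exactly on the tight boundary. To reach arbitrary target dimensions $d$ satisfying $k = o(\sqrt d)$, I would then either pad with independent standard Gaussian coordinates (to increase $d$) or, exploiting that the reduction's internal randomness determines $\mathrm{supp}(v)$ while the downstream algorithm sees only the samples, deterministically restrict to a coordinate set of the target size containing the support (to decrease $d$); surplus or deficit in $n$ is handled by discarding samples or by a second application of $\pr{pdsg}$ hardness at the appropriate $(N, K)$.

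The main obstacle is that $|\mathrm{supp}(v) \cap S|$ is hypergeometric with standard deviation $\sqrt k$, so the subsampled instance has neither a deterministic sparsity nor a deterministic signal strength. This is exactly where the \emph{composite} structure of $\textsc{fcspca}$ is essential: its alternative family tolerates sparsities and signals varying within $\tilde\Theta$-factor windows, and hypergeometric concentration implies that the output distribution lies --- up to $o(1)$ total variation --- inside the $\textsc{fcspca}$ composite alternative. The bulk of the technical work is therefore TV bookkeeping: verifying that the random rotations variant yields $o(1)$ TV error at the base corner, that the coordinate subsampling and dimension adjustment compose cleanly with additional $o(1)$ slippage, and that the resulting mixture over supports and sparsities is a legitimate element of the $\textsc{fcspca}$ composite alternative.
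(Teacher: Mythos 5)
Your proposal follows essentially the same route as the paper: its $\textsc{Subsampling-Random-Rotations}$ reduction (Section \ref{sec:randomrotations}) likewise composes $\textsc{To-Bernoulli-Submatrix}$ with $\chi^2\textsc{-Random-Rotation}$ to obtain a tight instance at sparsity $K = \tilde{\Theta}(\sqrt{n})$, then thins the spike's coordinates (each kept with probability $k/K$, padded up to dimension $d$) using exactly your observation that the effective signal scales by $|\mathrm{supp}(v) \cap S|/K$, with the composite $\textsc{cbspca}/\textsc{fcspca}$ alternative absorbing the binomial/hypergeometric fluctuation, and the second bullet imported from Theorem 8.5 of \cite{brennan2018reducibility} just as you do. The only substantive differences are cosmetic: the paper merges the two stages into one reduction by tuning $\tau$ so that every target $\theta \le c \cdot Kk/(n\sqrt{\log N})$ is hit directly (your version needs the same easy tweak---tune the base signal below $\tilde{\Theta}(1)$ or inject noise---to cover all $\theta = \tilde{O}(\sqrt{k^2/n})$ rather than only the boundary value), and it subsamples by independent Bernoulli thinning with $d \ge n$ rather than a fixed-size coordinate subset.
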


In Section \ref{sec:internal}, we also show that a simple internal cloning reduction within sparse PCA shows that hardness at $(n, k, d, \theta)$ implies hardness at $(n, tk, td, \theta)$ for any $t \ge 1$. This shows that the tight lower bounds derived from the planted clique conjecture up to $K = o(N^\alpha)$ in our main theorem can also be extended to lower bounds with $k = \omega(n^{\alpha/3})$ that, although not tight, still show nontrivial statistical-computational gaps. In Section \ref{sec:conclusion}, we discuss simple extensions and implications of our results. We also state several problems left open in this work.

\subsection{Overview of Techniques}
\label{sec:techniques}

To show our main result, it suffices to give a reduction $\mathcal{A}$ mapping an instance of $\pr{pc}$ with $K = o(N^\alpha)$ to an instance of the spiked covariance model with parameters $(n, k, d, \theta)$ satisfying $\theta = \tilde{\Theta}(\sqrt{k^2/n})$ under both $H_0$ and $H_1$. Concretely, the desired $\mathcal{A}$ should simultaneously map $G \sim \mG(N, 1/2)$ to $X = (X_1, X_2, \dots, X_n) \sim \mN(0, I_d)^{\otimes n}$ and $G \sim \mG(N, K, 1/2)$ to $X \sim \mN(0, I_d + \theta vv^\top)^{\otimes n}$, approximately under total variation, where $v \in \mathbb{R}^d$ is a $k$-sparse unit vector with nonzero entries $1/\sqrt{k}$. As mentioned above, our insight is that the optimal dependence $\theta = \tilde{\Theta}(\sqrt{k^2/n})$ can arise from first mapping $G$ approximately to the scaled empirical covariance matrix\footnote{For technical reasons we actually map to $\Sigma_e = \sum_{i = 1}^n Y_i Y_i^\top$ where the $Y_i$ are restrictions of the $X_i$ to $m$ coordinates containing all of the sparse principal component.} $\Sigma_e = \sum_{i = 1}^n X_i X_i^\top$, and then mapping $\Sigma_e$ to $X$. Note that if we represent edge versus no edge in planted clique by $\pm 1$, then the \emph{expectations} of the nondiagonal entries of the planted clique adjacency matrix and the empirical covariance matrix closely resemble are equal up to a rescaling under both $H_0$ and $H_1$. As described next, one of the main challenges is capturing the dependencies among the random entries. 

\paragraph{Empirical covariance matrix.}  Under $H_0$, it holds that $\Sigma_e$ is distributed as a $d \times d$ isotropic Wishart matrix with $n$ degrees of freedom. A recent result in random matrix theory of \cite{bubeck2016testing} implies that if $n/d^3 \to \infty$, then $\Sigma_e$ converges in total variation to $n \cdot I_d + \sqrt{n} \cdot \pr{goe}(d)$, where $\pr{goe}(d)$ denotes the distribution of $\frac{1}{\sqrt{2}} (A + A^\top)$ and $A \sim \mN(0, 1)^{\otimes d \times d}$. We see that the empirical covariance matrix has approximately independent entries and is quite simple in this case. 

The distribution of $\Sigma_e$ is much more complicated under $H_1$, exhibiting dependencies among the entries whose structure depends on the location of the planted spike. A useful characterization of the distribution seems hopeless in general, but a simplification occurs in the regime $n/d^3 \to \infty$, whereby the entries become jointly Gaussian, although still dependent. In Section~\ref{sec:Wishart}, we show that the distribution of $\Sigma_e$ under $H_1$ converges in total variation to $n(I_d + \theta vv^\top) + \frac{1}{\sqrt{2}} (B + B^\top)$ where
$$B = \sqrt{n} \cdot \left( \mN(0, 1)^{\otimes d \times d} + \sqrt{\theta} \cdot v_S w_1^\top + \sqrt{\theta} \cdot w_2 v_S^\top + \theta g v_Sv_S^\top \right)$$
and $w_1, w_2 \sim \mN(0, I_d)$ and $g \sim \mN(0, 1)$ are independent. The independence between these terms implies that $\Sigma_e$ is close in total variation to the following independent sum involving three instances of simpler average-case problems:
$$n \cdot I_m + \sqrt{\frac{n}{6}} \left( M + M^\top + C_L + C_L^\top + C_R + C_R^\top \right)$$
where $M, C_L$ and $C_R$ are such that
\begin{align*}
H_0 : M \sim \mN(0, 1)^{\otimes d \times d} \quad &H_1: M \sim \theta \sqrt{3} \cdot \left(\sqrt{n/2} + g \right) vv^\top + \mN(0, 1)^{\otimes d \times d} \\
H_0 : C_L \sim \mN(0, 1)^{\otimes d \times d} \quad &H_1: C_L \sim \sqrt{3\theta} \cdot w_1 v^\top + \mN(0, 1)^{\otimes d \times d} \\
H_0 : C_R \sim \mN(0, 1)^{\otimes d \times d} \quad &H_1: C_R \sim \sqrt{3\theta} \cdot v w_2^\top + \mN(0, 1)^{\otimes d \times d}\,.
\end{align*}
Observe that $M$ is a variant of Gaussian biclustering with random signal strength parameter and $C_L$ and $C_R$ are weak instances of the spiked covariance model. Crucially, all of these matrices are expressed in terms of the spike direction $v$ in a way that makes it possible to map to them from an instance of planted clique without knowing the clique location. The reduction proceeds by cloning the planted clique instance to obtain multiple independent copies with slightly modified parameters, and then mapping to each of $M, C_L$ and $C_R$. To execute this step, we require a number of subroutines to overcome several distributional subtleties, as described next.

\paragraph{Missing diagonal entries.} The covariance matrix $I_d + \theta v v^\top$ describing the distribution of the spiked covariance model under $H_1$ has nontrivial signal on its diagonals, with $i$th diagonal entry $1+ \theta v_i^2$. However, the diagonal of the $\pr{pc}$ adjacency matrix is uniformly zero and thus does not contain any signal distinguishing clique vertices versus non-clique vertices. Without planting signal in the diagonal one cannot map to the desired distribution in total variation, and thus we are faced with the task of placing $1$'s in the correct entries without knowing the clique location. To produce the missing diagonal entries in the adjacency matrix of $G$, we embed this adjacency matrix as a principal minor in a larger matrix using a procedure from \cite{brennan2019universality}. 
 
\paragraph{Further distributional maps.} A number of further distributional maps are described in detail in Section~\ref{sec:mappingsubmatrix}. To remove the symmetry of the $\pr{pc}$ adjacency matrix and produce multiple independent copies, we provide two cloning procedures. To Gaussianize the entries of these matrices, we use the rejection kernel framework introduced in \cite{brennan2018reducibility}. We additionally introduce a reduction, termed $\chi^2$ random rotations, to produce the weak instances $C_L$ and $C_R$ of the spiked covariance model appearing above in the decomposition of the empirical covariance matrix $\Sigma_e$. Because these have significantly reduced signal, we can use a lossy procedure. Now the reduction has arrived at an object close in total variation to $\Sigma_e$ and the task is to retrieve the samples $X$ from $\Sigma_e$. 
 
 \paragraph{Inverse Wishart sampling.} In Section \ref{sec:Wishart}, it is shown that a consequence of the isotropy of independent multivariate Gaussians is that if $\Sigma_e' \in \mathbb{R}^{d \times d}$ is distributed as the scaled empirical covariance matrix of $\mN(0, \Sigma)^{\otimes n}$, then $(\Sigma_e')^{1/2} R \sim \mN(0, \Sigma)^{\otimes n}$, where $R$ consists of the top $d$ rows of a matrix sampled from the Haar measure on the orthogonal group $\mathcal{O}_n$ and $n \ge d$. Here $(\Sigma_e')^{1/2}$ is the positive semidefinite square-root of $\Sigma_e'$. What is noteworthy is that the samples produced have distribution described by the true covariance $\Sigma$, even though we started with the \emph{empirical} covariance matrix $\Sigma_e$. Therefore, generating $R$ and computing $\Sigma_e^{1/2} R$ completes the reduction, after applying a further post-processing step of appropriately padding the data matrix. 
 

\paragraph{Subsampling and cloning.} Finally in Section \ref{sec:randomrotations}, we give subsampling and cloning internal reductions within sparse PCA which, when combined with our $\chi^2$ random rotation reduction, imply the lower bounds in our second main theorem. By \emph{internal reduction} we mean that these procedures take as input and then output instances of sparse PCA, but with different parameters. This translates hardness from a given point in the feasibility diagram to other points and we apply it to a point where the $\chi^2$ random rotation reduction gives tight hardness, $k=\tilde \Theta(\sqrt{n})$ and $\theta = \tilde \Theta(1)$, to deduce tight hardness for smaller $k$ and $\theta = \tilde o(\sqrt{k^2/n})$. The subsampling internal reduction is based on the observation that projecting down to a smaller coordinate subspace of data from the spiked covariance model results in another instance of the spiked covariance model. The formal statement follows by carefully accounting for the proportion of the support of the spike that is retained.

\subsection{Related Work on Statistical-Computational Gaps}

This work is part of a growing body of literature giving rigorous evidence for computational-statistical gaps in high-dimensional inference problems. A more detailed survey of this area can be found in the introduction section of \cite{brennan2018reducibility}.

\paragraph{Computational Lower Bounds for Sparse PCA.} In addition to average-case reductions, the average-case complexity of sparse PCA has been examined from the perspective of several restricted models of computation including the statistical query model and sum of squares semidefinite (SOS) programming hierarchy. Statistical query lower bounds for sparse PCA and related problems were shown in \cite{lu2018edge} and \cite{diakonikolas2016statistical}. Degree four SOS lower bounds for sparse PCA in the spiked covariance model were shown in \cite{ma2015sum}. SOS lower bounds were also shown for the spiked Wigner model, a closely related model, in \cite{hopkins2017power}. In \cite{brennan2018reducibility}, average-case reductions from the planted clique conjecture were used to fully characterize the computational phase diagram of the spiked Wigner model. We remark that the entry-wise independence in the spiked Wigner model makes it an easier object to map to with average-case reductions. The worst-case complexity of sparse PCA as an approximation problem has also been considered in \cite{chan2016approximability}.

\paragraph{Average-Case Reductions.} One reason behind the recent focus on showing hardness in restricted models of computation is that average-case reductions are inherently delicate, creating obstacles to obtaining satisfying hardness results. As described in \cite{Barak2017}, these technical obstacles have left us with an unsatisfying theory of average-case hardness. Unlike reductions in worst-case complexity, average-case reductions between natural decision problems need to precisely map the distributions on instances to one another without destroying the underlying signal in polynomial-time. The delicate nature of this task has severely limited the development of techniques. For a survey of recent results in average-case complexity, we refer to \cite{bogdanov2006average}.

In addition to those showing lower bounds for sparse PCA, there have been a number of average-case reductions from planted clique in both the computer science and statistics literature. These include reductions to testing $k$-wise independence \cite{alon2007testing}, biclustering detection and recovery \cite{ma2015computational, cai2015computational, caiwu2018}, planted dense subgraph \cite{hajek2015computational}, RIP certification \cite{wang2016average, koiran2014hidden}, matrix completion \cite{chen2015incoherence}, minimum circuit size and minimum Kolmogorov time-bounded complexity \cite{hirahara2017average} and sparse PCA \cite{berthet2013optimal, berthet2013complexity, wang2016statistical, gao2017sparse}. A web of reductions from planted clique to planted independent set, planted dense subgraph, sparse PCA, the spiked Wigner model, gaussian biclustering and the subgraph stochastic block model was given in \cite{brennan2018reducibility}. The planted clique conjecture has also been used as a hardness assumption for average-case reductions in cryptography \cite{juels2000hiding, applebaum2010public}, as described in Sections 2.1 and 6 of \cite{Barak2017}.

A number of average-case reductions in the literature have started with different average-case assumptions than the planted clique conjecture. Variants of planted dense subgraph have been used to show hardness in a model of financial derivatives under asymmetric information \cite{arora2011computational}, link prediction \cite{baldin2018optimal}, finding dense common subgraphs \cite{charikar2018finding} and online local learning of the size of a label set \cite{awasthi2015label}. Hardness conjectures for random constraint satisfaction problems have been used to show hardness in improper learning complexity \cite{daniely2014average}, learning DNFs \cite{daniely2016complexity} and hardness of approximation \cite{feige2002relations}. There has also been a recent reduction from a hypergraph variant of the planted clique conjecture to tensor PCA \cite{zhang2017tensor}.

\subsection{Notation}

In this paper, we adopt the following notation. Let $\mL(X)$ denote the distribution law of a random variable $X$ and given two laws $\mL_1$ and $\mL_2$, let $\mL_1 + \mL_2$ denote $\mL(X + Y)$ where $X \sim \mL_1$ and $Y \sim \mL_2$ are independent. Given a distribution $\mathcal{P}$, let $\mathcal{P}^{\otimes n}$ denote the distribution of $(X_1, X_2, \dots, X_n)$ where the $X_i$ are i.i.d. according to $\mathcal{P}$. Similarly, let $\mathcal{P}^{\otimes m \times n}$ denote the distribution on $\mathbb{R}^{m \times n}$ with i.i.d. entries distributed as $\mathcal{P}$. Given a finite or measurable set $\mathcal{X}$, let $\text{Unif}[\mathcal{X}]$ denote the uniform distribution on $\mathcal{X}$. Let $\TV$ and $\chi^2$ denote total variation distance and $\chi^2$ divergence, respectively. Let $\mN(\mu, \Sigma)$ denote a multivariate normal random vector with mean $\mu \in \mathbb{R}^d$ and covariance matrix $\Sigma$, where $\Sigma$ is a $d \times d$ positive semidefinite matrix. Let $\chi^2(k)$ denote a $\chi^2$-distribution with $k$ degrees of freedom. Let $\mathcal{B}_0(k)$ denote the set of all unit vectors $v \in \mathbb{R}^d$ with $\| v \|_0 \le k$. Let $[n] = \{1, 2, \dots, n\}$, $\mG_n$ be the set of simple graphs on $n$ vertices and let the Orthogonal group on $\bR^{d \times d}$ be $\mO_d$. Let $\mathbf{1}_S$ denote the vector $v \in \mathbb{R}^n$ with $v_i = 1$ if $i \in S$ and $v_i = 0$ if $i \not \in S$ where $S \subseteq [n]$.

\section{Problem Formulations, Algorithms and Statistical Limits}
\label{sec:problems}

We consider detection problems $\mP$, wherein the algorithm is given a set of observations and tasked with distinguishing between two hypotheses:
\begin{itemize}
\item a \emph{uniform} hypothesis $H_0$, under which observations are generated from the natural noise distribution for the problem; and
\item a \emph{planted} hypothesis $H_1$, under which observations are generated from the same noise distribution but with a latent planted sparse structure.
\end{itemize}
In all of the detection problems we consider, $H_0$ is a simple hypothesis consisting of a single distribution and $H_1$ is either also simple or a composite hypothesis consisting of several distributions. Typically, $H_0$ consists of the canonical noise distribution and $H_1$ either consists of the set of observation models associated with each possible planted sparse structure or a mixture over them.  When $H_1$ is a composite hypothesis, it consists of a set of distributions $P_\theta$ where the parameter $\theta$ varies over a set $\Theta$. We will also denote the set $\{ P_\theta : \theta \in \Theta \}$, which is a singleton if $H_1$ is simple, of distributions as $H_1$ for notational convenience. As discussed in \cite{brennan2018reducibility} and \cite{hajek2015computational}, lower bounds for simple vs. simple hypothesis testing formulations are stronger and technically more difficult than for formulations involving composite hypotheses. The reductions in \cite{berthet2013complexity, wang2016statistical} are to composite vs. composite formulations of sparse PCA, the reduction in \cite{gao2017sparse} is to a simple vs. composite formulation of the spiked covariance model and, when $k^2 = \tilde{O}(n)$, the reduction in \cite{brennan2018reducibility} is to the simple vs. simple formulation of the spiked covariance model.

Given an observation $X$, an algorithm $\mathcal{A}(X) \in \{0, 1\}$ \emph{solves} the detection problem \emph{with nontrivial probability} if there is an $\epsilon > 0$ such that its Type I$+$II error satisfies that
$$\limsup_{n \to \infty} \left( \bP_{H_0}[\mathcal{A}(X) = 1] + \sup_{P \in H_1} \bP_{X \sim P}[\mathcal{A}(X) = 0] \right) \le 1 - \epsilon$$
where $n$ is the parameter indicating the size of $X$. We refer to this quantity as the asymptotic Type I$+$II error of $\mathcal{A}$ for the problem $\mP$. If the asymptotic Type I$+$II error of $\mathcal{A}$ is zero, then we say $\mathcal{A}$ \emph{solves} the detection problem $\mP$. A simple consequence of this definition is that if $\mathcal{A}$ achieves asymptotic Type I$+$II error $1 - \epsilon$ for a composite testing problem with hypotheses $H_0$ and $H_1$, then it also achieves this same error on the simple problem with hypotheses $H_0$ and $H_1' : X \sim P$ where $P$ is any mixture of the distributions in $H_1$. We remark that simple vs. simple formulations are the hypothesis testing problems that correspond to average-case decision problems $(L, \mathcal{D})$ as in Levin's theory of average-case complexity. In particular, showing a lower bound in Type I$+$II error for a simple vs. simple formulation shows that no polynomial time algorithm can solve $(L, \mathcal{D})$ with probability greater than $\frac{1}{2} + o(1)$ where $L$ is a language separating $H_0$ and $H_1$ and $\mathcal{D} = \frac{1}{2} \cdot \mL_{H_0} + \frac{1}{2} \cdot \mL_{H_1}$. For example, $L : \mG_n \to \{0, 1\}$ could correspond to $\pr{clique}(G) \ge k$ for planted clique. A survey of average-case complexity can be found in \cite{bogdanov2006average}.

We now formally define planted clique, planted dense subgraph and sparse PCA. Let $\mG(n, p)$ denote an Erd\H{o}s-R\'{e}nyi random graph with edge probability $p$. Let $\mG(n, k, p, q)$ denote the graph formed by sampling $\mG(n, q)$ and replacing the induced graph on a subset $S$ of size $k$ chosen uniformly at random with a sample from $\mG(k, p)$. The planted dense subgraph problem is defined as follows.

\begin{definition}[Planted Dense Subgraph] The detection problem $\pr{pds}(n, k, p, q)$ has hypotheses
$$H_0: G \sim \mG(n, q) \quad \text{and} \quad H_1 : G \sim \mG(n, k, p, q)$$
\end{definition}

The planted clique problem $\pr{pc}(n, k, p)$ is then $\pr{pds}(n, k, 1, p)$. There are many polynomial-time algorithms in the literature that find the planted clique in $\mG(n, k, p)$, including approximate message passing, semidefinite programming, nuclear norm minimization and several combinatorial approaches \cite{feige2000finding, mcsherry2001spectral, feige2010finding, ames2011nuclear, dekel2014finding, deshpande2015finding, chen2016statistical}. All of these algorithms require that $k = \Omega(\sqrt{n})$ if $p$ is constant, despite the fact that the largest clique in $G(n, p)$ contains $O(\log n)$ vertices with high probability. This leads to the following conjecture.

\begin{conjecture}[\pr{pc} Conjecture]
Fix some constant $p \in (0, 1)$. Suppose that $\{ \mathcal{A}_n \}$ is a sequence of randomized polynomial time algorithms $\mathcal{A}_n : \mG_n \to \{0, 1\}$ and $k_n$ is a sequence of positive integers satisfying that $\limsup_{n \to \infty} \log_n k_n < \frac{1}{2}$. Then if $G$ is an instance of $\pr{pc}(n, k, p)$, it holds that
$$\liminf_{n \to \infty} \left( \bP_{H_0}\left[\mathcal{A}_n(G) = 1\right] + \bP_{H_1}\left[\mathcal{A}_n(G) = 0\right] \right) \ge 1.$$ 
\end{conjecture}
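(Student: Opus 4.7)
The statement to be ``proved'' is the planted clique conjecture itself, which is not a theorem but a widely believed hardness hypothesis invoked as an \emph{assumption} underpinning the reductions in the main results of this paper. No proof is expected to follow. A genuine proof would establish an unconditional polynomial-time lower bound against all randomized algorithms for a natural average-case decision problem, which is well beyond current techniques and would have sweeping consequences in complexity theory; in particular, it would imply $\mathsf{P} \neq \mathsf{BPP}$ on average-case distributions and supersede every known $\pr{pc}$ algorithm.

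Since a direct proof is unavailable, the realistic plan I would pursue is to accumulate evidence by proving matching lower bounds in increasingly powerful restricted models of computation. First, I would try to extend the degree-$d$ sum of squares lower bounds of \cite{barak2016nearly} through the full pseudo-calibration framework so that every constant-degree SoS relaxation fails to refute $\mG(N, 1/2)$ up to the tight threshold $K = o(\sqrt{N})$, and then attempt to lift these bounds to broader classes of convex programming hierarchies. Second, I would strengthen the statistical query lower bounds of \cite{feldman2013statistical} to cover all polynomial-time algorithms that interact with the graph through low-precision statistical queries, ideally via a generic simulation argument from a richer class. Third, I would push the regular resolution lower bound of \cite{atserias2018clique} for $K = o(N^{1/4})$ up to $K = o(N^{1/2})$ and then to general resolution and bounded-depth Frege. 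Finally, I would search for tight reductions from other conjecturally hard average-case problems to $\pr{pc}$, so that refuting the conjecture would in turn refute a larger web of hardness hypotheses.

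The main obstacle, and the reason the conjecture has remained open since \cite{jerrum1992large} and \cite{kuvcera1995expected}, is that we currently have no technique for proving unconditional super-polynomial lower bounds against general randomized algorithms, even in the worst case. Reducing from a worst-case $\mathsf{NP}$-hard problem is also blocked by standard barriers showing that non-adaptive average-case to worst-case reductions for problems in $\mathsf{NP}$ would collapse the polynomial hierarchy, so the conjecture cannot be anchored to $\mathsf{P} \ne \mathsf{NP}$ by routine means. Any plausible path toward a proof would therefore require either a fundamental breakthrough in unconditional lower bounds or the identification of a structural property of $\mG(N, 1/2)$ that provably precludes every polynomial-time distinguisher of planted cliques of size $K = o(\sqrt{N})$ — neither of which is on the horizon. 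The most productive near-term direction is to continue sharpening the restricted-model evidence above, which is precisely the strategy pursued in the existing literature and which motivates the reduction-based approach taken in the present paper.
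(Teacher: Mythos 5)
Your assessment is correct: this statement is the planted clique conjecture itself, which the paper states as an unproven hardness assumption rather than a theorem, offering only the same kind of restricted-model evidence (Metropolis, low-degree SoS, statistical queries, regular resolution) that you cite. Your treatment matches the paper's, so there is nothing to prove and no gap to report.
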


All of our hardness results can be obtained from this conjecture with $p = 1/2$. The $\pr{pc}$ Conjecture implies a similar barrier at $k = o(\sqrt{n})$ for $\pr{pds}(n, k, p, q)$ where $p \neq q$ are constants, which we refer to as the $\pr{pds}$ conjecture. If $p > q$, then this follows from the hardness of $\pr{pc}(n, k, q/p)$ and the reduction keeping each edge in the graph with probability $p$. If $p < q$, then applying the same reduction to the complement graph yields this barrier. 
We now give several formulations of the sparse PCA detection problem in the spiked covariance model, several of which were considered in \cite{brennan2018reducibility}. The following is a general composite hypothesis testing formulation of the spiked covariance model that we will specialize afterwards, including as a simple hypothesis testing problem.

\begin{definition}[Spiked Covariance Model of Sparse PCA]
The sparse PCA detection problem in the spiked covariance model $\pr{spca}(n, k, d, \theta, A_\theta, B_k)$ has hypotheses
\begin{align*}
&H_0: X_1, X_2, \dots, X_n \sim \mN(0, I_d)^{\otimes n} \quad \text{and} \\
&H_1 : X_1, X_2, \dots, X_n \sim \mN\left(0, I_d + \theta' vv^\top\right)^{\otimes n} \text{ where } \theta' \in A_\theta \text{ and } v \in B_k
\end{align*}
where $A_\theta$ is a subset of the positive real numbers and $B_k$ is a subset of the set of $k$-sparse unit vectors in $\mathbb{S}^{d - 1}$.
\end{definition}

Specializing this general definition with different sets $A_\theta$ and $B_k$, we arrive at the following variants of the spiked covariance model:
\begin{itemize}
\item \textit{Uniform biased sparse PCA} is denoted as $\pr{ubspca}(n, k, d, \theta)$ where $A_\theta = \{ \theta \}$ and $B_k$ is the set $\mathcal{S}_{k, d}$ of all $k$-sparse unit vectors in $\mathbb{S}^{d - 1}$ with nonzero coordinates all equal to $1/\sqrt{k}$.
\item \textit{Composite biased sparse PCA} is denoted as $\pr{cbspca}(n, k, d, \theta)$ where
\begin{align*}
A_\theta &= \left[ \theta \left(1 - \frac{\gamma}{\sqrt{k}} \right), \theta \left(1 + \frac{\gamma}{\sqrt{k}} \right) \right] \\
B_k &= \bigcup_{k' = k - \gamma \sqrt{k}}^{k} \mathcal{S}_{k', d}
\end{align*}
for some fixed parameter $\tau$ satisfying that $\gamma \to \infty$ as $n \to \infty$.
\item \textit{Fully composite unbiased sparse PCA} is denoted as $\pr{fcspca}(n, k, d, \theta)$ where $A_\theta$ is the same as in $\pr{cbspca}$ and
$$B_k = \left\{ v \in \mathbb{S}^{d-1}  : k - \tau \sqrt{k} \le \| v \|_0 \le k \text{ and } |v_i| \ge \frac{1}{\sqrt{k}} \text{ for } i \in \text{supp}(v) \right\}$$
\end{itemize}
The canonical formulation of sparse PCA in the spiked covariance model is the simple vs. simple hypothesis testing formulation where $v$ is drawn uniformly at random from $\mathcal{S}_{k, d}$ under $H_1$. Note that randomly permuting the $d$ coordinates of any instance of $\pr{ubspca}$ exactly produces an instance of this simple vs. simple formulation. This implies that the two models are equivalent in the sense that an algorithm $\mathcal{A}$ solves $\pr{ubspca}$ if and only if there is an algorithm $\mathcal{A'}$ solving this formulation with a runtime differing from that of $\mathcal{A}$ by at most $\text{poly}(n, d)$. Note that an instance of $\pr{ubspca}$ is an instance of $\pr{cbspca}$ which is an instance of $\pr{fcspca}$, and thus lower bounds for these formulations are decreasing in strength. The lower bounds from our main theorem are for the strongest formulation $\pr{ubspca}$ and the complete set of tight lower bounds in our second main theorem are for $\pr{fcspca}$. The relationships between these and other formulations of the spiked covariance model of PCA are further discussed in Appendix \ref{sec:relationshipsspca}.

\paragraph{Algorithms and Statistical Limits.} As mentioned in the introduction, the statistical limit of detection in the spiked covariance model is $\theta = \tilde{\Theta}(\sqrt{k/n})$ and the best-known algorithms solve the problem if $\theta = \tilde{\Omega}(\sqrt{k^2/n})$. The information-theoretic lower bound for $\theta = \tilde{O}(\sqrt{k/n})$ can be found in \cite{berthet2013optimal} if $k^2 = o(d)$ and for all $k \le d$ in \cite{cai2015optimal}. Upper bounds at these two barriers are shown in \cite{berthet2013complexity} if $k = o(\sqrt{d})$ and $k = o(\sqrt{n})$ with the following two algorithms:
\begin{enumerate}
\item \textbf{Semidefinite Programming:} Form the empirical covariance matrix $\hat{\Sigma} = \frac{1}{n} \sum_{i = 1}^n X_i X_i^\top$ and solve the convex program
\begin{align*}
\max_Z \quad &\text{Tr}\left(\hat{\Sigma} Z\right) \\
\text{s.t.} \quad &\text{Tr}(Z) = 1, |Z|_1 \le k, Z \succeq 0
\end{align*}
Thresholding the resulting maximum solves the detection problem as long as $\theta = \tilde{\Omega}(\sqrt{k^2/n})$.
\item \textbf{$k$-Sparse Eigenvalue:} Compute and threshold the $k$-sparse unit vector $u$ that maximizes $u^\top \hat{\Sigma} u$. This can be found by finding the largest eigenvector of each $k \times k$ principal submatrix of $\hat{\Sigma}$. This succeeds as long as $\theta = \tilde{\Omega}(\sqrt{k/n})$.
\end{enumerate}
Note that the semidefinite programming algorithm runs in polynomial time while $k$-sparse eigenvalue in exponential time.

\section{Background on Average-Case Reductions}
\label{sec:avgcasereductions}

\subsection{Reductions in Total Variation and the Computational Model}

As introduced in \cite{berthet2013complexity} and \cite{ma2015computational}, we give approximate reductions in total variation to show that lower bounds for one hypothesis testing problem imply lower bounds for another. These reductions yield an exact correspondence between the asymptotic Type I$+$II errors of the two problems. This is formalized in the following lemma, which is Lemma 3.1 from \cite{brennan2018reducibility}. Its proof is short and follows from the definition of total variation.

\begin{lemma}[Lemma 3.1 in \cite{brennan2018reducibility}] \label{lem:3a}
Let $\mP$ and $\mP'$ be detection problems with hypotheses $H_0, H_1$ and  $H_0', H_1'$, respectively. Let $X$ be an instance of $\mathcal{P}$ and let $Y$ be an instance of $\mP'$. Suppose there is a polynomial time computable map $\mathcal{A}$ satisfying
$$\TV\left(\mL_{H_0}(\mathcal{A}(X)), \mL_{H_0'}(Y)\right) + \sup_{P \in H_1} \inf_{\pi \in \Delta(H_1')} \TV\left( \mL_{P}(\mathcal{A}(X)), \int_{H_1'} \mL_{P'}(Y) d\pi(P') \right) \le \delta$$
If there is a randomized polynomial time algorithm solving $\mP'$ with Type I$+$II error at most $\epsilon$, then there is a randomized polynomial time algorithm solving $\mP$ with Type I$+$II error at most $\epsilon + \delta$.
\end{lemma}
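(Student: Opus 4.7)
The plan is a straightforward composition argument: given any randomized polynomial-time algorithm $\mathcal{B}$ for $\mathcal{P}'$ with Type I$+$II error at most $\epsilon$, form the composite algorithm $\mathcal{B} \circ \mathcal{A}$ for $\mathcal{P}$, which still runs in polynomial time because both $\mathcal{A}$ and $\mathcal{B}$ do. It then suffices to bound the Type I and Type II errors of $\mathcal{B} \circ \mathcal{A}$ separately using the basic coupling characterization of total variation: for any $\{0,1\}$-valued function $f$ and any two distributions $\mu,\nu$, one has $|\mathbb{P}_{Z \sim \mu}[f(Z)=1] - \mathbb{P}_{Z \sim \nu}[f(Z)=1]| \le \TV(\mu,\nu)$.

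For the Type I bound, I would write $\bP_{H_0}[\mathcal{B}(\mathcal{A}(X))=1]$ as $\bP_{Z \sim \mL_{H_0}(\mathcal{A}(X))}[\mathcal{B}(Z)=1]$ and immediately obtain $\bP_{H_0}[\mathcal{B}(\mathcal{A}(X))=1] \le \bP_{Y \sim \mL_{H_0'}}[\mathcal{B}(Y)=1] + \TV(\mL_{H_0}(\mathcal{A}(X)), \mL_{H_0'}(Y))$. For the Type II bound, fix any $P \in H_1$ and any mixture $\pi \in \Delta(H_1')$. The same TV inequality gives
\[
\bP_{X \sim P}[\mathcal{B}(\mathcal{A}(X))=0] \le \int_{H_1'} \bP_{Y \sim P'}[\mathcal{B}(Y)=0]\, d\pi(P') + \TV\!\left(\mL_P(\mathcal{A}(X)), \int_{H_1'} \mL_{P'}(Y)\,d\pi(P')\right),
\]
and the mixture integral is bounded above by $\sup_{P' \in H_1'} \bP_{Y \sim P'}[\mathcal{B}(Y)=0]$ since the error under a mixture is the average of the errors. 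Taking the infimum over $\pi$ on the TV term and then the supremum over $P \in H_1$ outside yields exactly the second summand appearing in the hypothesis of the lemma.

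Adding the two bounds, the noise-term contributions collapse to at most $\delta$ by assumption, while the algorithmic contributions sum to $\bP_{H_0'}[\mathcal{B}(Y)=1] + \sup_{P' \in H_1'} \bP_{Y \sim P'}[\mathcal{B}(Y)=0] \le \epsilon$. Hence $\mathcal{B} \circ \mathcal{A}$ has asymptotic Type I$+$II error at most $\epsilon + \delta$ on $\mathcal{P}$, completing the reduction.

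The only nontrivial bookkeeping is handling the composite alternative $H_1'$ together with the infimum over mixtures $\pi$: one must observe that a single mixture may be used separately for each $P \in H_1$ (so the infimum is \emph{inside} the supremum over $P$), and that passing to a mixture over $H_1'$ is harmless because any decision rule's error on a mixture is dominated by its worst-case error over the mixed components. This order of quantifiers is what makes the statement strong enough to be useful even when the reduction's image distribution under $H_1$ is not literally one of the distributions in $H_1'$ but merely close to a mixture over $H_1'$.
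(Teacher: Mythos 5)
Your proposal is correct and is essentially the argument the paper intends: the paper omits the proof, noting only that it is short and follows from the definition of total variation (citing Lemma 3.1 of \cite{brennan2018reducibility}), and your composition $\mathcal{B}\circ\mathcal{A}$ with the variational bound $|\bP_{\mu}[\mathcal{B}=1]-\bP_{\nu}[\mathcal{B}=1]|\le \TV(\mu,\nu)$ applied under $H_0$ and, for each $P\in H_1$, against a near-optimal mixture over $H_1'$ is exactly that argument, including the correct placement of the infimum over $\pi$ inside the supremum over $P$.
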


If $\delta = o(1)$, then given a blackbox solver $\mathcal{B}$ for $\mathcal{P'}$, the algorithm that applies $\mathcal{A}$ and then $\mathcal{B}$ solves $\mathcal{P}$ and requires only a single query to the blackbox. An algorithm that runs in randomized polynomial time refers to one that has access to $\text{poly}(n)$ independent random bits and must run in $\text{poly}(n)$ time where $n$ is the size of the instance of the problem. For clarity of exposition, in our reductions we assume that explicit expressions can be exactly computed and assume that $\mN(0, 1)$ and random variables can be sampled in $O(1)$ operations. Note that this implies that $\chi^2(n)$ random variables can be sampled in $O(n)$ operations. 

\subsection{Properties of Total Variation}

Throughout the proof of our main theorem, we will need a number of well-known facts and inequalities concerning total variation distance. 

\begin{fact} \label{tvfacts}
The distance $\TV$ satisfies the following properties:
\begin{enumerate}
\item (Triangle Inequality) Given three distributions $P, Q$ and $R$ on a measurable space $(\mathcal{X}, \mathcal{B})$, it follows that
$$\TV\left( P, Q \right) \le \TV\left( P, R \right) + \TV\left( Q, R \right)$$
\item (Data Processing) Let $P$ and $Q$ be distributions on a measurable space $(\mathcal{X}, \mathcal{B})$ and let $f : \mathcal{X} \to \mathcal{Y}$ be a Markov transition kernel. If $A \sim P$ and $B \sim Q$ then
$$\TV\left(\mL(f(A)), \mL(f(B))\right) \le \TV(P, Q)$$
\item (Tensorization) Let $P_1, P_2, \dots, P_n$ and $Q_1, Q_2, \dots, Q_n$ be distributions on a measurable space $(\mathcal{X}, \mathcal{B})$. Then
$$\TV\left( \prod_{i = 1}^n P_i, \prod_{i = 1}^n Q_i \right) \le \sum_{i = 1}^n \TV\left( P_i, Q_i \right)$$
\item (Conditioning on an Event) For any distribution $P$ on a measurable space $(\mathcal{X}, \mathcal{B})$ and event $A \in \mathcal{B}$, it holds that
$$\TV\left( P(\cdot | A), P \right) = 1 - P(A)$$
\item (Conditioning on a Random Variable) For any two pairs of random variables $(X, Y)$ and $(X', Y')$ each taking values in a measurable space $(\mathcal{X}, \mathcal{B})$, it holds that
$$\TV\left( \mL(X), \mL(X') \right) \le \TV\left( \mL(Y), \mL(Y') \right) + \bE_{y \sim Y} \left[ \TV\left( \mL(X | Y = y), \mL(X' | Y' = y) \right)\right]$$
where we define $\TV\left( \mL(X | Y = y), \mL(X' | Y' = y) \right) = 1$ for all $y \not \in \textnormal{supp}(Y')$.
\end{enumerate}
\end{fact}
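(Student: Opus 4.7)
The plan is to derive all five properties from the variational formula $\TV(P, Q) = \sup_{A \in \mathcal{B}} |P(A) - Q(A)| = \frac{1}{2} \int |dP - dQ|$ taken with respect to a common dominating measure, together with the coupling characterization $\TV(P, Q) = \inf \bP[X \neq Y]$ over couplings with the prescribed marginals. Since the five assertions are classical, I do not anticipate any genuine obstacle; the only work is organizing the manipulations cleanly, and I will lean on the integral formula throughout.

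First, (1) is immediate from the $L^1$ triangle inequality applied to $dP - dQ = (dP - dR) + (dR - dQ)$. For (2), note that for every measurable event $C$ in the target space, $|\bP[f(A) \in C] - \bP[f(B) \in C]| = |P(f^{-1}(C)) - Q(f^{-1}(C))| \le \TV(P, Q)$ when $f$ is deterministic; the same holds when $f$ is a Markov kernel by integrating $C$ against the kernel. Taking the supremum over $C$ yields the claim.

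For (3), I would induct on $n$ and reduce to the case $n = 2$ via a single interpolation, $\TV(P_1 \otimes P_2, Q_1 \otimes Q_2) \le \TV(P_1 \otimes P_2, P_1 \otimes Q_2) + \TV(P_1 \otimes Q_2, Q_1 \otimes Q_2)$, where each summand simplifies to $\TV(P_2, Q_2)$ or $\TV(P_1, Q_1)$ by Fubini. For (4), a direct computation expresses $P(B \mid A) - P(B)$ as $\frac{(1 - P(A)) P(B \cap A) - P(A) P(B \cap A^c)}{P(A)}$, and bounds each summand in absolute value by $1 - P(A)$; equality is attained at $B = A$, giving the matching lower bound.

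Finally, (5) follows from the standard chain rule for total variation: writing the joint densities as $p_Y(y) \, p_{X \mid Y}(x \mid y)$ and $q_{Y'}(y) \, q_{X' \mid Y'}(x \mid y)$ relative to a common dominating measure and interpolating through $p_Y(y) \, q_{X' \mid Y'}(x \mid y)$, the $L^1$ triangle inequality yields
$$\TV(\mL(X, Y), \mL(X', Y')) \le \TV(\mL(Y), \mL(Y')) + \bE_{y \sim Y}\left[ \TV(\mL(X \mid Y = y), \mL(X' \mid Y' = y)) \right].$$
The convention that the inner total variation equals $1$ outside $\mathrm{supp}(Y')$ is exactly what is needed when $q_{Y'}(y)$ vanishes while $p_Y(y) > 0$, so that the outer expectation dominates the contribution of those $y$. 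The stated inequality for the $X$-marginals then follows by applying (2) to the projection $(x, y) \mapsto x$. The only mild care needed throughout is that regular conditional distributions exist, which is automatic on the standard Borel spaces encountered in this paper.
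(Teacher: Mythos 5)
Your proof is correct. Note, however, that the paper itself offers no proof of this statement: Fact 4.1 is introduced with the phrase ``well-known facts and inequalities concerning total variation distance'' and is used as a black box, so there is no argument in the paper to compare yours against. Your derivations are the standard ones and all check out: (1)--(3) from the $L^1$/variational formula with the hybrid interpolation for tensorization, (4) by the explicit computation $P(B \mid A) - P(B) = \frac{(1-P(A))P(B\cap A) - P(A)P(B\cap A^c)}{P(A)}$ with each term in $[0, 1-P(A)]$ and equality at $B = A$, and (5) by the chain-rule interpolation through $p_Y(y)\,q_{X'\mid Y'}(x\mid y)$ followed by data processing on the projection $(x,y)\mapsto x$. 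Your remark about the convention on $y \notin \mathrm{supp}(Y')$ is exactly the right observation: where $q_{Y'}(y)=0$ the conditional law of $X'$ may be chosen arbitrarily, and bounding the corresponding inner total variation by $1$ is what makes the stated inequality hold verbatim. The only caveat, which you already flag, is the existence of regular conditional distributions, harmless in the standard Borel settings where the fact is applied.
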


Given an algorithm $\mathcal{A}$ and distribution $\mP$ on inputs, let $\mathcal{A}(\mP)$ denote the distribution of $\mathcal{A}(X)$ induced by $X \sim \mP$. If $\mathcal{A}$ has $k$ steps, let $\mathcal{A}_i$ denote the $i$th step of $\mathcal{A}$ and $\mathcal{A}_{i\text{-}j}$ denote the procedure formed by steps $i$ through $j$. Each time this notation is used, we clarify the intended initial and final variables when $\mathcal{A}_{i}$ and $\mathcal{A}_{i\text{-}j}$ are viewed as Markov kernels. The next lemma encapsulates the structure of all of our analyses of average-case reductions.

\begin{lemma} \label{lem:tvacc}
Let $\mathcal{A}$ be an algorithm that can be written as $\mathcal{A} = \mathcal{A}_m \circ \mathcal{A}_{m-1} \circ \cdots \circ \mathcal{A}_1$ for a sequence of steps $\mathcal{A}_1, \mathcal{A}_2, \dots, \mathcal{A}_m$. Suppose that the probability distributions $\mP_0, \mP_1, \dots, \mP_m$ are such that $\TV(\mathcal{A}_i(\mP_{i-1}), \mP_i) \le \epsilon_i$ for each $1 \le i \le m$. Then it follows that
$$\TV\left( \mathcal{A}(\mP_0), \mP_m \right) \le \sum_{i = 1}^m \epsilon_i$$
\end{lemma}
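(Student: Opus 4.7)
The plan is to prove Lemma \ref{lem:tvacc} by induction on the number of steps $m$, using only two of the properties of total variation listed in Fact \ref{tvfacts}: the triangle inequality (item 1) and the data processing inequality (item 2), which applies because each $\mathcal{A}_i$ is (or can be viewed as) a Markov transition kernel.

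For the base case $m = 1$, the claim $\TV(\mathcal{A}_1(\mP_0), \mP_1) \le \epsilon_1$ is exactly the hypothesis. For the inductive step, assume the claim holds for compositions of length $m-1$, and let $\mathcal{B} = \mathcal{A}_{m-1} \circ \cdots \circ \mathcal{A}_1$ so that $\mathcal{A} = \mathcal{A}_m \circ \mathcal{B}$. Insert the intermediate distribution $\mathcal{A}_m(\mP_{m-1})$ and apply the triangle inequality:
\[
\TV\bigl(\mathcal{A}(\mP_0), \mP_m\bigr) \le \TV\bigl(\mathcal{A}_m(\mathcal{B}(\mP_0)),\, \mathcal{A}_m(\mP_{m-1})\bigr) + \TV\bigl(\mathcal{A}_m(\mP_{m-1}),\, \mP_m\bigr).
\]
The second term is at most $\epsilon_m$ by hypothesis. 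For the first term, apply the data processing inequality with the Markov kernel $\mathcal{A}_m$ to bound it by $\TV(\mathcal{B}(\mP_0), \mP_{m-1})$, which by the inductive hypothesis is at most $\sum_{i=1}^{m-1} \epsilon_i$. Adding these bounds gives $\sum_{i=1}^m \epsilon_i$, completing the induction.

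There is no real obstacle here — the lemma is essentially a formal statement that ``errors along a pipeline add up under total variation'' and is the workhorse for the multi-stage reductions in the rest of the paper. The only subtle point worth flagging is the implicit use that each $\mathcal{A}_i$ can be regarded as a (possibly randomized) Markov kernel, so that the data processing inequality applies at each stage; this is automatic for the polynomial-time randomized procedures considered here. A purely non-inductive presentation via telescoping is equivalent: write
\[
\TV\bigl(\mathcal{A}(\mP_0), \mP_m\bigr) \le \sum_{i=1}^{m} \TV\bigl((\mathcal{A}_m \circ \cdots \circ \mathcal{A}_{i+1})(\mathcal{A}_i(\mP_{i-1})),\, (\mathcal{A}_m \circ \cdots \circ \mathcal{A}_{i+1})(\mP_i)\bigr),
\]
and apply data processing to each summand to reduce it to $\TV(\mathcal{A}_i(\mP_{i-1}), \mP_i) \le \epsilon_i$. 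Either presentation is clean; the inductive one is probably slightly tidier to write out.
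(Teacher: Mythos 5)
Your proof is correct and follows essentially the same route as the paper's: an induction on $m$ with $\mathcal{B} = \mathcal{A}_{m-1} \circ \cdots \circ \mathcal{A}_1$, applying the triangle inequality through the intermediate distribution $\mathcal{A}_m(\mP_{m-1})$ and then the data processing inequality from Fact \ref{tvfacts}. The telescoping variant you mention is an equivalent repackaging of the same argument, so there is nothing further to flag.
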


\begin{proof}
This follows from a simple induction on $m$. Note that the case when $m = 1$ follows by definition. Now observe that by the data-processing and triangle inequalities in Fact \ref{tvfacts}, we have that if $\mathcal{B} = \mathcal{A}_{m-1} \circ \mathcal{A}_{m-2} \circ \cdots \circ \mathcal{A}_1$ then
\begin{align*}
\TV\left( \mathcal{A}(\mP_0), \mP_m \right) &\le \TV\left( \mathcal{A}_m \circ \mathcal{B}(\mP_0), \mathcal{A}_m(\mP_{m - 1}) \right) + \TV\left(\mathcal{A}_m(\mP_{m - 1}), \mP_m \right) \\
&\le \TV\left( \mathcal{B}(\mP_0), \mP_{m - 1} \right) + \epsilon_m \\
&\le \sum_{i = 1}^m \epsilon_i
\end{align*}
where the last inequality follows from the induction hypothesis applied with $m - 1$ to $\mathcal{B}$. This completes the induction and proves the lemma.
\end{proof}

\section{Mapping to Submatrix Problems and $\chi^2$ Random Rotations}
\label{sec:mappingsubmatrix}

In this section, we introduce a number of subroutines that we will use in both of our reductions to sparse PCA in Sections \ref{sec:wishartreduction} and \ref{sec:randomrotations}.

\subsection{Graph and Bernoulli Matrix Cloning}

We begin with $\textsc{Graph-Clone}$ and $\textsc{Bernoulli-Matrix-Clone}$, shown in Figure \ref{fig:clone}. The procedure $\textsc{Graph-Clone}$ was introduced in \cite{brennan2019universality} and produces several independent samples from a planted subgraph problems given a single sample. The procedure $\textsc{Bernoulli-Matrix-Clone}$ is nearly identical, producing independent samples from planted Bernoulli submatrix problems given a single sample. Their properties as a Markov kernel are captured in the next two lemmas.

Let $\mG(n, S, p, q)$ denote the distribution of planted dense subgraph instances from $\mG(n, k, p, q)$ conditioned on the subgraph being planted on the vertex set $S$ where $|S| = k$. Similarly let $\mathcal{M}(n, S, p, q)$ denote the distribution of matrices in $\{0, 1\}^{n \times n}$ with independent entries where $M_{ij} \sim \text{Bern}(p)$ if $i, j \in S$ and $M_{ij} \sim \text{Bern}(q)$ otherwise. Let $\mathcal{M}(n, k, p, q)$ denote the mixture of $\mathcal{M}(n, S, p, q)$ induced by picking the $k$-subset $S \subseteq [n]$ uniformly at random.

\begin{figure}[t!]
\begin{algbox}
\textbf{Algorithm} \textsc{Graph-Clone}

\vspace{1mm}

\textit{Inputs}: Graph $G \in \mG_n$, the number of copies $t$, parameters $0 < q < p \le 1$ and $0 < Q < P \le 1$ satisfying $\frac{1 - p}{1 - q} \le \left( \frac{1 - P}{1 - Q} \right)^t$ and $\left( \frac{P}{Q} \right)^t \le \frac{p}{q}$

\begin{enumerate}
\item Generate $x^{ij} \in \{0, 1\}^t$ for each $1 \le i < j \le n$ such that:
\begin{itemize}
\item If $\{i, j \} \in E(G)$, sample $x^{ij}$ from the distribution on $\{0, 1\}^t$ with
$$\bP[x^{ij} = v] = \frac{1}{p - q} \left[ (1 - q) \cdot P^{|v|_1} (1 - P)^{t - |v|_1} - (1 - p) \cdot Q^{|v|_1} (1 - Q)^{t - |v|_1} \right]$$
\item If $\{i, j \} \not \in E(G)$, sample $x^{ij}$ from the distribution on $\{0, 1\}^t$ with
$$\bP[x^{ij} = v] = \frac{1}{p - q} \left[ p \cdot Q^{|v|_1} (1 - Q)^{t - |v|_1} - q \cdot P^{|v|_1} (1 - P)^{t - |v|_1} \right]$$
\end{itemize}
\item Output the graphs $(G_1, G_2, \dots, G_t)$ where $\{i, j\} \in E(G_k)$ if and only if $x^{ij}_k = 1$
\end{enumerate}
\vspace{1mm}

\textbf{Algorithm} \textsc{Bernoulli-Matrix-Clone}

\vspace{1mm}

\textit{Inputs}: Matrix $M \in \{0, 1\}^{n\times n}$, the number of copies $t$, parameters $0 < q < p \le 1$ and $0 < Q < P \le 1$ satisfying $\frac{1 - p}{1 - q} \le \left( \frac{1 - P}{1 - Q} \right)^t$ and $\left( \frac{P}{Q} \right)^t \le \frac{p}{q}$

\begin{enumerate}
\item Generate the 3-tensor $X \in \{0, 1\}^{t \times n \times n}$ such that for all $i, j \in [n]$:
\begin{itemize}
\item If $M_{ij} = 1$, the row $X_{\cdot ij}$ is sampled from the distribution on $\{0, 1\}^t$ with
$$\bP[X_{\cdot ij} = v] = \frac{1}{p - q} \left[ (1 - q) \cdot P^{|v|_1} (1 - P)^{t - |v|_1} - (1 - p) \cdot Q^{|v|_1} (1 - Q)^{t - |v|_1} \right]$$
\item If $M_{ij} = 0$, the row $X_{\cdot ij}$ is sampled from the distribution on $\{0, 1\}^t$ with
$$\bP[X_{\cdot ij} = v] = \frac{1}{p - q} \left[ p \cdot Q^{|v|_1} (1 - Q)^{t - |v|_1} - q \cdot P^{|v|_1} (1 - P)^{t - |v|_1} \right]$$
\end{itemize}
\item Output the matrices $(M_1, M_2, \dots, M_t)$ where $M_i$ is the $i$th layer $X_{i \cdot \cdot}$ of $X$
\end{enumerate}
\vspace{1mm}

\end{algbox}
\caption{Subroutines $\textsc{Graph-Clone}$ and $\textsc{Bernoulli-Matrix-Clone}$ for producing independent samples from planted graph and Bernoulli submatrix problems.}
\label{fig:clone}
\end{figure}

\begin{lemma}[Lemma 5.2 in \cite{brennan2019universality}] \label{lem:graphcloning}
Let $t \in \mathbb{N}$, $0 < q < p \le 1$ and $0 < Q < P \le 1$ satisfy that
$$\frac{1 - p}{1 - q} \le \left( \frac{1 - P}{1 - Q} \right)^t \quad \text{and} \quad \left( \frac{P}{Q} \right)^t \le \frac{p}{q}$$
Then the algorithm $\mathcal{A} = \textsc{Graph-Clone}$ runs in $\textnormal{poly}(t, n)$ time and satisfies that for each $S \subseteq [n]$,
$$\mathcal{A}\left( \mG(n, q) \right) \sim \mG(n, Q)^{\otimes t} \quad \text{and} \quad \mathcal{A}\left( \mG(n, S, p, q) \right) \sim \mG(n, S, P, Q)^{\otimes t}$$
\end{lemma}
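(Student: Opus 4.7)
The plan is to verify three things in order: first, that the two distributions on $\{0,1\}^t$ defined in step 1 of $\textsc{Graph-Clone}$ are legitimate probability distributions under the stated hypotheses; second, that the marginal law of each sampled vector $x^{ij}$, integrated against the randomness of the input edge, is the intended product Bernoulli distribution; and third, that by independence of the input edges and of the samples across pairs, the output $(G_1,\ldots,G_t)$ has the claimed joint law. Steps one and two are short algebraic calculations, and step three is an immediate independence argument.

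For validity, summing over $v \in \{0,1\}^t$ via the binomial theorem collapses both expressions to $\tfrac{1}{p-q}[(1-q)-(1-p)] = 1$ and $\tfrac{1}{p-q}[p-q] = 1$. Non-negativity of the first distribution amounts to $(1-q) P^{|v|_1}(1-P)^{t-|v|_1} \ge (1-p) Q^{|v|_1}(1-Q)^{t-|v|_1}$ for every $v$. Because $\tfrac{1-P}{1-Q} < 1 < \tfrac{P}{Q}$, the ratio of the two sides is monotone in $|v|_1$, so the constraint is tightest at $|v|_1 = 0$ and reduces exactly to the hypothesis $\tfrac{1-p}{1-q} \le \bigl(\tfrac{1-P}{1-Q}\bigr)^t$. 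Symmetrically, non-negativity of the second distribution is tightest at $|v|_1 = t$ and reduces to $\bigl(\tfrac{P}{Q}\bigr)^t \le \tfrac{p}{q}$. The two side conditions of the lemma are thus exactly what is needed. Because each density depends only on $|v|_1$, sampling $x^{ij}$ can be done in $\text{poly}(t)$ time by first drawing the Hamming weight and then placing $1$'s uniformly, giving total runtime $\text{poly}(n,t)$.

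The heart of the argument is the marginal computation. Take $\{i,j\}$ with input indicator Bernoulli$(q)$, which covers all pairs under $H_0$ and all non-planted pairs under $H_1$. Mixing the two conditional distributions and collecting like terms,
\begin{align*}
\bP[x^{ij}=v] &= \tfrac{q(1-q) - (1-q)q}{p-q}\, P^{|v|_1}(1-P)^{t-|v|_1} \\
&\quad + \tfrac{(1-q)p - q(1-p)}{p-q}\, Q^{|v|_1}(1-Q)^{t-|v|_1} \\
&= Q^{|v|_1}(1-Q)^{t-|v|_1},
\end{align*}
so $x^{ij} \sim \text{Bern}(Q)^{\otimes t}$. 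A symmetric cancellation when the input indicator is Bernoulli$(p)$ kills the $Q$-term and yields $x^{ij} \sim \text{Bern}(P)^{\otimes t}$. Combining these marginals with independence across pairs: under $H_0$ the output is a product over all pairs of $\text{Bern}(Q)^{\otimes t}$, matching $\mG(n,Q)^{\otimes t}$, and under $H_1$ it is $\text{Bern}(P)^{\otimes t}$ for pairs inside $S$ and $\text{Bern}(Q)^{\otimes t}$ elsewhere, matching $\mG(n,S,P,Q)^{\otimes t}$. The main step needing care is the monotonicity argument establishing non-negativity, since the inequality to verify ranges over $2^t$ vectors $v$ but must be reduced to checking the two extreme values of $|v|_1$; everything else is a bounded algebraic exercise.
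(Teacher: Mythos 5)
Your proposal is correct and follows essentially the same route as the paper, which defers this lemma to Lemma 5.2 of \cite{brennan2019universality} and reproduces the same argument in its proof of the Bernoulli matrix cloning lemma: check that the two conditional samplers in Step 1 are valid probability distributions (with the two hypotheses arising exactly as the extreme cases $|v|_1 = 0$ and $|v|_1 = t$ of the non-negativity constraint), verify the mixture identities showing that averaging over a $\mathrm{Bern}(p)$ or $\mathrm{Bern}(q)$ input edge yields $\mathrm{Bern}(P)^{\otimes t}$ or $\mathrm{Bern}(Q)^{\otimes t}$, and conclude by independence across pairs. Your additional observation that the densities depend only on $|v|_1$, so one can sample the Hamming weight first to get genuine $\mathrm{poly}(t,n)$ time, is a correct and welcome detail.
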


\begin{lemma}[Bernoulli Matrix Clonng] \label{lem:matrixcloning}
Let $t \in \mathbb{N}$, $0 < q < p \le 1$ and $0 < Q < P \le 1$ satisfy that
$$\frac{1 - p}{1 - q} \le \left( \frac{1 - P}{1 - Q} \right)^t \quad \text{and} \quad \left( \frac{P}{Q} \right)^t \le \frac{p}{q}$$
Then the algorithm $\mathcal{A} = \textsc{Bernoulli-Matrix-Clone}$ runs in $\textnormal{poly}(t, n)$ time and satisfies that for each $S \subseteq [n]$,
$$\mathcal{A}\left( \textnormal{Bern}(q)^{\otimes n \times n} \right) \sim \left( \textnormal{Bern}(Q)^{\otimes n \times n} \right)^{\otimes t} \quad \text{and} \quad \mathcal{A}\left( \mathcal{M}(n, S, p, q) \right) \sim \mathcal{M}(n, S, P, Q)^{\otimes t}$$
\end{lemma}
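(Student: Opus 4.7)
The proof mirrors the argument for Lemma~\ref{lem:graphcloning} (Graph Cloning) almost verbatim, with the only conceptual difference being that we work over all $n^2$ entries of a matrix rather than only the $\binom{n}{2}$ edge slots of a graph. Since $\textsc{Bernoulli-Matrix-Clone}$ samples the fibers $X_{\cdot ij}$ independently across $(i,j) \in [n]\times[n]$ and the desired output law $\mathcal{M}(n,S,P,Q)^{\otimes t}$ (resp.\ $(\text{Bern}(Q)^{\otimes n\times n})^{\otimes t}$) is itself a product over $(i,j)$, it suffices to analyze the fiber distribution entry by entry, as in the Bernoulli single-entry argument underlying $\textsc{Graph-Clone}$.

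The plan has four short steps. First, I would verify that the two formulas defining the law of $X_{\cdot ij}$ are genuine probability distributions on $\{0,1\}^t$. Nonnegativity in the $M_{ij}=1$ case requires $(1-q)P^{k}(1-P)^{t-k} \ge (1-p)Q^{k}(1-Q)^{t-k}$ for all $k \in \{0,\ldots,t\}$; dividing and maximizing in $k$ (the ratio is decreasing in $k$ since $Q<P$ forces $Q/P<1$ and $(1-Q)/(1-P)>1$) the worst case is $k=0$, giving exactly the hypothesis $\frac{1-p}{1-q} \le \bigl(\frac{1-P}{1-Q}\bigr)^t$. The $M_{ij}=0$ case is symmetric: the worst index is $k=t$, yielding the hypothesis $\bigl(\frac{P}{Q}\bigr)^t \le \frac{p}{q}$. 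Summing over $v \in \{0,1\}^t$ and using $\sum_v P^{|v|_1}(1-P)^{t-|v|_1}=1$ shows the totals equal $\frac{(1-q)-(1-p)}{p-q}=1$ and $\frac{p-q}{p-q}=1$ respectively.

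Second, I would compute the marginal law of $X_{\cdot ij}$ when $M_{ij} \sim \text{Bern}(q)$ (the off-planted regime under either hypothesis). By conditioning,
\begin{align*}
\bP[X_{\cdot ij}=v] &= \tfrac{q}{p-q}\bigl[(1-q)P^{|v|_1}(1-P)^{t-|v|_1} - (1-p)Q^{|v|_1}(1-Q)^{t-|v|_1}\bigr] \\
&\quad + \tfrac{1-q}{p-q}\bigl[p\, Q^{|v|_1}(1-Q)^{t-|v|_1} - q\, P^{|v|_1}(1-P)^{t-|v|_1}\bigr].
\end{align*}
The coefficient of $P^{|v|_1}(1-P)^{t-|v|_1}$ vanishes (it equals $\tfrac{q(1-q)-q(1-q)}{p-q}$), and the coefficient of $Q^{|v|_1}(1-Q)^{t-|v|_1}$ simplifies to $\tfrac{(1-q)p-q(1-p)}{p-q}=1$. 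Hence $X_{\cdot ij}$ is a vector of $t$ i.i.d.\ $\text{Bern}(Q)$ coordinates. Third, I would repeat the identical calculation for $M_{ij} \sim \text{Bern}(p)$; this time the $Q$-coefficient cancels and the $P$-coefficient reduces to $1$, so $X_{\cdot ij}$ is i.i.d.\ $\text{Bern}(P)^{\otimes t}$.

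Fourth, I would assemble: under $H_0$ each input entry is independent $\text{Bern}(q)$ so, by independence across $(i,j)$, the $t$ output matrices are i.i.d.\ $\text{Bern}(Q)^{\otimes n \times n}$; under the planted law $\mathcal{M}(n,S,p,q)$, entries with $i,j \in S$ are $\text{Bern}(p)$ and all others are $\text{Bern}(q)$, so the output is a product over $(i,j)$ of independent $\text{Bern}(P)^{\otimes t}$ (on $S\times S$) and $\text{Bern}(Q)^{\otimes t}$ (elsewhere), which is exactly $\mathcal{M}(n,S,P,Q)^{\otimes t}$. The runtime bound is immediate since each of the $n^2$ fibers is sampled from an explicit distribution on $\{0,1\}^t$ in $\text{poly}(t)$ time. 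There is no real obstacle here; the only subtle point is recognizing that the two inequality hypotheses on $(p,q,P,Q,t)$ are exactly what is needed to make both conditional laws nonnegative, and the rest is bookkeeping.
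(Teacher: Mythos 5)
Your proposal is correct and follows essentially the same route as the paper: the paper's proof reduces to the single-entry mixture identities (which it cites from Lemma 5.2 of the graph-cloning analysis in \cite{brennan2019universality}) and then uses independence across entries, while you simply verify those per-entry facts (nonnegativity at the extreme counts $k=0$ and $k=t$, normalization, and the cancellation giving $\textnormal{Bern}(P)^{\otimes t}$ and $\textnormal{Bern}(Q)^{\otimes t}$ marginals) explicitly before assembling by entrywise independence.
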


\begin{proof}
Let $R_0(v) = \bP[X_{\cdot ij} = v | M_{ij} = 1]$ and $R_1(v) = \bP[X_{\cdot ij} = v | M_{ij} = 0]$ as defined in Figure \ref{fig:clone}. As shown in the proof of Lemma 5.2 of \cite{brennan2019universality}, the given conditions on $p, q, P$ and $Q$ imply that $R_0$ and $R_1$ are well-defined probability distributions satisfying the mixture identities
\begin{align*}
(1 - p) \cdot R_0 + p \cdot R_1 &= \text{Bern}(P)^{\otimes t} \\
(1 - q) \cdot R_0 + q \cdot R_1 &= \text{Bern}(Q)^{\otimes t}
\end{align*}
Therefore if $M_{ij} \sim \text{Bern}(p)$, then $(X_{1ij}, X_{2ij}, \dots, X_{tij}) \sim \text{Bern}(P)^{\otimes t}$ and if $M_{ij} \sim \text{Bern}(q)$, then $(X_{1ij}, X_{2ij}, \dots, X_{tij}) \sim \text{Bern}(Q)^{\otimes t}$. Since the entries of $M$ are independent in each of $\text{Bern}(q)^{\otimes n \times n}$ and $\mathcal{M}(n, S, p, q)$, this implies that $(M_1, M_2, \dots, M_t) \sim \left( \textnormal{Bern}(Q)^{\otimes n \times n} \right)^{\otimes t}$ if $M \sim \textnormal{Bern}(q)^{\otimes n \times n}$ and $(M_1, M_2, \dots, M_t) \sim \mathcal{M}(n, S, P, Q)^{\otimes t}$ if $M \sim \mathcal{M}(n, S, p, q)$, completing the proof of the lemma.
\end{proof}

\subsection{Embedding Adjacency Matrices as a Principal Minors and Gaussianization}

\begin{figure}[t!]
\begin{algbox}
\textbf{Algorithm} \textsc{To-Bernoulli-Submatrix}

\vspace{1mm}

\textit{Inputs}: Graph $G \in \mG_n$, edge probabilities $0 < q < p \le 1$ with $q = n^{-O(1)}$ and target dimension $N \ge \left(\frac{p}{Q} + \Omega(1) \right)n$ where $Q = 1 - \sqrt{(1 - p)(1 - q)} + \mathbf{1}_{\{p = 1\}} \left( \sqrt{q} - 1 \right)$
\begin{enumerate}
\item Apply $\textsc{Graph-Clone}$ to $G$ with edge probabilities $P = p$ and $Q = 1 - \sqrt{(1 - p)(1 - q)} + \mathbf{1}_{\{p = 1\}} \left( \sqrt{q} - 1 \right)$ and $t = 2$ clones to obtain $(G_1, G_2)$
\item Sample $s_1 \sim \text{Bin}(n, p)$, $s_2 \sim \text{Bin}(N, Q)$ and a set $S \subseteq [N]$ with $|S| = n$ uniformly at random. Sample $T_1 \subseteq S$ and $T_2 \subseteq [N] \backslash S$ with $|T_1| = s_1$ and $|T_2| = \max\{ s_2 - s_1, 0 \}$ uniformly at random. Now form the matrix $M^1 \in \{0, 1\}^{N \times N}$ where
$$M^1_{ij} = \left\{ \begin{array}{ll} \mathbf{1}_{\{\pi(i), \pi(j)\} \in E(G_1)} & \text{if } i < j \text{ and } i, j \in S \\ \mathbf{1}_{\{\pi(i), \pi(j)\} \in E(G_2)} & \text{if } i > j \text{ and } i, j \in S \\ \mathbf{1}_{\{ i \in T_1 \}} & \text{if } i = j \text{ and } i, j \in S \\ \mathbf{1}_{\{i \in T_2\}} & \text{if } i = j \text{ and } i, j \not \in S \\ \sim_{\text{i.i.d.}} \text{Bern}(Q) & \text{if } i \neq j \text{ and } i \not \in S \text{ or } j \not \in S \end{array} \right.$$
where $\pi : S \to [n]$ is a bijection chosen uniformly at random
\item Output the matrix $M$
\end{enumerate}
\vspace{1mm}

\end{algbox}
\caption{Subroutine $\textsc{To-Bernoulli-Submatrix}$ for mapping from a planted dense subgraph to a Bernoulli submatrix problem.}
\label{fig:tosubmatrix}
\end{figure}

We now describe $\textsc{To-Bernoulli-Submatrix}$, which is shown in \ref{fig:tosubmatrix}. This reduction is the first two steps of $\textsc{To-Submatrix}$ in \cite{brennan2019universality}. These steps: (1) clone the upper half of the adjacency matrix of the input graph problem to produce an independent lower half; and (2) plant diagonal entries while randomly embedding this matrix as a principal minor into a larger matrix to hide the planted diagonal entries in total variation. The first part of the proof of Theorem 5.1 from \cite{brennan2019universality} yields the total variation guarantees for $\textsc{To-Bernoulli-Submatrix}$, stated in the lemma below. We remark that reductions to submatrix problems prior to \cite{brennan2019universality} either: (1) hid planted diagonals in total variation by randomly permuting column indices without growing the dimension of the submatrix problem and thus forfeited the symmetry of the submatrix problem; or (2) required the $\pr{pc}$ conjecture for $k = n^{1/2 - \epsilon}$ and used an averaging trick to produce the diagonal entries. The diagonal planting trick in Step 2 of $\textsc{To-Bernoulli-Submatrix}$ maintains the symmetry of the submatrix problem without requiring this strong form of the $\pr{pc}$ conjecture. This will be crucial in our reduction to sparse PCA in Section \ref{sec:wishartreduction}.

\begin{lemma}[Reducing to Bernoulli Submatrix Problems] \label{lem:submatrix}
Let $0 < q < p \le 1$ be such that $q = n^{-O(1)}$ and let $N \ge \left(\frac{p}{Q} + \epsilon \right)n$ where $Q = 1 - \sqrt{(1 - p)(1 - q)} + \mathbf{1}_{\{p = 1\}} \left( \sqrt{q} - 1 \right)$ and $\epsilon > 0$. Let $k \le n$ satisfy
$$k \le \frac{Q \epsilon n}{2} \quad \text{and} \quad \frac{k^2}{N} \le \min \left\{ \frac{Q}{1 - Q}, \frac{1 - Q}{Q} \right\}$$
The algorithm $\mathcal{A} = \textsc{To-Bernoulli-Submatrix}$ runs in $\textnormal{poly}(N)$ time and satisfies that
\begin{align*}
\TV\left( \mathcal{A}(\mG(n, k, p, q)), \, \mathcal{M}\left(N, k, p, Q \right) \right) &\le 4\cdot \exp\left( - \frac{Q^2 \epsilon^2 n^2}{16N} \right) \\
&\quad \quad + \sqrt{\frac{k^2(1 - Q)}{2NQ}} + \sqrt{\frac{k^2Q}{2N(1 - Q)}} \\
\TV\left( \mathcal{A}(\mG(n, q)), \, \textnormal{Bern}\left( Q \right)^{\otimes N \times N} \right) &\le 4 \cdot \exp\left( - \frac{Q^2 \epsilon^2 n^2}{16N} \right)
\end{align*}
\end{lemma}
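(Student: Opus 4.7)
The plan is to analyze $\mathcal{A} = \textsc{To-Bernoulli-Submatrix}$ as the composition of its two steps via Lemma \ref{lem:tvacc}, following the outline of the first two steps of the reduction proving Theorem 5.1 in \cite{brennan2019universality}. For Step 1, I would invoke Lemma \ref{lem:graphcloning} with $(P, Q, t) = (p, Q, 2)$ and the stated $Q = 1 - \sqrt{(1-p)(1-q)} + \mathbf{1}_{\{p=1\}}(\sqrt{q}-1)$. Both cloning inequalities are satisfied: $\frac{1-p}{1-q} \le \bigl(\frac{1-P}{1-Q}\bigr)^t$ holds with equality when $p < 1$ by construction of $Q$, while $(P/Q)^t \le p/q$ holds with equality when $p = 1$ (where $Q = \sqrt{q}$) and by direct substitution otherwise. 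Lemma \ref{lem:graphcloning} then introduces \emph{zero} TV error: under $H_0$ we obtain $(G_1, G_2) \sim \mG(n, Q)^{\otimes 2}$ exactly, and under $H_1$, conditional on the planted vertex set $T$, $(G_1, G_2) \sim \mG(n, T, p, Q)^{\otimes 2}$ exactly.

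For Step 2, I would first observe that the off-diagonal entries of $M^1$ match the target distribution exactly. The upper triangle of $M^1|_{S \times S}$ comes from $G_1$ and the lower from $G_2$; because $G_1$ and $G_2$ are independent after cloning and the entries outside $S \times S$ are drawn i.i.d.\ $\textnormal{Bern}(Q)$, all off-diagonal entries are jointly distributed as independent Bernoullis with the correct marginals. The uniform random bijection $\pi : S \to [n]$ together with the uniform choice of $S \subseteq [N]$ makes $\pi^{-1}(T)$ marginally uniform in $\binom{[N]}{k}$, so the joint off-diagonal distribution of $M^1$ matches $\textnormal{Bern}(Q)^{\otimes N \times N}$ under $H_0$ and the off-diagonal part of $\mathcal{M}(N, k, p, Q)$ under $H_1$ with no TV error. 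All TV cost is therefore concentrated in the diagonal construction. Under $H_0$, define the good event $E = \{s_1 \le s_2\}$: on $E$, the total diagonal count equals $s_2 \sim \textnormal{Bin}(N, Q)$, and the allocation of those ones into $S$ versus $[N] \setminus S$ can be coupled to the target i.i.d.\ $\textnormal{Bern}(Q)$ diagonal (modulo a hypergeometric-versus-binomial swap absorbed into the constant factor). By Hoeffding applied to $s_2 - s_1$, which is a sum of $n + N$ bounded independent increments with mean $NQ - np \ge \epsilon Q n$ via the hypothesis $N \ge (p/Q + \epsilon)n$, we obtain $\bP[E^c] \le 2\exp\bigl(-\Omega(Q^2 \epsilon^2 n^2 / N)\bigr)$, yielding the stated $H_0$ bound.

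Under $H_1$ the same Hoeffding contribution persists, and I would add an extra TV term coming from reconciling the algorithm's diagonal on $S$ with the target mixture on $\mathcal{M}(N, k, p, Q)$, which places $\textnormal{Bern}(p)$ on the $k$ planted positions $\pi^{-1}(T) \subseteq S$ and $\textnormal{Bern}(Q)$ elsewhere. A Pinsker bound applied to the KL divergence between these two $k$-coordinate product-Bernoulli-like distributions, using the hypothesis $k^2/N \le \min\{Q/(1-Q), (1-Q)/Q\}$ to keep the KL bounded, produces exactly the two square-root terms $\sqrt{k^2(1-Q)/(2NQ)} + \sqrt{k^2 Q/(2N(1-Q))}$. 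The main obstacle is this diagonal matching under $H_1$: unlike earlier reductions that destroyed the symmetry of the submatrix output by permuting columns of the $n$-dimensional problem, here one must simultaneously maintain symmetry and hide the $k$ planted diagonal entries within the bulk. The key mechanism is that growing the ambient dimension from $n$ to $N \ge (p/Q + \epsilon) n$ creates exactly the statistical slack needed for the diagonal planting trick to succeed within the claimed TV budget, which in turn is what allows the subsequent reduction to sparse PCA to use only the weaker form of the $\pr{pc}$ conjecture.
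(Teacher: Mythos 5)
The paper does not actually reprove this lemma (it imports it from the first part of the proof of Theorem 5.1 in \cite{brennan2019universality}), so there is no in-paper argument to compare line by line; judged on its own, your reconstruction gets the skeleton right (cloning with $P=p$, $Q=1-\sqrt{(1-p)(1-q)}$ and $t=2$ is exact and the parameter inequalities do hold, the off-diagonal entries match the target exactly after averaging over $S$ and $\pi$, and the $H_0$ exponential term does come from $\bP[s_1>s_2]$ via Hoeffding, since $NQ-np\ge \epsilon Qn$). But the step that makes the diagonal-planting trick work is missing, and the substitute you offer is false as stated. Conditionally on the embedding set $S$, the algorithm places $s_1\sim\textnormal{Bin}(n,p)$ diagonal ones inside $S$, whereas an i.i.d.\ $\textnormal{Bern}(Q)$ diagonal restricted to a fixed $n$-set has $\textnormal{Bin}(n,Q)$ ones; since $p-Q$ is bounded away from $0$, these counts differ by $\Theta(n)$ and no coupling exists ``modulo a hypergeometric-versus-binomial swap absorbed into the constant factor'' --- conditionally on $S$ the two diagonal laws are at total variation distance $1-o(1)$. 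The correct mechanism is that $S$ and $\pi$ are uniformly random and unobserved: after marginalizing them out, the diagonal law is exchangeable (under $H_0$ over all of $[N]$; under $H_1$, conditionally on the planted set $R=\pi^{-1}(T)$, separately over $R$ and over $[N]\setminus R$, with the diagonal conditionally independent of the off-diagonals given $R$). An exchangeable law conditioned on its count is exactly uniform, so the whole comparison collapses to a comparison of count distributions: under $H_0$, $\max(s_1,s_2)$ versus $\textnormal{Bin}(N,Q)$, which differ only on $\{s_1>s_2\}$; under $H_1$, the pair (number of diagonal ones on $R$, number off $R$) versus $\textnormal{Bin}(k,p)\otimes\textnormal{Bin}(N-k,Q)$. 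Without this symmetrization step neither stated bound follows from your argument.

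Relatedly, your account of the two square-root terms is dimensionally off. You attribute them to Pinsker applied to a KL between two ``$k$-coordinate product-Bernoulli-like distributions,'' but a KL over $k$ coordinates with constant parameter separation scales like $k$ and would give a bound of order $\sqrt{k}$, not $k/\sqrt{N}$. In fact the algorithm's diagonal restricted to the planted positions already has exactly a $\textnormal{Bin}(k,p)$ count (a uniform $s_1$-subset of $S$ with $s_1\sim\textnormal{Bin}(n,p)$ is an i.i.d.\ $\textnormal{Bern}(p)$ labelling of $S$), so there is no discrepancy there to pay for. The terms $\sqrt{k^2(1-Q)/2NQ}+\sqrt{k^2Q/2N(1-Q)}$ arise from the count comparison on the length-$N$ (more precisely, length-$(N-k)$) remainder of the diagonal: on the good event the algorithm's off-$R$ count is $s_2-a$ with $s_2\sim\textnormal{Bin}(N,Q)$, versus the target's $\textnormal{Bin}(N-k,Q)$, a mean mismatch of size $O(k)$ diluted by fluctuations of scale $\sqrt{NQ(1-Q)}$; the hypothesis $k^2/N\le\min\{Q/(1-Q),(1-Q)/Q\}$ just keeps these terms below a constant. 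So the gap is concrete: you need the exchangeability/marginalization-over-$(S,\pi)$ reduction to count distributions, and then the binomial-shift comparison on the non-planted diagonal, neither of which is supplied by the coupling and Pinsker steps as you describe them.
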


To describe the next subroutine $\textsc{Gaussianize}$, we first will need the rejection kernel framework introduced in \cite{brennan2018reducibility}. The next lemma captures the total variation guarantees of the Gaussian rejection kernels shown in Figure \ref{fig:rej-kernel}.

\begin{figure}[t!]
\begin{algbox}
\textbf{Algorithm} $\textsc{rk}_G(\mu, B)$

\vspace{2mm}

\textit{Parameters}: Input $B \in \{0, 1\}$, Bernoulli probabilities $0 < q < p \le 1$, Gaussian mean $\mu$, number of iterations $N$, let $\varphi_\mu(x) = \frac{1}{\sqrt{2\pi}} \cdot \exp\left(- \frac{1}{2}(x - \mu)^2 \right)$ denote the density of $\mN(\mu, 1)$
\begin{enumerate}
\item Initialize $z \gets 0$
\item Until $z$ is set or $N$ iterations have elapsed:
\begin{enumerate}
\item[(1)] Sample $z' \sim \mN(0, 1)$ independently
\item[(2)] If $B = 0$, if the condition
$$p \cdot \varphi_0(z') \ge q \cdot \varphi_{\mu}(z')$$
holds, then set $z \gets z'$ with probability $1 - \frac{q \cdot \varphi_\mu(z')}{p \cdot \varphi_0(z')}$
\item[(3)] If $B = 1$, if the condition
$$(1 - q) \cdot \varphi_\mu(z' + \mu) \ge (1 - p) \cdot \varphi_0(z' + \mu)$$
holds, then set $z \gets z' + \mu$ with probability $1 - \frac{(1 - p) \cdot \varphi_0(z' + \mu)}{(1 - q) \cdot \varphi_\mu(z' + \mu)}$
\end{enumerate}
\item Output $z$
\end{enumerate}
\vspace{1mm}
\textbf{Algorithm} $\textsc{Gaussianize}$

\vspace{2mm}

\textit{Parameters}: Matrix $M \in \{0, 1\}^{N \times N}$, Bernoulli probabilities $0 < Q < P \le 1$ with $Q = n^{-O(1)}$ and a target mean matrix $0 \le \mu_{ij} \le \tau$ where $\tau > 0$ is a parameter
\begin{enumerate}
\item Form the matrix $X \in \mathbb{R}^{N \times N}$ by setting
$$X_{ij} \gets \textsc{rk}_{G}(\mu_{ij}, M_{ij})$$
for each $i, j \in [N]$ where each $\textsc{rk}_{G}$ is run with $N_{\text{it}} = \lceil 6\delta^{-1} \log N \rceil$ iterations where $\delta = \min \left\{ \log \left( \frac{P}{Q} \right), \log \left( \frac{1 - Q}{1 - P} \right) \right\}$
\item Output the matrix $X$
\end{enumerate}
\vspace{1mm}
\end{algbox}
\caption{Gaussian instantiation of the rejection kernel algorithm from \cite{brennan2018reducibility} and the reduction $\textsc{Gaussianize}$ for mapping from Bernoulli to Gaussian planted submatrix problems.}
\label{fig:rej-kernel}
\end{figure}

\begin{lemma}[Lemma 5.4 in \cite{brennan2018reducibility}] \label{lem:5c}
Let $n$ be a parameter and suppose that $p = p(n)$ and $q = q(n)$ satisfy that $0 < q < p \le 1$, $\min(q, 1 - q) = \Omega(1)$ and $p - q \ge n^{-O(1)}$. Let $\delta = \min \left\{ \log \left( \frac{p}{q} \right), \log \left( \frac{1 - q}{1 - p} \right) \right\}$. Suppose that $\mu = \mu(n) \in (0, 1)$ satisfies that
$$\mu \le \frac{\delta}{2 \sqrt{6\log n + 2\log (p-q)^{-1}}}$$
Then the map $\textsc{rk}_{\text{G}}$ with $N = \left\lceil 6\delta^{-1} \log n \right\rceil$ iterations can be computed in $\text{poly}(n)$ time and satisfies
$$\TV\left(\textsc{rk}_{\text{G}}(\mu, \textnormal{Bern}(p)), \mN(\mu, 1) \right) = O(n^{-3}) \quad \text{and} \quad \TV\left(\textsc{rk}_{\text{G}}(\mu, \textnormal{Bern}(q)), \mN(0, 1) \right) = O(n^{-3})$$
\end{lemma}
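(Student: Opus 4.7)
The plan is to view $\textsc{rk}_G$ as truncated rejection sampling with formal targets
$$\tilde\rho_0(x) = \frac{p\varphi_0(x) - q\varphi_\mu(x)}{p-q}, \qquad \tilde\rho_1(x) = \frac{(1-q)\varphi_\mu(x) - (1-p)\varphi_0(x)}{p-q},$$
which by direct algebra satisfy the mixture identities $(1-p)\tilde\rho_0 + p\tilde\rho_1 = \varphi_\mu$ and $(1-q)\tilde\rho_0 + q\tilde\rho_1 = \varphi_0$. These are exactly what is needed for the two claims of the lemma. The obstruction is that $\tilde\rho_0$ is negative on $\{x > x_0\}$ with $x_0 = (\log(p/q)+\mu^2/2)/\mu$, and $\tilde\rho_1$ is negative on $\{x<x_1\}$ with $x_1 = (-\log((1-q)/(1-p))+\mu^2/2)/\mu$, so one cannot sample literally from these formal densities; the rejection step in $\textsc{rk}_G$ instead samples from truncations supported on $\{x \le x_0\}$ and $\{x \ge x_1\}$.

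First I would unroll the algorithm: in the $B = 0$ branch each iteration accepts a proposal $z' \le x_0$ with density $(p\varphi_0 - q\varphi_\mu)_+/p$, so the per-iteration acceptance probability is $A_0 = (p-q+\epsilon_0)/p$ where $\epsilon_0 = \int_{x > x_0}(q\varphi_\mu - p\varphi_0)\,dx$, and conditional on acceptance the output has density $(p\varphi_0 - q\varphi_\mu)_+/(p-q+\epsilon_0)$. An analogous statement holds for the $B = 1$ branch with $A_1 = (p-q+\epsilon_1)/(1-q)$ and tail mass $\epsilon_1 = \int_{x<x_1}((1-p)\varphi_0 - (1-q)\varphi_\mu)\,dx$. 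The actual output of $\textsc{rk}_G(\mu, B)$ is therefore a mixture of this truncated acceptance distribution and a point mass at $0$ of weight $(1-A_B)^N$. Averaging over $B \sim \textnormal{Bern}(p)$ and comparing to $\varphi_\mu$, the mixture of the truncated acceptance distributions agrees with $\varphi_\mu$ exactly on $\{x_1 \le x \le x_0\}$ and differs from it on the tails by an $L^1$ mass of order $(\epsilon_0 + \epsilon_1)/(p-q)$.

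Next I would bound the two error sources using the hypothesis on $\mu$. Using $x_0 - \mu \ge \delta/(2\mu)$, which holds whenever $\mu^2 \le \delta$, together with the Gaussian tail bound $1 - \Phi(t) \le e^{-t^2/2}$ gives $\epsilon_0 \le q\,e^{-\delta^2/(8\mu^2)}$; plugging in $\mu \le \delta/(2\sqrt{6\log n + 2\log(p-q)^{-1}})$ then yields $\epsilon_0 \le q(p-q)\cdot n^{-3}$, so $\epsilon_0/(p-q) = O(n^{-3})$, and the same computation controls $\epsilon_1$. For the failure mass, $1 - A_0 \le q/p \le e^{-\delta}$ (and analogously for $A_1$), so with $N = \lceil 6\delta^{-1}\log n \rceil$ iterations the probability of never accepting is at most $e^{-N\delta} \le n^{-6}$. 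A single triangle inequality then delivers the claimed $O(n^{-3})$ TV bounds for $\textsc{rk}_G(\mu, \textnormal{Bern}(p))$ versus $\mN(\mu,1)$ and for $\textsc{rk}_G(\mu, \textnormal{Bern}(q))$ versus $\mN(0,1)$.

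The main obstacle is the bookkeeping in the second paragraph: one must verify that the renormalization denominator $p-q+\epsilon_0$ in the truncated acceptance density (in place of the ideal $p-q$) only contributes an $O(\epsilon_0/(p-q))$ perturbation, which is absorbed into the same Gaussian tail bound, and that the tail deficit relative to $\varphi_\mu$ on $\{x > x_0\} \cup \{x < x_1\}$ is captured exactly by the integrals $\epsilon_0,\epsilon_1$ rather than a larger quantity. Once this identity is in hand, the remaining steps are a standard Gaussian tail estimate and a geometric failure-probability bound.
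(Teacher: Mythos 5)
Your proposal is correct and follows essentially the same route as the proof the paper relies on (Lemma 5.4 of \cite{brennan2018reducibility}, sketched just after the lemma statement): exploit the two mixture identities, compare the accepted output to the Gaussian restricted to the region $x_1 \le x \le x_0$ where the likelihood-ratio constraints hold, control the excluded tail mass of order $(\epsilon_0+\epsilon_1)/(p-q)$ via the Gaussian tail bound together with the hypothesis on $\mu$, and bound the never-accept probability by $(e^{-\delta})^{N} \le n^{-6}$. The only point left implicit is the $\mathrm{poly}(n)$ runtime, which needs the one-line observation that $\delta = \Omega(p-q) \ge n^{-O(1)}$ (using $\min(q,1-q) = \Omega(1)$), so the iteration count $N = \lceil 6\delta^{-1}\log n\rceil$ is polynomial in $n$.
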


In contrast with \cite{brennan2018reducibility}, we apply $\textsc{rk}_G$ when $\mu$ is chosen to be random. We will always apply the lemma conditioned on the value of $\mu$ and hence only require it for deterministic $\mu$. We remark that, throughout the paper, we will use the notation $\textsc{rk}_G(B)$ to denote the random variable output by a run of the procedure in \ref{fig:rej-kernel} using independently generated randomness. The proof of the lemma consists of showing that the outputs of $\pr{rk}_G(\text{Bern}(p))$ and $\pr{rk}_G(\text{Bern}(q))$ are close to $\mN(\mu, 1)$ and $\mN(0, 1)$ conditioned to lie in the set of $x$ with $\frac{1 - p}{1 - q} \le \frac{\varphi_\mu(x)}{\varphi_0(x)} \le \frac{p}{q}$ and then showing that this event occurs with probability close to one. We now present $\textsc{Gaussianize}$, shown in Figure \ref{fig:rej-kernel}, which maps a planted submatrix problem with Bernoulli entries to one with Gaussian entries. This reduction allows for inhomogeneous means $\mu_{ij}$ in the planted component.

\begin{lemma}[Gaussianization] \label{lem:gaussianize}
Given a parameter $N$, let $0 < Q < P \le 1$ be such that $P - Q = N^{-O(1)}$ and $\min(Q, 1 - Q) = \Omega(1)$, let $\mu_{ij}$ be such that $0 \le \mu_{ij} \le \tau$ for each $i, j \in [N]$ where the parameter $\tau > 0$ satisfies that
$$\tau \le \frac{\delta}{2 \sqrt{6\log N + 2\log (P - Q)^{-1}}} \quad \text{where} \quad \delta = \min \left\{ \log \left( \frac{P}{Q} \right), \log \left( \frac{1 - Q}{1 - P} \right) \right\}$$
The algorithm $\mathcal{A} = \textsc{Gaussianize}$ runs in $\textnormal{poly}(N)$ time and satisfies that
\begin{align*}
\TV\left( \mathcal{A}(\mathcal{M}(n, S, P, Q)), \, \mu \circ \mathbf{1}_S \mathbf{1}_S^\top + \mN(0, 1)^{\otimes N \times N} \right) &= O(N^{-1}) \\
\TV\left( \mathcal{A}\left(\textnormal{Bern}(Q)^{\otimes N \times N}\right), \, \mN(0, 1)^{\otimes N \times N} \right) &= O(N^{-1})
\end{align*}
for all subsets $S \subseteq [N]$ where $\circ$ denote the entrywise Hadamard product between two matrices.
\end{lemma}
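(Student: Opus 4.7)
The plan is to view $\textsc{Gaussianize}$ as applying the Gaussian rejection kernel $\textsc{rk}_G$ independently to each of the $N^2$ entries of the input matrix, and then combine a per-entry bound from Lemma \ref{lem:5c} with the tensorization inequality (part 3 of Fact \ref{tvfacts}). The key observation is that, by construction, each invocation $\textsc{rk}_G(\mu_{ij}, M_{ij})$ uses independently generated Gaussian samples, so conditional on $M$ the output entries $X_{ij}$ are mutually independent. Thus it suffices to bound the per-entry total variation and sum the contributions.

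First I would verify that the hypotheses of Lemma \ref{lem:5c} apply to each entry with size parameter $n = N$ and Bernoulli parameters $(p,q) = (P,Q)$. The paper's assumptions $\min(Q, 1-Q) = \Omega(1)$ and $P - Q = N^{-O(1)}$ match the lemma's conditions, and the number of iterations $N_{\text{it}} = \lceil 6 \delta^{-1} \log N \rceil$ in $\textsc{Gaussianize}$ is exactly the iteration count required by Lemma \ref{lem:5c}. For each entry the relevant mean $\mu_{ij}$ lies in $[0, \tau]$, and the hypothesis on $\tau$ coincides verbatim with the bound on $\mu$ in Lemma \ref{lem:5c}. Since the guarantee of Lemma \ref{lem:5c} is uniform in $\mu$ in the range it allows, the lemma yields the per-entry bounds
\[
\TV\!\left( \textsc{rk}_G(\mu_{ij}, \text{Bern}(P)), \, \mN(\mu_{ij}, 1) \right) = O(N^{-3}), \qquad
\TV\!\left( \textsc{rk}_G(\mu_{ij}, \text{Bern}(Q)), \, \mN(0, 1) \right) = O(N^{-3}),
\]
applied pointwise at each deterministic $\mu_{ij}$.

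Next I would assemble the two claims of the lemma. In the planted case with input $M \sim \mathcal{M}(N, S, P, Q)$, the entries $M_{ij}$ are mutually independent with $M_{ij} \sim \text{Bern}(P)$ for $(i,j) \in S \times S$ and $M_{ij} \sim \text{Bern}(Q)$ otherwise. Since the rejection kernels use independent randomness, the induced joint distribution of $X$ is a product over $i, j \in [N]$. Applying the tensorization inequality to the $N^2$ coordinates, the per-entry $O(N^{-3})$ bound aggregates to
\[
\TV\!\left( \mathcal{A}(\mathcal{M}(N, S, P, Q)), \; \mu \circ \mathbf{1}_S \mathbf{1}_S^\top + \mN(0, 1)^{\otimes N \times N} \right) \le N^2 \cdot O(N^{-3}) = O(N^{-1}),
\]
where the target distribution decomposes as a product whose $(i,j)$ marginal is $\mN(\mu_{ij}, 1)$ on $S \times S$ and $\mN(0, 1)$ elsewhere, matching the rejection kernel targets above. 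The null case is identical with $P$ replaced by $Q$ everywhere, giving the second bound. Polynomial runtime follows since each of the $N^2$ rejection kernels runs in $\text{poly}(N)$ time by Lemma \ref{lem:5c}.

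There is no real obstacle beyond this tensorization pattern; the only technical care point is that Lemma \ref{lem:5c} is stated for a fixed deterministic $\mu$, so one should emphasize that the bound is applied entrywise at the deterministic values $\mu_{ij}$ rather than to any randomized choice. The exponent $-3$ in Lemma \ref{lem:5c}, chosen so that $N^2 \cdot N^{-3} = N^{-1}$, is precisely what makes the union over the $N^2$ entries affordable; this is why $\textsc{Gaussianize}$ runs the rejection kernel for $\lceil 6\delta^{-1}\log N\rceil$ iterations rather than fewer.
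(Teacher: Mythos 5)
Your proposal is correct and mirrors the paper's own proof: the paper likewise applies Lemma \ref{lem:5c} entrywise at the deterministic values $\mu_{ij}$ and then sums the $O(N^{-3})$ per-entry bounds over the $N^2$ independent coordinates via the tensorization property of Fact \ref{tvfacts}, with the null case handled by taking $S = \emptyset$.
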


\begin{proof}
Fix a subset $S \subseteq [N]$ and let $X = \mathcal{A}(M)$ where $M \sim \mathcal{M}(n, S, P, Q))$. Applying Lemma \ref{lem:5c} yields that
\begin{align*}
\TV\left(\textsc{rk}_{\text{G}}(\mu_{ij}, \textnormal{Bern}(P)), \mN(\mu_{ij}, 1) \right) &= O(N^{-3}) \\
\TV\left(\textsc{rk}_{\text{G}}(\mu_{ij}, \textnormal{Bern}(Q)), \mN(0, 1) \right) &= O(N^{-3})
\end{align*}
Since $X$ has independent entries, the tensorization property of Fact \ref{tvfacts} implies that
\begin{align*}
\TV\left( \mL(X), \, \mu \circ \mathbf{1}_S \mathbf{1}_S^\top + \mN(0, 1)^{\otimes N \times N} \right) &\le \sum_{i, j = 1}^N \TV\left( \mL(X_{ij}), \mN(\mu_{ij} \mathbf{1}_{i, j \in S}, 1) ) \right) \\
&= \sum_{(i, j) \in S^2} \TV\left(\textsc{rk}_{\text{G}}(\mu_{ij}, \textnormal{Bern}(P)), \mN(\mu_{ij}, 1) \right) \\
&\quad \quad + \TV\left(\textsc{rk}_{\text{G}}(\mu_{ij}, \textnormal{Bern}(Q)), \mN(0, 1) \right) \\
&= O(N^{-1})
\end{align*}
which proves the first inequality in the statement of the lemma. The second inequality follows by the same argument with $S = \emptyset$.
\end{proof}

\subsection{$\chi^2$ Random Rotations}

The next lemma provides the guarantees for $\chi^2$-\textsc{Random-Rotations}, which will be used as a crucial subroutine to produce the cross-term matrices in the main reduction to sparse PCA in Section \ref{sec:wishartreduction} and in the subsampling reduction to sparse PCA in Section \ref{sec:randomrotations}. The analysis of this reduction is similar to $\textsc{Random-Rotations}$ in \cite{brennan2018reducibility} but avoids using the Gaussian variant of finite de Finetti by planting entries with $\chi^2$-distributed means. Let $v_S = \frac{1}{\sqrt{k}} \cdot \mathbf{1}_S \in \mathbb{R}^m$ for each $S \subseteq [m]$ of size $k$.

\begin{figure}[t!]
\begin{algbox}
\textbf{Algorithm} $\chi^2\textsc{-Random-Rotation}$

\vspace{1mm}

\textit{Inputs}: Matrix $M \in \{0, 1\}^{N \times N}$, Bernoulli probabilities $0 < Q < P \le 1$, planted subset size $k$ that divides $N$ and a parameter $\tau > 0$
\begin{enumerate}
\item Sample $r_1, r_2, \dots, r_N \sim_{\text{i.i.d.}} \sqrt{\chi^2(N/k)}$ and truncate the $r_j$ with $r_j \gets \min\left\{ r_j, 2 \sqrt{N/k} \right\}$ for each $j \in [N]$
\item Update $M \gets \textsc{Gaussianize}(M)$ applied with Bernoulli probabilities $P$ and $Q$, parameter $\tau$ and target mean values $\mu_{ij} = \frac{1}{2} \tau \cdot r_j \cdot \sqrt{k/N}$ for each $i, j \in [N]$
\item Sample an orthogonal matrix $R \in \mathbb{R}^{N\times N}$ from the Haar measure on the orthogonal group $\mathcal{O}_N$ and output the columns of the matrix $MR$
\end{enumerate}
\vspace{1mm}

\end{algbox}
\caption{Subroutine $\chi^2\textsc{-Random-Rotation}$ for random rotations to instances of Sparse PCA.}
\label{fig:chisrr}
\end{figure}

\begin{lemma}[$\chi^2$ Random Rotations] \label{lem:randomrotations}
Given a parameter $N$, let $0 < Q < P \le 1$ be such that $P - Q = N^{-O(1)}$ and $\min(Q, 1 - Q) = \Omega(1)$, let $k \le N$ be such that $k$ divides $N$ and let $\tau > 0$ be such that
$$\tau \le \frac{\delta}{2 \sqrt{6\log N + 2\log (P - Q)^{-1}}} \quad \text{where} \quad \delta = \min \left\{ \log \left( \frac{P}{Q} \right), \log \left( \frac{1 - Q}{1 - P} \right) \right\}$$
The algorithm $\mathcal{A} = \chi^2\textsc{-Random-Rotation}$ runs in $\textnormal{poly}(N)$ time and satisfies that
\begin{align*}
\TV\left( \mathcal{A}(\mathcal{M}(n, S, P, Q)), \, \mN\left(0, I_N + \frac{\tau^2 k^2}{4N} \cdot v_S v_S^\top \right)^{\otimes N} \right) &\le O(N^{-1}) + k(4e^{-3})^{N/2k} \\
\TV\left( \mathcal{A}\left( \textnormal{Bern}(Q)^{\otimes N \times N}\right), \, \mN(0, 1)^{\otimes N \times N} \right) &= O(N^{-1})
\end{align*}
for all subsets $S \subseteq [N]$ with $|S| = k$.
\end{lemma}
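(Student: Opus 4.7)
The plan is to track the distribution of the $N \times N$ matrix through the three steps of the algorithm, bounding the total variation incurred at each step and combining via the triangle inequality. Conditional on the truncated vector $r = (r_1, \ldots, r_N)$, the target means $\mu_{ij} = \tfrac{\tau}{2}\, r_j \sqrt{k/N}$ passed to \textsc{Gaussianize} are deterministic and satisfy $0 \le \mu_{ij} \le \tau$, since $r_j \le 2\sqrt{N/k}$ after truncation. Applying Lemma~\ref{lem:gaussianize} conditionally on $r$ and integrating shows that the matrix $M$ produced at the end of Step~2 is within total variation $O(N^{-1})$ of $\sqrt{\theta}\, v_S w^\top + G$ in the planted case and of $G$ alone in the null case, where $\theta = \tau^2 k^2/(4N)$, $w_j = r_j \mathbf{1}_{j \in S}$, and $G \sim \mN(0,1)^{\otimes N \times N}$ is independent of $(r, R)$. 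Here the identity $\mu \circ \mathbf{1}_S \mathbf{1}_S^\top = \sqrt{\theta}\, v_S w^\top$ follows from $\mathbf{1}_S = \sqrt{k}\, v_S$ together with the definition of $\mu_{ij}$.

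After Step~3 the output becomes $\sqrt{\theta}\, v_S (R^\top w)^\top + GR$ in the planted case and $GR$ in the null case. Orthogonal invariance of the i.i.d.\ standard Gaussian law immediately yields $GR \sim \mN(0,1)^{\otimes N \times N}$ exactly, which settles the null case. Moreover, since the conditional law of $GR$ given $R$ does not depend on $R$, the noise $GR$ is in fact independent of the tuple $(r, w, R)$ and hence of $R^\top w$. For the planted case, the plan is to represent the target $\mN(0, I_N + \theta v_S v_S^\top)^{\otimes N}$ as $\sqrt{\theta}\, v_S z^\top + G'$ with $z \sim \mN(0, I_N)$ and $G' \sim \mN(0,1)^{\otimes N \times N}$ independent: the $j$th column $\sqrt{\theta}\, z_j v_S + G'_{\cdot j}$ is then distributed as $\mN(0, I_N + \theta v_S v_S^\top)$, and the columns are jointly independent by a direct cross-covariance check. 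Matching the target therefore reduces to bounding $\TV(\mL(R^\top w), \mN(0, I_N))$.

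This last step is the main obstacle and is the reason for the truncation in Step~1. Conditional on $w$, Haar invariance of $R$ makes $R^\top w$ uniform on the sphere of radius $\|w\|$ in $\mathbb{R}^N$. Letting $\tilde r_j \sim \sqrt{\chi^2(N/k)}$ be the untruncated versions and $\tilde w = \tilde r \circ \mathbf{1}_S$, the norm $\|\tilde w\|^2 = \sum_{j \in S} \tilde r_j^2$ is exactly $\chi^2(N)$-distributed as a sum of $k$ independent $\chi^2(N/k)$ variables (using that $k$ divides $N$), so $R^\top \tilde w$ has exactly the $\mN(0, I_N)$ law by the polar decomposition of the multivariate Gaussian. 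The truncation error is bounded by the probability that some $\tilde r_j$ with $j \in S$ exceeds $2\sqrt{N/k}$: the standard $\chi^2$ tail bound $\bP[\chi^2(d) \ge 4d] \le (4e^{-3})^{d/2}$ combined with a union bound over $j \in S$ yields $k(4e^{-3})^{N/(2k)}$. The data-processing inequality transfers this to $\TV(\mL(R^\top w), \mL(R^\top \tilde w))$, and combining with the $O(N^{-1})$ \textsc{Gaussianize} error through the triangle inequality produces the claimed bound.
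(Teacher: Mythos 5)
Your proposal is correct and follows essentially the same route as the paper's proof: condition on the truncated radii $r$, apply Lemma~\ref{lem:gaussianize} conditionally, use orthogonal invariance of the i.i.d.\ Gaussian matrix to decouple the rotated signal direction from the noise, and control the truncation via the $\chi^2$ tail bound $\bP[\chi^2(d) \ge 4d] \le (4e^{-3})^{d/2}$ over the $k$ coordinates in $S$. The only (cosmetic) difference is bookkeeping: you compare $R^\top w$ directly to $\mN(0, I_N)$ via the polar decomposition and a coupling on the exceedance event, whereas the paper re-randomizes the noise as $G = G'R_2$ and compares $\|u_S\|_2^2$ to $\chi^2(N)$ by tensorizing per-coordinate total variation bounds; both yield the same $O(N^{-1}) + k(4e^{-3})^{N/2k}$ error.
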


\begin{proof}
We begin by conditioning on all of $r = (r_1, r_2, \dots, r_N)$ generated in the first two steps $\mathcal{A}_{\text{1-2}}$ of $\mathcal{A}$. Almost surely for each $i \in [N]$, we have that $|r_i| \le 2 \sqrt{N/k}$ and therefore
$$\mu_{ij} = \frac{1}{2} \tau \cdot r_j \cdot \sqrt{k/N} \le \tau$$
for each $i, j \in [N]$. Consider conditioning on the event $r = r' = (r_1', r_2', \dots, r_N') \in [0, \tau]^N$. Let $D$ be the matrix
$$D = \frac{\tau}{2} \sqrt{\frac{k}{N}} \cdot \mathbf{1}_S u_S^\top + \mN(0, 1)^{\otimes N \times N}$$
where $v_S$ denotes the vector with $(u_S)_j = r_j'$ if $j \in S$ and $(u_S)_j = 0$ if $j \not \in S$. Applying Lemma \ref{lem:gaussianize} with $\mu_{ij} = \frac{1}{2} \tau \cdot r_j \cdot \sqrt{k/N}$ yields that
\begin{align*}
\TV\left( \mL(\mathcal{A}_{\text{1-2}}(\mathcal{M}(n, S, P, Q)) | r = r'), \, \mL(D | r = r') \right) &= O(N^{-1}) \\
\TV\left( \mL(\mathcal{A}_{\text{1-2}}\left( \textnormal{Bern}(Q)^{\otimes N \times N} \Big| r = r' \right), \, \mN(0, 1)^{\otimes N \times N} \right) &= O(N^{-1})
\end{align*}
Since the rows of $\mN(0, 1)^{\otimes N \times N}$ are isotropic, for any orthogonal matrix $R$ we have that $\mN(0, 1)^{\otimes N \times N}$ is identically distributed to $XR$ where $X \sim \mN(0, 1)^{\otimes N \times N}$. This implies that $\mL\left( X \right)$ can be written as $\mL\left( X R_2 \right)$ where $R_2 \sim \mu_{\mathcal{O}_N}$ is independent of $X$. Now observe that if $R_1 \sim \mu_{\mathcal{O}_N}$ is independent of $R_2$ then $R_2 R_1 \sim \mu_{\mathcal{O}_N}$ and importantly is independent of $R_1$. Combining these observations yields that if $R_1, R_2 \sim \mu_{\mathcal{O}_N}$ are independently sampled, then
\begin{align*}
\mL\left(D R_1 | r = r' \right) &= \mL\left( \frac{\tau}{2} \sqrt{\frac{k}{N}} \cdot \mathbf{1}_S u_S^\top R_1  + \mN(0, 1)^{\otimes N \times N} \cdot R_2 R_1 \Big| r = r' \right) \\
&= \mL\left(\frac{\tau}{2} \sqrt{\frac{k}{N}} \cdot \| u_S \|_2 \cdot \mathbf{1}_S u'^\top + \mN(0, 1)^{\otimes N \times N} \cdot R_3 \Big| r = r' \right) \\
&= \mL\left(\frac{\tau}{2} \sqrt{\frac{k}{N}} \cdot \| u_S \|_2 \cdot \mathbf{1}_S u'^\top + \mN(0, 1)^{\otimes N \times N} \Big| r = r' \right)
\end{align*}
where $R_3$ is an independent sample from $\mu_{\mathcal{O}_N}$ and $u'$ is a unit vector distributed uniformly on $\mathbb{S}^{N-1}$. Let $Y \sim \chi^2(N/k)$ and observe that for each $i \in [N]$, the $\chi^2$ tail bound in \cite{dasgupta2003elementary} stating $\bP\left[ \chi^2(m) > zm \right] \le (z e^{1 - z})^{m/2}$ for all $z > 1$ together with the conditioning property in Fact \ref{tvfacts} implies that
$$\TV\left( \mL(r_i), \sqrt{\chi^2(N/k)} \right) \le \bP\left[Y > 4N/k \right] \le (4e^{-3})^{N/2k}$$
By the tensorization property and data processing inequality in Fact \ref{tvfacts}, we now have
$$\TV\left( \mL(\| u_S \|_2^2), \chi^2(N) \right) \le \TV\left( \mL(u_S), \sqrt{\chi^2(N/k)}^{\otimes k} \right) \le \sum_{i \in S} \TV\left( \mL(r_i), \sqrt{\chi^2(N/k)} \right) \le k(4e^{-3})^{N/2k}$$
Now note that the isotropy of the distribution $\mN(0, I_N)$ implies that $\mL(w | \| w \|_2 = t)$ is the uniform distribution $\mL( t u')$ on the sphere $t \cdot \mathbb{S}^{N - 1}$ for each $t \ge 0$, where $w \sim \mN(0, I_N)$. Now note that $\| w \|_2$ is distributed as $\sqrt{\chi^2(N)}$ and therefore $\mL(w) = \mL(\sqrt{Z} \cdot u')$ where $Z \sim \chi^2(N)$ is independent of $u'$. We now marginalize over the conditioning on $r = r'$. By the data processing inequality in Fact \ref{tvfacts},
\begin{align*}
&\TV\left( \mL(\mathcal{A}_3(D)), \frac{\tau k}{2\sqrt{N}} \cdot v_S w^\top + \mN(0, 1)^{\otimes N \times N} \right) \\
&\quad \quad = \TV\left( \mL(DR_1), \frac{\tau k}{2\sqrt{N}} \cdot v_S w^\top + \mN(0, 1)^{\otimes N \times N} \right) \\
&\quad \quad = \TV\left(\mL\left( \frac{\tau}{2} \sqrt{\frac{k}{N}} \cdot \| u_S \|_2 \cdot \mathbf{1}_S u'^\top + \mN(0, 1)^{\otimes N \times N} \right), \mL\left( \frac{\tau}{2} \sqrt{\frac{k}{N}} \cdot \sqrt{Z} \cdot \mathbf{1}_S u'^\top + \mN(0, 1)^{\otimes N \times N} \right) \right) \\
&\quad \quad \le \TV\left( \mL(\| u_S \|_2^2), \mL(Z) \right) \le k(4e^{-3})^{N/2k}
\end{align*}
Now by Lemma \ref{lem:tvacc} and the conditioning property in Fact \ref{tvfacts}, we have
\begin{align*}
&\TV\left( \mathcal{A}(\mathcal{M}(n, S, P, Q), \frac{\tau k}{2\sqrt{N}} \cdot v_S w^\top + \mN(0, 1)^{\otimes N \times N} \right) \\
&\quad \quad \le \TV\left( \mathcal{A}_{\text{1-2}}(\mathcal{M}(n, S, P, Q)), \mL(D) \right) +  \TV\left( \mathcal{A}_3(D), \frac{\tau k}{2\sqrt{N}} \cdot v_S w^\top + \mN(0, 1)^{\otimes N \times N} \right) \\
&\quad \quad \le \bE_{r' \sim \mL(r)} \left[ \TV\left( \mL(\mathcal{A}_{\text{1-2}}(\mathcal{M}(n, S, P, Q)) | r = r'), \mL(D | r = r') \right) \right] + k(4e^{-3})^{N/2k} \\
&\quad \quad \le O(N^{-1}) + k(4e^{-3})^{N/2k}
\end{align*}
Now note that the columns of $X = \frac{\tau k}{2\sqrt{N}} \cdot v_S w^\top + \mN(0, 1)^{\otimes N \times N}$ are i.i.d. and have zero-mean jointly Gaussian entries. Since the two terms in $X$ are independent, the covariance matrix of the entries of a column is
$$\bE\left[X_1 X_1^\top\right] = I_N + \frac{\tau^2 k^2}{4N} \cdot v_S v_S^\top$$
which completes the proof of the first inequality in the proposition. Now observe that
$$\mL\left(\mathcal{A}_3\left( \mN(0, 1)^{\otimes N \times N} \right) \right) = \mL\left(\mN(0, 1)^{\otimes N \times N} \cdot R \right) = \mN(0, 1)^{\otimes N \times N}$$
since the rows of $\mN(0, 1)^{\otimes N \times N}$ are isotropic. The data-processing and conditioning properties in Fact \ref{tvfacts} now imply that
\begin{align*}
&\TV\left( \mathcal{A}\left( \textnormal{Bern}(Q)^{\otimes N \times N}\right), \, \mN(0, 1)^{\otimes N \times N} \right) \\
&\quad \quad \le \TV\left( \mathcal{A}_{\text{1-2}}\left( \textnormal{Bern}(Q)^{\otimes N \times N}\right), \mN(0, 1)^{\otimes N \times N} \right) \\
&\quad \quad \le \bE_{r' \sim \mL(r)} \left[\TV\left( \mL\left(\mathcal{A}_{\text{1-2}}\left( \textnormal{Bern}(Q)^{\otimes N \times N}\right) \Big| r = r'\right), \mN(0, 1)^{\otimes N \times N} \right) \right] \\
&\quad \quad = O(N^{-1})
\end{align*}
which completes the proof of the proposition.
\end{proof}

\section{Wishart Sampling and Strong Lower Bounds for Sparse PCA}
\label{sec:wishartreduction}

\begin{figure}[t!]
\begin{algbox}
\textbf{Algorithm} \textsc{Clique-to-Wishart}

\vspace{2mm}

\textit{Inputs}: Graph $G \in \mG_N$ with clique size $k$, edge probabilities $0 < q < p \le 1$, target $\pr{ubspca}$ parameters $\theta \lesssim \sqrt{\frac{k^{2}}{n \log N}}$, $d \ge \left( \frac{p}{Q'} + \epsilon \right) N + k$ where $Q' = 1 - \sqrt{(1 - p)(1 - q)} + \mathbf{1}_{\{p = 1\}} \left( \sqrt{q} - 1 \right)$ and $n = \omega(N^3)$ where $d$ and $n$ are both bounded above by some $\text{poly}(N)$

\vspace{2mm}

\begin{enumerate}
\item \textit{Plant Diagonals}: Compute $X \in \{0, 1\}^{m \times m}$ as $X \gets \textsc{To-Bernoulli-Submatrix}(G)$ applied with initial dimension $N$, edge probabilities $p$ and $q$ and target dimension $m$ where $m$ is the smallest multiple of $k$ larger than $\left( \frac{p}{Q'} + \epsilon \right) N$
\item \textit{Clone}: Compute $(X_1, X_2, X_3) \gets \textsc{Bernoulli-Matrix-Clone}(X)$ applied with $t = 3$, starting probabilities $p$ and $Q'$ and $P = p$ and $Q = 1 - (1 - p)^{5/6}(1 - q)^{1/6} + \mathbf{1}_{\{p = 1\}} \left( q^{1/6} - 1 \right)$
\item \textit{Generate Mean Matrix:} Compute $M \in \mathbb{R}^{m \times m}$ as follows
\begin{enumerate}
\item[(1)] Sample $g \sim \N(0, 1)$ and if $|g| \ge 4 \sqrt{\log m}$ then set $g \gets 0$
\item[(2)] Compute the matrix $M$ as
$$M \gets \textsc{Gaussianize}(X_1)$$
applied with dimension $m$, Bernoulli probabilities $P$ and $Q$ and mean parameters $\mu_{ij} = \frac{\theta \sqrt{3}}{k} \left( \sqrt{n/2} + g \right)$ for all $i, j \in [m]$
\end{enumerate}
\item \textit{Generate Cross-Term Matrices:} Compute $C_L, C_R \in \mathbb{R}^{m \times m}$ as
\begin{align*}
C_L^\top &\gets \chi^2\textsc{-Random-Rotation}(X_2) \\
C_R &\gets \chi^2\textsc{-Random-Rotation}(X_3)
\end{align*}
each applied with dimension $m$, Bernoulli probabiities $P$ and $Q$ and $\tau = \frac{2 \sqrt{3 m\theta}}{k}$
\item \textit{Form Scaled Empirical Covariance Matrix}: Compute the matrix $\Sigma_{e} \in \mathbb{R}^{m \times m}$ as
$$\Sigma_e = n \cdot I_m + \sqrt{\frac{n}{6}} \left( M + M^\top + C_L + C_L^\top + C_R + C_R^\top \right)$$
\item\textit{Inverse Wishart Sampling:} Form $Y \in \mathbb{R}^{m \times n}$ by setting $R_{\Sigma} \in \mathbb{R}^{m \times n}$ to be the top $m$ rows of a $n \times n$ orthogonal matrix sampled from the Haar measure on $\mathcal{O}_n$ and then computing $Y = \Sigma_e^{1/2} R_\Sigma$ if $\Sigma_e \succeq 0$ and $Y = 0$ otherwise
\item \textit{Pad and Output}: Form the matrix $Z \in \mathbb{R}^{d \times n}$ by randomly embedding the rows of $Y$ as rows of $Z$ and sampling the remaining $d - m$ rows of $Z$ i.i.d. from $\mN(0, I_n)$ and then output the columns $Z_1, Z_2, \dots, Z_n$ of $Z$
\end{enumerate}
\vspace{1mm}
\end{algbox}
\caption{Wishart sampling reduction showing strong lower bounds for $\pr{ubspca}$.}
\label{fig:wishart}
\end{figure}

With the subroutines introduced in the previous section, we now present our Wishart sampling reduction $\textsc{Clique-to-Wishart}$ showing strong lower bounds for sparse PCA. The algorithm $\textsc{Clique-to-Wishart}$ is described in Figure \ref{fig:wishart} and its total variation guarantees are stated in Theorem \ref{thm:wishartsamp} below. 

The reduction $\textsc{Clique-to-Wishart}$ can be seen to run in randomized $\text{poly}(N)$ time as follows. The subroutines introduced in the previous section each have been shown to run in $\text{poly}(N)$ time for their given input parameters. The square root of a symmetric real positive semidefinite matrix in Step 6 can be computed in polynomial time by diagonalizing with Jacobi's algorithm and then compute the square root of the resulting diagonalized matrix. As shown in Proposition 7.2 of \cite{eaton1983multivariate}, a sample from the Haar measure $\mu_{\mathcal{O}_n}$ on the orthogonal group $\mathcal{O}_n$ can be generated by applying Gram-Schmidt orthogonalization to the rows of a sample from $\mN(0, 1)^{\otimes n \times n}$. Therefore a sample from $\mu_{\mathcal{O}_n}$ can be generated in $\text{poly}(n)$ time.

\begin{theorem}[Wishart Sampling Reduction] \label{thm:wishartsamp}
Fix constants $\epsilon, c > 0$ and $0 < q < p < 1$. Let $k, N, d, n$ and $\theta > 0$ be parameters satisfying that $d \ge \left( \frac{p}{Q'} + \epsilon \right)N + k$ where $Q' = 1 - \sqrt{(1 - p)(1 - q)} + \mathbf{1}_{\{p = 1\}} \left( \sqrt{q} - 1 \right)$,
$$\theta \le c \cdot \sqrt{\frac{k^2}{n \log N}}$$
and both $n = \omega(N^3)$ and $N = \omega(k^2)$ hold as $k \to \infty$. Let $\mathcal{A}(G)$ denote $\textsc{Clique-to-Wishart}$ applied with these input parameters and input graph $G$. If $n$ and $d$ are both bounded above by some $\textnormal{poly}(N)$, then $\mathcal{A}$ runs in randomized polynomial time and there is a sufficiently small constant $c > 0$ such that
$$\TV\left( \mathcal{A}\left( \pr{pds}(N, k, p, q) \right), \, \pr{ubspca}(n, k, d, \theta) \right) \to 0$$
under both $H_0$ and $H_1$ as $k \to \infty$.
\end{theorem}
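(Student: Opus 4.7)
The plan is to apply Lemma~\ref{lem:tvacc} along the seven stages of $\textsc{Clique-to-Wishart}$, bounding the total variation error at each stage via the subroutine lemmas of Section~\ref{sec:mappingsubmatrix} and the Wishart comparison results of Section~\ref{sec:Wishart}. The intermediate target after Step 5 is the scaled empirical covariance $\sum_{i=1}^n Y_i Y_i^\top$ of $n$ i.i.d.\ samples from $\mN(0, I_m)$ under $H_0$ and from $\mN(0, I_m + \theta v_S v_S^\top)$ under $H_1$; Step 6 then extracts such samples via an inverse Wishart sampling identity, and Step 7 pads to dimension $d$.

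For Steps 1--2, I would first invoke Lemma~\ref{lem:submatrix} with initial dimension $N$ and target dimension $m$ to map $\pr{pds}(N, k, p, q)$ to $\mathcal{M}(m, k, p, Q')$; the hypotheses $N = \omega(k^2)$ and $n, d \le N^{O(1)}$ drive the error terms $\sqrt{k^2/N}$ and $\exp(-\Omega(n^2/N))$ to zero. I would then apply Lemma~\ref{lem:matrixcloning} with $t = 3$ and the chosen $P, Q$; a short algebraic check shows that the cloning compatibility conditions hold (essentially with equality) for the given $Q = 1 - (1-p)^{5/6}(1-q)^{1/6} + \mathbf{1}_{\{p=1\}}(q^{1/6}-1)$, yielding three independent copies from $\mathcal{M}(m, S, P, Q)$ under $H_1$. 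For Step 3, Lemma~\ref{lem:gaussianize} applied conditionally on the truncated Gaussian $g$ shows that $M$ is close in TV to $\theta\sqrt{3}(\sqrt{n/2}+g)\, v_S v_S^\top + \mN(0,1)^{\otimes m \times m}$ under $H_1$, using $\mathbf{1}_S \mathbf{1}_S^\top = k\, v_S v_S^\top$ and the hypothesis $\theta \le c\sqrt{k^2/(n\log N)}$ to verify the $\tau$-constraint for small enough $c$. For Step 4, Lemma~\ref{lem:randomrotations} applied with $\tau = 2\sqrt{3m\theta}/k$ gives $\tau^2 k^2/(4m) = 3\theta$, so $C_R$ is close in TV to $\sqrt{3\theta}\, v_S w_R^\top + \mN(0,1)^{\otimes m \times m}$ and $C_L$ to $\sqrt{3\theta}\, w_L v_S^\top + \mN(0,1)^{\otimes m \times m}$ with $w_L, w_R \sim \mN(0, I_m)$ independent; the $k(4e^{-3})^{N/2k}$ tail is negligible under $N = \omega(k^2)$.

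The main technical step is Step 5, handled by the Wishart decomposition results of Section~\ref{sec:Wishart}. These compare the independent-sum matrix
$$n I_m + \sqrt{n/6}\,(M + M^\top + C_L + C_L^\top + C_R + C_R^\top)$$
to a genuine scaled empirical covariance of $n$ samples from $\mN(0, I_m + \theta v_S v_S^\top)$, relying on the Bubeck--Ding--Eldan--R\'acz Wishart-$\pr{goe}$ comparison in the regime $n/m^3 \to \infty$, guaranteed by $n = \omega(N^3)$. The algebra matches: the three independent centered Gaussian pieces combine so that $(M + M^\top + C_L + C_L^\top + C_R + C_R^\top)/\sqrt{6}$ is approximately $\pr{goe}(m)$; the signal in $M$ supplies the correct diagonal spike $n\theta v_S v_S^\top$ plus the random fluctuation $\theta g \sqrt{2n}\, v_S v_S^\top$; and the cross-term signals combine to $\sqrt{n\theta}(\tilde w v_S^\top + v_S \tilde w^\top)$ with $\tilde w = (w_L + w_R)/\sqrt{2}$ standard Gaussian, matching the decomposition sketched in Section~\ref{sec:techniques}. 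Step 6 is then a direct consequence of the isotropy of $\mN(0, \Sigma)^{\otimes n}$: the polar decomposition $X = \Sigma_e'^{1/2} U$ of a Wishart-distributed $X$ has independent factors with $U$ uniform on the Stiefel manifold $\{U \in \bR^{m \times n} : U U^\top = I_m\}$, so sampling a fresh $R_\Sigma$ and returning $\Sigma_e^{1/2} R_\Sigma$ produces an exact sample from $\mN(0, \Sigma)^{\otimes n}$ when $\Sigma_e$ is exactly Wishart, and data-processing propagates the Step 5 TV error unchanged. Step 7 is straightforward: the random embedding uniformly randomizes the support of $v$ in $[d]$.

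The main obstacle is Step 5: though the three pieces $M, C_L, C_R$ are independent by construction, a Wishart matrix has \emph{dependent} entries, and matching the joint law of $\Sigma_e$ to a Wishart in total variation --- rather than reducing to the easier spiked Wigner setting with entrywise-independent noise used in prior work --- is precisely where the new decomposition and comparison results of Section~\ref{sec:Wishart} are essential, and what ultimately enables the optimal dependence $\theta \asymp \sqrt{k^2/n}$. A secondary challenge is the careful bookkeeping of parameter conditions --- the various $\tau$-constraints, cloning compatibility conditions, and $\chi^2$ tail bounds --- to ensure they all hold simultaneously under the theorem's joint hypotheses $n = \omega(N^3)$, $N = \omega(k^2)$, and $\theta \le c\sqrt{k^2/(n\log N)}$.
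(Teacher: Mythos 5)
Your proposal is correct and follows essentially the same route as the paper: chaining Lemma \ref{lem:tvacc} over the seven steps, using Lemmas \ref{lem:submatrix}, \ref{lem:matrixcloning}, \ref{lem:gaussianize}, and \ref{lem:randomrotations} for Steps 1--4, the decomposition results of Section \ref{sec:Wishart} (Lemmas \ref{lem:plantedwishart} and \ref{lem:indrep} together with Theorem \ref{thm:bubeck}) for Step 5, Lemma \ref{lem:invwishart} for Step 6, and the padding argument for Step 7. The parameter checks you describe (the $\tau$-constraints, cloning compatibility, and the $k(4e^{-3})^{m/2k}$ tail under $N = \omega(k^2)$) are exactly the ones carried out in the paper's propositions.
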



Assuming the $\pr{pc}$ conjecture only up to $k = O(N^{\alpha})$ for some $\alpha \in (0, 1/2)$, this reduction shows that there is no randomized polynomial time algorithm solving $\pr{ubspca}(n, k, d, \theta)$ up to the conjectured boundary of $\theta = \tilde{o}(\sqrt{k^2/n})$ for sparsities satisfying $k = O(n^{\alpha/3})$. More formally, Lemma \ref{lem:3a} implies the following corollary of Theorem \ref{thm:wishartsamp} giving our main lower bounds for sparse PCA.

\begin{corollary}[Strong Lower Bounds for Sparse PCA] \label{thm:lowerbounds}
Fix some $\alpha \in (0, 1/2)$ and $c > 0$. Let $(n_t, k_t, d_t, \theta_t)$ be a sequence of parameters with $n_t \ge c^{-1}k_t^{3/\alpha}$, $d_t \ge c^{-1} k_t^{1/\alpha}$ and
$$\theta_t \le c \cdot \sqrt{\frac{k_t^{2}}{n _t \log n_t}}$$
There is a sufficiently small constant $c$ such that for any sequence of randomized polynomial time tests $\mathcal{A}_t$, the asymptotic Type I$+$II error of $\mathcal{A}_t$ on $\pr{ubspca}(n_t, k_t, d_t, \theta_t)$ is at least $1$ assuming the $\textsc{pds}$ conjecture for $k = O(n^\alpha)$ and some pair of constant densities $0 < q < p \le 1$.
\end{corollary}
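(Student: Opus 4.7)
The plan is to derive the corollary as a direct consequence of the Wishart sampling reduction in Theorem~\ref{thm:wishartsamp}, combined with the transfer principle in Lemma~\ref{lem:3a} and the assumed $\pr{pds}$ conjecture at exponent $\alpha$. Suppose for contradiction that there is a sequence of randomized polynomial-time tests $\mathcal{A}_t$ that achieves asymptotic Type I$+$II error strictly less than $1$ on $\pr{ubspca}(n_t, k_t, d_t, \theta_t)$. We will cook up a matching sequence of graph sizes $N_t$ and feed the composition of $\textsc{Clique-to-Wishart}$ with $\mathcal{A}_t$ into the $\pr{pds}$ conjecture to obtain a contradiction.

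The only nontrivial design choice is selecting $N_t$ so that all the hypotheses of Theorem~\ref{thm:wishartsamp} are satisfied and, at the same time, $k_t = o(N_t^\alpha)$ so that hardness of $\pr{pds}(N_t, k_t, p, q)$ is guaranteed. The relevant constraints are $n_t = \omega(N_t^3)$, $N_t = \omega(k_t^2)$, $d_t \ge \left( \frac{p}{Q'} + \epsilon \right) N_t + k_t$, and $\theta_t \le c \sqrt{k_t^2/(n_t \log N_t)}$. Using the hypotheses $n_t \ge c^{-1} k_t^{3/\alpha}$ and $d_t \ge c^{-1} k_t^{1/\alpha}$ along with $\alpha < 1/2$ (so that $1/\alpha > 2$), I would pick $N_t$ to be an increasing function of $k_t$ with $N_t = \omega(k_t^{1/\alpha})$ but $N_t \le \min\{ \eta_t \cdot n_t^{1/3},\, \eta_t \cdot d_t\}$ for a sequence $\eta_t \to 0$ slowly. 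Concretely, setting $N_t = \lfloor k_t^{1/\alpha} \cdot \omega_t(1) \rfloor$ with a slowly-growing factor chosen so that the $n_t^{1/3}$ and $d_t$ bounds still hold ensures simultaneously (i) $k_t = o(N_t^\alpha)$, (ii) $n_t = \omega(N_t^3)$, (iii) $N_t = \omega(k_t^2)$ since $1/\alpha > 2$, and (iv) $d_t$ dominates $\left( \frac{p}{Q'} + \epsilon \right) N_t + k_t$. The bound $\theta_t \le c\sqrt{k_t^2/(n_t \log N_t)}$ follows from the hypothesized bound on $\theta_t$ since $\log N_t \le \log n_t$ up to constants, after absorbing factors into the small constant $c$.

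With these choices, Theorem~\ref{thm:wishartsamp} produces a randomized polynomial-time map $\mathcal{R}_t = \textsc{Clique-to-Wishart}$ such that the total variation between $\mathcal{R}_t(\pr{pds}(N_t, k_t, p, q))$ and $\pr{ubspca}(n_t, k_t, d_t, \theta_t)$ tends to zero under both hypotheses. Invoking Lemma~\ref{lem:3a} with this $\delta = o(1)$ shows that the composition $\mathcal{A}_t \circ \mathcal{R}_t$ is a randomized polynomial-time test for $\pr{pds}(N_t, k_t, p, q)$ whose Type I$+$II error differs from that of $\mathcal{A}_t$ by at most $o(1)$, and therefore stays strictly below $1$ in the limit. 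Since $k_t = o(N_t^\alpha)$ by construction, this contradicts the assumed $\pr{pds}$ conjecture at exponent $\alpha$ and constant densities $(p, q)$.

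The main obstacle in this argument is not analytical—Theorem~\ref{thm:wishartsamp} and Lemma~\ref{lem:3a} already do the real work—but bookkeeping: one must verify that the gap between $n_t \ge c^{-1} k_t^{3/\alpha}$ and the required $n_t = \omega(N_t^3)$ gives enough slack to choose $N_t$ both large enough ($\omega(k_t^{1/\alpha})$, so that the $\pr{pds}$ conjecture is applicable) and small enough (so Theorem~\ref{thm:wishartsamp} applies). The strict inequality $\alpha < 1/2$ together with the slightly-super-polynomial slack $\omega(\cdot)$ versus $\ge c^{-1}$ in the hypotheses makes this straightforward; one then absorbs the constant $c$ in the $\theta_t$ bound into the constant permitted by Theorem~\ref{thm:wishartsamp}.
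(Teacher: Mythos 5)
Your high-level route is exactly the paper's: the paper derives this corollary in one line by combining Theorem~\ref{thm:wishartsamp} with Lemma~\ref{lem:3a}, and your contradiction argument, the error transfer, and the absorption of $\log N_t \le \log n_t$ into the constant are all fine. The gap is in the one place where you had to do real work, the choice of $N_t$. You require simultaneously $N_t = \omega(k_t^{1/\alpha})$ (so that $k_t = o(N_t^\alpha)$) and $N_t \le \eta_t \min\{n_t^{1/3}, d_t\}$ with $\eta_t \to 0$, and you justify feasibility by a ``slightly-super-polynomial slack'' in the hypotheses. No such slack exists: the hypotheses are only the constant-factor bounds $n_t \ge c^{-1}k_t^{3/\alpha}$ and $d_t \ge c^{-1}k_t^{1/\alpha}$, and the boundary sequence $n_t = \lceil c^{-1}k_t^{3/\alpha}\rceil$, $d_t = \lceil c^{-1}k_t^{1/\alpha}\rceil$ is allowed. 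For it, $\eta_t n_t^{1/3}$ and $\eta_t d_t$ are $o(k_t^{1/\alpha})$, so no $N_t$ can be both $\omega(k_t^{1/\alpha})$ and below your cap; your construction fails on precisely the sequences the corollary must cover. (You also silently drop the hypothesis of Theorem~\ref{thm:wishartsamp} that $n$ and $d$ be bounded by $\mathrm{poly}(N)$, which is what makes the composed test run in time polynomial in $N_t$.)

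The phrasing of the corollary indicates the intended bookkeeping, which differs from yours in a way that matters: the conjecture is invoked for $k = O(N^\alpha)$, not $k = o(N^\alpha)$, and $c$ is ``sufficiently small.'' So one takes $N_t = \Theta(k_t^{1/\alpha})$, e.g.\ $N_t = \lceil a\, k_t^{1/\alpha} \rceil$ with the constant $a$ chosen (using that $c^{-1}$ is large) so that $\left(\frac{p}{Q'} + \epsilon\right)N_t + k_t \le d_t$; then $k_t = O(N_t^\alpha)$, $N_t = \omega(k_t^2)$ since $\alpha < 1/2$, and $\theta_t \le c\sqrt{k_t^2/(n_t \log N_t)}$ since $N_t \le n_t$. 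Pushing $N_t$ up to $\omega(k_t^{1/\alpha})$, as you do to obtain $k_t = o(N_t^\alpha)$, is exactly what the hypotheses cannot support, because it simultaneously violates $d_t \ge \left(\frac{p}{Q'}+\epsilon\right)N_t + k_t$ and $n_t = \omega(N_t^3)$ at the boundary. (Even with the $\Theta(k_t^{1/\alpha})$ choice, the condition $n_t = \omega(N_t^3)$ of Theorem~\ref{thm:wishartsamp} is satisfied with genuine room only when $n_t = \omega(k_t^{3/\alpha})$; that residual looseness is in the paper's own statement, but it cannot be used to rescue your stronger requirement.)
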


The next two sections are devoted to proving Theorem \ref{thm:wishartsamp}. In Section \ref{sec:Wishart}, we establish the decomposition theorems for planted Wishart matrices motivating the reduction. In Section \ref{sec:proofmain}, we complete the proof of Theorem \ref{thm:wishartsamp}.

\section{Decompositions of Wishart Matrices}
\label{sec:Wishart}

We now establish the decomposition and comparison theorems on random matrices that will be used in the proof of Theorem \ref{thm:wishartsamp}. A key ingredient in the analysis of $\textsc{Clique-to-Wishart}$ will be a a theorem of \cite{bubeck2016testing}, showing that in high-dimensional setting the fluctuations of the entries of a sample from the isotropic Wishart distribution resemble independent Gaussians in total variation. From this theorem, we deduce that in the same high-dimensional setting, the entries of spiked Wishart distributions are close in total variation to being jointly Gaussian. We then give a representation of this jointly Gaussian matrix as a sum of four simple independent Gaussian matrices, which is crucial to the construction of $\Sigma_e$ in $\textsc{Clique-to-Wishart}$. 

We now define the Wishart and $\textsc{goe}$ matrix distributions, which appear throughout our analysis of $\textsc{Clique-to-Wishart}$.

\begin{definition}[Wishart Matrix]
Let $\Sigma$ be a $d\times d$ positive semidefinite matrix. Then let $\mW_n(\Sigma, d)$ denote the distribution of $\sum_{i = 1}^n X_i X_i^\top$ where $X_1, X_2, \dots, X_n$ are i.i.d. samples from $\mN(0, \Sigma)$.
\end{definition}

\begin{definition}[Gaussian Orthogonal Ensemble]
Let $\textsc{goe}(d)$ denote the distribution on $d \times d$ symmetric real matrices with diagonal entries sampled i.i.d. from $\mN(0, 2)$ and above-diagonal entries sampled i.i.d. from $\mN(0, 1)$.
\end{definition}

The next theorem is proven in \cite{bubeck2016testing} through a direct comparison of probability density functions, through an entropic central limit theorem in \cite{bubeck2016entropic} and also in \cite{jiang2015approximation}. 

\begin{theorem}[Theorem 7 in \cite{bubeck2016testing}] \label{thm:bubeck}
If $n/d^3 \to \infty$ as $n, d \to \infty$, then
$$\TV\left( \mW_n(I_d, d), \, n \cdot I_d + \sqrt{n} \cdot \textsc{goe}(d) \right) \to 0$$
\end{theorem}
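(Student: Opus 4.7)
The proof I would attempt proceeds via explicit comparison of probability densities on the space $\text{Sym}(d)$ of symmetric $d \times d$ matrices. The Wishart measure $\mW_n(I_d, d)$ has density proportional to $|M|^{(n-d-1)/2}e^{-\Tr(M)/2}$ on the positive-definite cone, while the comparison measure $n I_d + \sqrt{n}\,\pr{goe}(d)$ has density proportional to $\exp\!\bigl(-\tfrac{1}{4n}\Tr((M - nI_d)^2)\bigr)$. Changing variables to $Y = (M - nI_d)/\sqrt{n}$ (so that both become densities on $\text{Sym}(d)$), the plan is to control the density ratio $g_W(Y)/g_G(Y)$ in $L^1$ under the Gaussian comparison, which is exactly $2\TV$.

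\textbf{Taylor expansion of the log-determinant.} On the event $\{\|Y\|_{\text{op}} \le \tfrac{1}{2}\sqrt{n}\}$, using the convergent series $\log|I_d + Y/\sqrt{n}| = \sum_{k \ge 1} \tfrac{(-1)^{k+1}}{k\, n^{k/2}} \Tr(Y^k)$, the log density ratio becomes
\[
\log\frac{g_W(Y)}{g_G(Y)} \;=\; c_{n,d} \;-\; \frac{d+1}{2\sqrt{n}}\Tr(Y) \;+\; \frac{d+1}{4n}\Tr(Y^2) \;+\; \frac{n - d - 1}{6\,n^{3/2}}\Tr(Y^3) \;-\; \frac{n - d - 1}{8\,n^2}\Tr(Y^4) \;+\; \cdots,
\]
where the two critical cancellations come from the design of the comparison law: the $\tfrac{\sqrt n}{2}\Tr(Y)$ piece from $-\tfrac{1}{2}\Tr(M)$ annihilates the leading term of the log-determinant expansion, and the $\tfrac{1}{4}\Tr(Y^2)$ from the Gaussian exponent matches the second. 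The constant $c_{n,d}$ absorbs the normalizing factors together with the Jacobian $n^{d(d+1)/4}$, and must be shown to be $o(1)$ via Stirling's expansion applied to the multivariate Gamma function $\Gamma_d(n/2) = \pi^{d(d-1)/4}\prod_{j=1}^d \Gamma((n - j + 1)/2)$. The leading data-dependent error is the cubic, of size $\Tr(Y^3)/\sqrt{n}$.

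\textbf{Moment bounds and Pinsker.} Under $Y \sim \pr{goe}(d)$, Wick's formula yields $\bE[\Tr(Y^k)] = O(d^{k/2 + 1})$ for even $k$, zero for odd $k$, and $\text{Var}[\Tr(Y^k)] = O(d^k)$. Substituting these into the expansion, each residual term contributes at most $O(d^3/n)$ in second moment: the linear via $\text{Var}(\Tr Y)/n \asymp d/n$ after multiplication by $(d+1)^2$, the quadratic via $\bE[\Tr(Y^2)] \cdot (d+1)/n \asymp d^3/n$, and the cubic via $\text{Var}(\Tr Y^3)/n \asymp d^3/n$. Summing yields $\bE_{g_G}\bigl[(\log(g_W/g_G))^2\bigr] = O(d^3/n)$, from which $\KL(g_G\|g_W) = O(d^3/n)$ and $\TV = O(\sqrt{d^3/n}) \to 0$ by Pinsker's inequality, precisely in the regime $n/d^3 \to \infty$.

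\textbf{Main obstacle.} The most delicate part is the constant bookkeeping. Stirling must be applied to $\Gamma_d(n/2)$ with enough precision to force $c_{n,d} = o(1)$, and the several $O(d^3/n)$ residuals (quadratic and quartic in particular) appear with opposite signs and must combine so as to keep $\KL$ both small and nonnegative. A secondary concern is justifying the log-determinant expansion, which requires the cutoff $\|Y\|_{\text{op}} \le \tfrac{1}{2}\sqrt n$; however, under $\pr{goe}(d)$ the operator norm concentrates at $2\sqrt d$ and the corresponding Wishart eigenvalues satisfy $\|Y\|_{\text{op}} = O(\sqrt d)$ by Davidson--Szarek, so the bad event has mass $e^{-\Omega(d)}$ under both laws when $n \gg d$ and contributes only an exponentially small correction to $\TV$.
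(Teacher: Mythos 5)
This statement is not proved in the paper at all: it is imported verbatim as Theorem 7 of \cite{bubeck2016testing}, which the paper notes is established there by a direct comparison of probability density functions (and alternatively via an entropic CLT in \cite{bubeck2016entropic} and in \cite{jiang2015approximation}). Your sketch is essentially that original density-comparison argument -- the same change of variables $Y=(M-nI_d)/\sqrt{n}$, the same two cancellations in the log-determinant expansion, the cubic term $\Tr(Y^3)/\sqrt{n}$ as the dominant $O(\sqrt{d^3/n})$ fluctuation, and Stirling for the multivariate Gamma constant -- and it is sound in outline; the only caution is that $\KL(g_G\,\|\,g_W)$ is literally infinite since the Wishart density vanishes off the shifted positive-definite cone, so the Pinsker step must be applied to the laws conditioned on your good event (or with the KL direction reversed), which the truncation you already introduce accommodates, at the cost of the slightly weaker but still sufficient bound $\TV = O((d^3/n)^{1/4}) + e^{-\Omega(d)}$ unless the constant and mean cancellations are tracked exactly.
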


Using this theorem, we obtain Lemma \ref{lem:plantedwishart} giving a jointly Gaussian approximation of the sample covariance matrix of $\pr{ubspca}$ under $H_1$. A related theorem is proven in \cite{nourdin2018asymptotic} with techniques from Malliavin calculus. We first show the following simple fact which will simplify the computations in Lemma \ref{lem:plantedwishart}.

\begin{fact} \label{lem:gausscorr}
If $x, y, z, w \in \mathbb{R}^d$ and $G \sim \mN(0, 1)^{\otimes d \times d}$, then
$$\bE\left[\left( x^\top G y \right) \left( z^\top G w \right) \right] = \langle x, z \rangle \cdot \langle y, w \rangle$$
\end{fact}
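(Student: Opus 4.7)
The plan is to prove this by direct expansion. Write each bilinear form as $x^\top G y = \sum_{i,j \in [d]} x_i G_{ij} y_j$ and similarly $z^\top G w = \sum_{k,l \in [d]} z_k G_{kl} w_l$. Multiplying these and taking the expectation, linearity yields
$$\bE\left[ (x^\top G y)(z^\top G w) \right] = \sum_{i,j,k,l \in [d]} x_i y_j z_k w_l \cdot \bE\left[ G_{ij} G_{kl} \right].$$
Since the entries of $G$ are i.i.d.\ standard Gaussians, we have $\bE[G_{ij} G_{kl}] = \mathbf{1}_{\{i = k\}} \mathbf{1}_{\{j = l\}}$. Substituting this in collapses the sum to $\sum_{i,j} x_i y_j z_i w_j = \left( \sum_i x_i z_i \right) \left( \sum_j y_j w_j \right) = \langle x, z \rangle \cdot \langle y, w \rangle$.

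There is no real obstacle here: the fact is essentially the statement that the matrix entries $G_{ij}$ form an orthonormal system in $L^2$, so any two linear functionals of $G$ have covariance equal to the inner product of their coefficient vectors (viewed as elements of $\mathbb{R}^{d \times d}$). The coefficient ``vector'' attached to $x^\top G y$ is the rank-one matrix $x y^\top$, and that attached to $z^\top G w$ is $z w^\top$, and their Frobenius inner product is exactly $\langle x, z \rangle \langle y, w \rangle$. One could phrase the proof this way instead if a cleaner presentation is preferred, but the direct index expansion above is the shortest route.
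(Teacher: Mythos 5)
Your proposal is correct and follows essentially the same route as the paper: expand both bilinear forms, use $\bE[G_{ij}G_{kl}] = \mathbf{1}_{\{(i,j)=(k,l)\}}$, and collapse the quadruple sum to $\langle x, z \rangle \cdot \langle y, w \rangle$. The Frobenius inner product remark is a nice reformulation but not needed.
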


\begin{proof}
Since $\bE[G_{ij} G_{kl}] = \mathbf{1}_{\{(i, j) = (k, l)\}}$, we have that
\begin{align*}
\bE\left[\left( x^\top G y \right) \left( z^\top G w \right) \right] &= \bE\left[\left( \sum_{i, j = 1}^n G_{ij} x_i y_j \right) \left( \sum_{k, l = 1}^n G_{ij} x_k y_l \right) \right] = \sum_{i, j, k, l = 1}^n \bE\left[ G_{ij}G_{kl} \right] \cdot x_i y_j z_k w_l \\
&= \sum_{i, j = 1}^n x_i y_j z_i w_j = \langle x, z \rangle \cdot \langle y, w \rangle
\end{align*}
which completes the proof of the lemma.
\end{proof}

\begin{lemma} \label{lem:plantedwishart}
Given a $S \subseteq [d]$ of size $|S| = k$, let $v_S = k^{-1/2} \cdot \mathbf{1}_S$ be the uniform unit vector supported on $S$. Let $\textsc{gw}_d(n, S, \theta)$ denote the distribution of $d \times d$ real matrices $X$ such that the entries of $X$ are zero-mean and jointly Gaussian with covariances satisfying 
$$\frac{1}{n} \cdot \bE\left[X_{ij} X_{kl}\right] = \left( \mathbf{1}_{\{i = k\}} + \frac{\theta}{k} \cdot \mathbf{1}_{\{i, k \in S\}} \right) \left( \mathbf{1}_{\{j = l\}} + \frac{\theta}{k} \cdot \mathbf{1}_{\{j, l \in S\}} \right)$$
for all $i, j, k, l \in [d]$. Let $\textsc{sgw}_d(n, S, \theta)$ denote the distribution of $\frac{1}{\sqrt{2}} \left( X + X^\top \right)$ where $X \sim \textsc{gw}_d(n, S, \theta)$. If $n/d^3 \to \infty$ as $n, d \to \infty$, then for any $\theta = \theta(n) > -1$ we have
$$\max_S \TV\left( \mW_n(I_d + \theta v_S v_S^\top, d), \, n \left( I_d + \theta v_S v_S^\top \right) + \textsc{sgw}_d(n, S, \theta) \right) \to 0$$
where the maximum is over all $S \subseteq [d]$ with $|S| = k$.
\end{lemma}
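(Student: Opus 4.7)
The plan is to transport Theorem \ref{thm:bubeck} from the isotropic case to the spiked case by pushing forward under the linear ``whitening'' map $A \mapsto \Sigma^{1/2} A \Sigma^{1/2}$, where $\Sigma = I_d + \theta v_S v_S^\top$. Since $\theta > -1$, the matrix $\Sigma$ is positive definite and has a well-defined symmetric positive definite square root $\Sigma^{1/2}$ (in fact $\Sigma^{1/2} = I_d + (\sqrt{1+\theta}-1) v_S v_S^\top$, though we will not need the explicit formula). The key observation is that if $Y_1, \dots, Y_n \sim \mN(0, I_d)$ are i.i.d., then $\Sigma^{1/2} Y_i \sim \mN(0, \Sigma)$, so $\Sigma^{1/2} \left(\sum_{i=1}^n Y_i Y_i^\top\right) \Sigma^{1/2} \sim \mW_n(\Sigma, d)$. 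Hence the map $\Phi : A \mapsto \Sigma^{1/2} A \Sigma^{1/2}$ pushes $\mW_n(I_d, d)$ forward to $\mW_n(\Sigma, d)$.

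Applying the data processing inequality in Fact \ref{tvfacts} to $\Phi$ together with Theorem \ref{thm:bubeck} gives
\[
\TV\!\left(\mW_n(\Sigma, d),\ \Phi(n I_d + \sqrt{n} \cdot \textsc{goe}(d))\right) \;\le\; \TV\!\left(\mW_n(I_d, d),\ n I_d + \sqrt{n} \cdot \textsc{goe}(d)\right) \;\to\; 0,
\]
and this bound is uniform in $S$ because the right-hand side does not depend on $S$. Since $\Phi(n I_d + \sqrt{n} \cdot \textsc{goe}(d)) = n \Sigma + \sqrt{n} \cdot \Sigma^{1/2} \textsc{goe}(d)\, \Sigma^{1/2}$ and $n \Sigma = n(I_d + \theta v_S v_S^\top)$, the remaining task is to verify that $\sqrt{n} \cdot \Sigma^{1/2}\, \textsc{goe}(d)\, \Sigma^{1/2}$ is distributed as $\textsc{sgw}_d(n, S, \theta)$.

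For this, write $\textsc{goe}(d) = \frac{1}{\sqrt{2}}(G + G^\top)$ with $G \sim \mN(0,1)^{\otimes d \times d}$, and set $X := \sqrt{n}\, \Sigma^{1/2} G\, \Sigma^{1/2}$. Since $\Sigma^{1/2}$ is symmetric, we have $\sqrt{n}\,\Sigma^{1/2}\,\textsc{goe}(d)\,\Sigma^{1/2} = \frac{1}{\sqrt{2}}(X + X^\top)$, so it suffices to show $X \sim \textsc{gw}_d(n, S, \theta)$. The entries $X_{ij} = \sqrt{n}\,(\Sigma^{1/2} e_i)^\top G\, (\Sigma^{1/2} e_j)$ are zero-mean linear functionals of the Gaussian matrix $G$, hence jointly Gaussian. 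Applying Fact \ref{lem:gausscorr} with $x = \Sigma^{1/2} e_i$, $y = \Sigma^{1/2} e_j$, $z = \Sigma^{1/2} e_k$, $w = \Sigma^{1/2} e_l$ gives
\[
\bE[X_{ij} X_{kl}] = n \cdot \langle \Sigma^{1/2} e_i, \Sigma^{1/2} e_k \rangle \cdot \langle \Sigma^{1/2} e_j, \Sigma^{1/2} e_l \rangle = n \cdot \Sigma_{ik} \cdot \Sigma_{jl},
\]
and substituting $\Sigma_{ik} = \mathbf{1}_{\{i=k\}} + \frac{\theta}{k}\mathbf{1}_{\{i,k \in S\}}$ matches exactly the covariance structure in the definition of $\textsc{gw}_d(n, S, \theta)$.

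I expect no serious obstacle here: the argument is essentially a bookkeeping exercise combining Theorem \ref{thm:bubeck}, the data-processing inequality, and the covariance computation of Fact \ref{lem:gausscorr}. The only mild subtlety is ensuring that the pushforward identification $\Phi(\mW_n(I_d,d)) = \mW_n(\Sigma, d)$ is exact (as opposed to only approximate), which is immediate because $\Phi$ is a deterministic linear map and the identity $\Sigma^{1/2}(\sum Y_i Y_i^\top)\Sigma^{1/2} = \sum(\Sigma^{1/2}Y_i)(\Sigma^{1/2}Y_i)^\top$ holds pathwise. Uniformity of the TV bound over $S$ with $|S| = k$ is automatic from the fact that the TV distance in Theorem \ref{thm:bubeck} involves no $S$-dependent quantity.
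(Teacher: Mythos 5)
Your proposal is correct and follows essentially the same route as the paper's proof: pushing forward the isotropic approximation of Theorem \ref{thm:bubeck} under the map $A \mapsto \Sigma^{1/2} A \Sigma^{1/2}$, invoking the data-processing inequality (uniformly in $S$), and identifying $\sqrt{n}\,\Sigma^{1/2} G\, \Sigma^{1/2}$ with $\textsc{gw}_d(n, S, \theta)$ via the covariance computation of Fact \ref{lem:gausscorr}. No gaps.
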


\begin{proof}
Let $\Sigma = I_d + \theta v_S v_S^\top$ and let $X_1, X_2, \dots, X_n$ be i.i.d. samples from $\mN(0, I_d)$. Note that the entries of $Y_i = \Sigma^{1/2} X_i$ are jointly Gaussian and have covariance matrix $\bE[Y_iY_i^\top ] = \Sigma^{1/2} \cdot \bE[X_iX_i^\top ] \cdot \Sigma^{1/2} = \Sigma$ since $\bE[X_iX_i^\top] = I_d$. It follows that $Y_1, Y_2, \dots, Y_n$ are i.i.d. samples from $\mN(0, \Sigma)$. Therefore it follows that
$$W_\Sigma = \sum_{i = 1}^n Y_i Y_i^\top = \Sigma^{1/2} \left( \sum_{i = 1}^n X_i X_i^\top \right) \Sigma^{1/2} = \Sigma^{1/2} W_{I_d} \Sigma^{1/2}$$
is distributed as $\mW_n(\Sigma, d)$ where $W_{I_d}$ is distributed as $\mW_n(I_d, d)$. Let $G \sim \mN(0, 1)^{\otimes d \times d}$ and note that $\frac{1}{\sqrt{2}} \left( G + G^\top \right)$ is distributed as $\textsc{goe}(d)$. Let $X = \sqrt{n} \cdot \Sigma^{1/2} G \Sigma^{1/2}$ and let
$$W_\Sigma^G = \Sigma^{1/2} \left( n \cdot I_d + \sqrt{\frac{n}{2}} \cdot \left( G + G^\top \right) \right) \Sigma^{1/2} = n \cdot \Sigma + \frac{1}{\sqrt{2}} \left( X + X^\top \right)$$
The data-processing inequality now implies that
\begin{align*}
\TV\left( \mW_n(I_d + \theta v_S v_S^\top, d), \mL\left( W_\Sigma^G \right) \right) &= \TV\left( \mL\left( \Sigma^{1/2} W_{I_d} \Sigma^{1/2} \right), \mL\left( \Sigma^{1/2} \left( n \cdot I_d + \sqrt{n} \cdot \textsc{goe}(d) \right) \Sigma^{1/2} \right) \right) \\
&\le \TV\left( \mW_n(I_d, d), n \cdot I_d + \sqrt{n} \cdot \textsc{goe}(d) \right)
\end{align*}
Since $G$ is a matrix of independent Gaussians and $X = \sqrt{n} \cdot \Sigma^{1/2} G \Sigma^{1/2}$, the entries of $X$ are jointly Gaussian. Hence showing that the covariances between the entries of $X$ are as defined above and applying Theorem \ref{thm:bubeck} would complete the proof of the lemma. Let $e_i \in \mathbb{R}^d$ be the unit vector with $i$th coordinate equal to one and all other coordinates equal to zero. This implies
$$\frac{1}{\sqrt{n}} \cdot X_{ij} = \frac{1}{\sqrt{n}} \cdot e_i^\top X e_j = e_i^\top \Sigma^{1/2} G \Sigma^{1/2} e_j$$
By Fact \ref{lem:gausscorr} and since $\Sigma^{1/2}$ is symmetric, we have that
\begin{align*}
\frac{1}{n} \cdot \bE\left[X_{ij} X_{kl}\right] &= \bE\left[ \left( e_i^\top \Sigma^{1/2} G \Sigma^{1/2} e_j \right) \left( e_k^\top \Sigma^{1/2} G \Sigma^{1/2} e_l \right) \right] \\
&= \left\langle \Sigma^{1/2} e_i, \Sigma^{1/2} e_k \right\rangle \cdot \left\langle \Sigma^{1/2} e_j, \Sigma^{1/2} e_l \right\rangle \\
&= \left( e_i^\top \Sigma e_k \right) \cdot \left( e_j^\top \Sigma e_l \right) = \Sigma_{ik} \Sigma_{jl} \\
&= \left( \mathbf{1}_{\{i = k\}} + \frac{\theta}{k} \cdot \mathbf{1}_{\{i, k \in S\}} \right) \left( \mathbf{1}_{\{j = l\}} + \frac{\theta}{k} \cdot \mathbf{1}_{\{j, l \in S\}} \right)
\end{align*}
since $\Sigma$ has $(i, j)$th entry $\Sigma_{ij} = \mathbf{1}_{\{i = j\}} + \frac{\theta}{k} \cdot \mathbf{1}_{\{i, j \in S\}}$, proving the lemma.
\end{proof}

The next lemma gives a method of representing matrices $X \sim \textsc{gw}_d(n, S, \tau)$ as a sum of four independent Gaussian terms. This is crucially used in our reduction to express $\Sigma_e$ as the sum of independent Gaussian matrices.

\begin{lemma} \label{lem:indrep}
Suppose $W$ is distributed as $\mN(0, 1)^{\otimes d \times d}$, $w_1, w_2$ are distributed as $\mN(0, I_d)$ and $g$ is distributed as $\mN(0, 1)$ and that $W, w_1, w_2$ and $g$ are independent. Then for all $n, \theta > 0$ and $S \subseteq [d]$,
$$X = \sqrt{n} \cdot \left( W + \sqrt{\theta} \cdot v_S w_1^\top + \sqrt{\theta} \cdot w_2 v_S^\top + \theta g v_Sv_S^\top \right)$$
is distributed as $\textsc{gw}_d(n, S, \theta)$.
\end{lemma}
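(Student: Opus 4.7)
\medskip

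\noindent\textbf{Proof proposal.} Since $W$, $w_1$, $w_2$, and $g$ are jointly Gaussian (being independent Gaussian collections) and each entry $X_{ij}/\sqrt n$ is a fixed linear combination of their components, the entries of $X$ are jointly Gaussian. Each entry clearly has mean zero. Thus, to show $X\sim \textsc{gw}_d(n,S,\theta)$, it suffices to verify that the covariance of $X_{ij}$ and $X_{kl}$ matches the formula
$$\frac{1}{n}\bE[X_{ij}X_{kl}] = \left(\mathbf{1}_{\{i=k\}} + \tfrac{\theta}{k}\mathbf{1}_{\{i,k\in S\}}\right)\left(\mathbf{1}_{\{j=l\}} + \tfrac{\theta}{k}\mathbf{1}_{\{j,l\in S\}}\right).$$

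Expanding entrywise gives
$$\tfrac{1}{\sqrt n}X_{ij} = W_{ij} + \sqrt{\theta}\,(v_S)_i (w_1)_j + \sqrt{\theta}\,(w_2)_i (v_S)_j + \theta\, g\,(v_S)_i (v_S)_j.$$
Because $W$, $w_1$, $w_2$, and $g$ are mutually independent and all have zero mean, every cross-term covariance between distinct summands vanishes, so the four ``diagonal'' contributions are the only ones that survive. Using $\bE[W_{ij}W_{kl}] = \mathbf{1}_{\{i=k,j=l\}}$, $\bE[(w_1)_j(w_1)_l] = \mathbf{1}_{\{j=l\}}$, $\bE[(w_2)_i(w_2)_k] = \mathbf{1}_{\{i=k\}}$, and $\bE[g^2]=1$, we obtain
$$\tfrac{1}{n}\bE[X_{ij}X_{kl}] = \mathbf{1}_{\{i=k\}}\mathbf{1}_{\{j=l\}} + \theta\,(v_S)_i(v_S)_k\mathbf{1}_{\{j=l\}} + \theta\,\mathbf{1}_{\{i=k\}}(v_S)_j(v_S)_l + \theta^2\,(v_S)_i(v_S)_j(v_S)_k(v_S)_l.$$

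Substituting $(v_S)_i(v_S)_k = \tfrac{1}{k}\mathbf{1}_{\{i,k\in S\}}$ (and similarly for the $(j,l)$ pair) and factoring the resulting expression as a product of two sums yields exactly the target covariance formula. The calculation is essentially routine; the only conceptual point, and the reason the four summands were chosen in this exact form, is that the outer-product structure of the three spike terms ($v_S w_1^\top$, $w_2 v_S^\top$, and $g v_S v_S^\top$) produces precisely the three cross terms required to factor the covariance as $\Sigma_{ik}\Sigma_{jl}$ with $\Sigma = I_d + \theta v_S v_S^\top$. There is no real obstacle beyond bookkeeping the indicators carefully.
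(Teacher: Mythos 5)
Your proposal is correct and follows essentially the same route as the paper's proof: both establish joint Gaussianity, use independence of the four summands to reduce the covariance to a sum of the four individual contributions, and then factor the resulting expression into $\left(\mathbf{1}_{\{i=k\}} + \tfrac{\theta}{k}\mathbf{1}_{\{i,k\in S\}}\right)\left(\mathbf{1}_{\{j=l\}} + \tfrac{\theta}{k}\mathbf{1}_{\{j,l\in S\}}\right)$. The only cosmetic difference is that you expand entrywise directly while the paper names the three spike matrices $G^{(1)}, G^{(2)}, G^{(3)}$ and computes their covariances separately before summing.
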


\begin{proof}
Note that the entries of $X$ are jointly Gaussian. Since the same is true of $\textsc{gw}_d(n, S, \theta)$, it suffices to verify that the covariances between the entries of $X$ are as computed in the previous lemma. Let $G^{(1)} = \sqrt{\theta} \cdot v_S w_1^\top$ and $G^{(2)} = \sqrt{\theta} \cdot w_2 v_S^\top$ and $G^{(3)} = \theta g v_Sv_S^\top$. Now observe that the covariances between the entries of $G^{(1)}$ and $G^{(2)}$ are
$$\frac{1}{n} \cdot \bE\left[ G^{(1)}_{ij} G^{(1)}_{kl} \right] = \frac{1}{n} \cdot \bE\left[ G^{(2)}_{ji} G^{(2)}_{lk} \right] = \left\{ \begin{matrix}
\frac{\theta}{k} & \textnormal{if } i, k \in S \textnormal{ and } j = l \\
0 & \textnormal{otherwise} \end{matrix} \right.$$
Furthermore, the covariances between the entries of $G^{(3)}$ are
$$\frac{1}{n} \cdot \bE\left[ G^{(3)}_{ij} G^{(3)}_{kl} \right] = \left\{ \begin{matrix}
\frac{\theta^2}{k^2} & \textnormal{if } i, j, k, l \in S \\
0 & \textnormal{otherwise} \end{matrix} \right.$$
Any consistent vectorization of the indices of $X, W$ and the $G^{(i)}$ yields that the $d^2 \times d^2$ covariance matrix of the entries of $n^{-1/2} X$ is the sum of the covariance matrices of $W$ and the $G^{(i)}$ since these four matrices are independent. Adding the covariances above and $\bE[W_{ij} W_{kl}] = \mathbf{1}_{\{(i, j) = (k, l)\}}$ yields that the entries of $X$ have the covariance matrix
\begin{align*}
\frac{1}{n} \cdot \bE\left[X_{ij} X_{kl}\right] &= \mathbf{1}_{\{(i, j) = (k, l)\}} + \frac{\theta}{k} \cdot \mathbf{1}_{\{i, k \in S, j = l\}} + \frac{\theta}{k} \cdot \mathbf{1}_{\{i = k, j, l \in S\}} + \frac{\theta^2}{k^2} \cdot \mathbf{1}_{\{i, j, k, l \in S\}} \\
&= \left( \mathbf{1}_{\{i = k\}} + \frac{\theta}{k} \cdot \mathbf{1}_{\{i, k \in S\}} \right) \left( \mathbf{1}_{\{j = l\}} + \frac{\theta}{k} \cdot \mathbf{1}_{\{j, l \in S\}} \right)
\end{align*}
as computed for $\textsc{gw}_d(n, S, \tau)$ in the previous lemma. This proves the desired result.
\end{proof}

We now give the main ingredients showing the correctness of the inverse Wishart sampling step of $\textsc{Clique-to-Wishart}$. If $n \ge d$, let $\mathcal{F}_{d, n}$ denote the set of $n \times d$ matrices $X$ with $X^\top X = I_d$. We now define the notion of left invariance of a measure on $\mathbb{R}^{n \times d}$ under the action of the orthogonal group $\mathcal{O}_n$.

\begin{definition}[Left Invariance Under $\mathcal{O}_n$]
A measure $\mu$ on $\mathbb{R}^{n \times d}$ is left invariant under $\mathcal{O}_n$ if $\mL(X) = \mL(RX)$ for all $R \in \mathcal{O}_n$ where $X \sim \mu$.
\end{definition}

As shown in \cite{eaton1983multivariate} in Example 6.16, there is a unique probability measure $\mu_{\mathcal{F}_{d, n}}$ on $\mathcal{F}_{d, n}$ that is left invariant under $\mathcal{O}_n$. In Proposition 7.2 of \cite{eaton1983multivariate}, it is shown that an $X \sim \mu_{\mathcal{F}_{d, n}}$ can be generated by applying Gram-Schmidt orthogonalization to the columns of a matrix distributed as $\mN(0, 1)^{\otimes d \times n}$, or equivalently by taking the first $d$ columns of a random matrix sampled from the Haar measure $\mu_{\mathcal{O}_n}$ on the orthogonal group $\mathcal{O}_n$.

Now note that given some $X \in \mathbb{R}^{n \times d}$, there is a unique way to write $X = R \Sigma$ where $R \in \mathcal{F}_{d, n}$ and $\Sigma \in \mathbb{R}^{d \times d}$ is positive semidefinite. In particular, $\Sigma = (X^\top X)^{1/2}$ and $R = X (X^\top X)^{1/2}$ are the unique $R$ and $\Sigma$. The next proposition from \cite{eaton1983multivariate} characterizes random matrices that are left invariant under $\mathcal{O}_n$. 

\begin{theorem}[Proposition 7.4 in \cite{eaton1983multivariate}] \label{thm:invwishart}
Let $X$ be a random matrix in $\mathbb{R}^{n \times d}$ with a law left invariant under $\mathcal{O}_n$. Let $R$ and $\Sigma$ be the unique $R \in \mathcal{F}_{d, n}$ and positive semidefinite $\Sigma \in \mathbb{R}^{d \times d}$ with $X = R \Sigma$. Then $R \sim \mu_{\mathcal{F}_{d, n}}$ and $R$ and $\Sigma$ are independently distributed. Conversely, if $R \sim \mu_{\mathcal{F}_{d, n}}$ and a random positive semidefinite matrix $\Sigma \in \mathbb{R}^{d \times d}$ are independent then $X = R\Sigma$ has a distribution that is left invariant under $\mathcal{O}_n$.
\end{theorem}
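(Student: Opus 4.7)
The plan is to exploit the equivariance of the polar decomposition $X = R\Sigma$ under the action of $\mathcal{O}_n$, together with the uniqueness of the left-invariant probability measure $\mu_{\mathcal{F}_{d,n}}$ on the Stiefel manifold.

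For the forward direction, first I would observe that if $Q \in \mathcal{O}_n$ and we set $X' = QX$, then $(X')^\top X' = X^\top Q^\top Q X = X^\top X$, so the PSD factor $\Sigma' = ((X')^\top X')^{1/2}$ equals $\Sigma$, while the orthonormal-columns factor satisfies $R' = QR$ (uniqueness here relies on the a.s.\ full column rank of $X$, which is implicit in the statement that $R$ and $\Sigma$ are uniquely determined). Consequently, the joint law of $(R,\Sigma)$ satisfies $\mathcal{L}(R,\Sigma) = \mathcal{L}(QR,\Sigma)$ for every $Q \in \mathcal{O}_n$, inherited from $\mathcal{L}(X) = \mathcal{L}(QX)$.

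Next I would pass to a regular conditional distribution: for $\mathcal{L}(\Sigma)$-a.e.\ value $s$, the conditional law $\mathcal{L}(R \mid \Sigma = s)$ satisfies $\mathcal{L}(R \mid \Sigma = s) = \mathcal{L}(QR \mid \Sigma = s)$ for all $Q$ in a countable dense subgroup of $\mathcal{O}_n$, and by weak continuity in $Q$ this extends to all of $\mathcal{O}_n$. Hence the conditional law is itself left invariant under $\mathcal{O}_n$. By the uniqueness of the left-invariant probability measure on $\mathcal{F}_{d,n}$ cited from \cite{eaton1983multivariate}, we conclude $\mathcal{L}(R \mid \Sigma = s) = \mu_{\mathcal{F}_{d,n}}$ for a.e.\ $s$. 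Since this conditional distribution does not depend on $s$, $R$ and $\Sigma$ are independent and $R \sim \mu_{\mathcal{F}_{d,n}}$.

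The converse is the easier half: given independent $R \sim \mu_{\mathcal{F}_{d,n}}$ and PSD $\Sigma$, for any $Q \in \mathcal{O}_n$ we have $QX = (QR)\Sigma$; the left-invariance of $\mu_{\mathcal{F}_{d,n}}$ gives $QR \stackrel{d}{=} R$, and independence of $R$ from $\Sigma$ lifts this to $(QR,\Sigma) \stackrel{d}{=} (R,\Sigma)$, so $QX \stackrel{d}{=} X$. The main obstacle I anticipate is the rank-deficiency issue: if $\Sigma$ is singular on a positive-probability set then $R$ is not uniquely determined by $X$, and one must either exclude this via an a.s.\ full-rank hypothesis (implicit in the uniqueness clause of the statement) or make a measurable selection of $R$ on the singular set and verify that the equivariance argument still goes through up to null sets. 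The other technical point is justifying the passage from ``invariant for all $Q$'' at the unconditional level to the conditional statement for a.e.\ $\Sigma$, which requires the standard regular conditional distribution machinery on Polish spaces.
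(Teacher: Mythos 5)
The paper does not prove this statement at all: it is imported verbatim as Proposition 7.4 of \cite{eaton1983multivariate}, so there is no internal argument to compare against beyond the uniqueness of the invariant probability measure $\mu_{\mathcal{F}_{d,n}}$ (Example 6.16 of the same reference), which both you and the paper take as given. Your reconstruction is correct and is essentially the standard proof of Eaton's result: the polar factorization is equivariant, $QX = (QR)\Sigma$ with the PSD factor unchanged, so left invariance of $\mL(X)$ transfers to the joint law of $(R,\Sigma)$; passing to a regular conditional distribution of $R$ given $\Sigma$ (with the countable-dense-subgroup plus weak-continuity step to upgrade ``for each $Q$, a.e.\ $\Sigma$'' to ``a.e.\ $\Sigma$, all $Q$'') and invoking uniqueness of the invariant measure on $\mathcal{F}_{d,n}$ yields $\mL(R \mid \Sigma) = \mu_{\mathcal{F}_{d,n}}$, hence independence; the converse is the one-line computation you give. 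The rank-deficiency caveat you raise is handled exactly as you suggest: the theorem's phrasing (``the unique $R$ and $\Sigma$'') presupposes a.s.\ full column rank, and in the paper's only application (Lemma \ref{lem:invwishart}) one has $n \ge d$, $\Sigma \succ 0$, and $X^\top \sim \mN(0,\Sigma)^{\otimes n}$ arranged columnwise, so $X^\top X$ is a.s.\ invertible and the issue never arises. In short: correct, with no divergence from the paper to report since the paper delegates the proof entirely to the cited reference.
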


As a simple corollary of this theorem, we obtain the following lemma showing the correctness of Step 5 in $\textsc{Clique-to-Wishart}$.

\begin{lemma}[Inverse Wishart Sampling] \label{lem:invwishart}
Suppose that $n \ge d$ and let $\Sigma \in \mathbb{R}^{d \times d}$ be a fixed positive definite matrix and let $\Sigma_e \sim \mathcal{W}_n(\Sigma, d)$. Let $R \in \mathbb{R}^{d \times n}$ be the matrix consisting of the first $d$ rows of an $n \times n$ matrix chosen randomly and independently of $\Sigma_e$ from the Haar measure $\mu_{\mathcal{O}_n}$ on $\mathcal{O}_n$. Let $(Y_1, Y_2, \dots, Y_n)$ be the $n$ columns of $\Sigma_e^{1/2} R$, then $Y_1, Y_2, \dots, Y_n \sim_{\textnormal{i.i.d.}} \mN(0, \Sigma)$.
\end{lemma}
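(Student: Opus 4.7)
The plan is to reverse-engineer the inverse Wishart sampling step by constructing an auxiliary matrix that simultaneously exhibits both the Wishart and i.i.d.\ Gaussian structures, then use Theorem \ref{thm:invwishart} to identify the two. Concretely, let $X \in \mathbb{R}^{n \times d}$ be the matrix whose $n$ rows are i.i.d.\ samples from $\mN(0, \Sigma)$. Then by definition $X^\top X \sim \mathcal{W}_n(\Sigma, d)$, and the columns of $X^\top$ (equivalently, the rows of $X$) are i.i.d.\ $\mN(0, \Sigma)$. The goal is to show $X^\top \stackrel{d}{=} \Sigma_e^{1/2} R$ with $\Sigma_e$ and $R$ as in the lemma, which immediately yields the claim.

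The first step is to verify that $\mL(X)$ is left invariant under $\mathcal{O}_n$. The entries of $X$ are zero-mean and jointly Gaussian with $\textnormal{Cov}(X_{ij}, X_{kl}) = \Sigma_{jl} \mathbf{1}_{\{i = k\}}$, and for any $R' \in \mathcal{O}_n$ a direct computation using $R' (R')^\top = I_n$ gives $\textnormal{Cov}((R'X)_{ij}, (R'X)_{kl}) = \Sigma_{jl} \mathbf{1}_{\{i = k\}}$ as well, so $R' X \stackrel{d}{=} X$. Applying Theorem \ref{thm:invwishart} gives the unique decomposition $X = R_0 \Sigma_0$ with $R_0 \in \mathcal{F}_{d, n}$ and $\Sigma_0 \in \mathbb{R}^{d \times d}$ positive semidefinite, where $R_0 \sim \mu_{\mathcal{F}_{d, n}}$ and $R_0$ is independent of $\Sigma_0$. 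Since $\Sigma_0 = (X^\top X)^{1/2}$ and $X^\top X \sim \mathcal{W}_n(\Sigma, d)$, we have $\Sigma_0 \stackrel{d}{=} \Sigma_e^{1/2}$ (and $\Sigma_0$ is almost surely positive definite because $n \ge d$).

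It remains to match $R_0^\top$ with $R$ in distribution. By the characterization of $\mu_{\mathcal{F}_{d,n}}$ stated after Theorem \ref{thm:invwishart}, $R_0$ is distributed as the first $d$ columns of a Haar-random orthogonal matrix $U \sim \mu_{\mathcal{O}_n}$. Since $\mu_{\mathcal{O}_n}$ is invariant under transposition ($U \stackrel{d}{=} U^\top$, by bi-invariance of the Haar measure), $R_0^\top$ is distributed as the first $d$ rows of a Haar-random orthogonal matrix, which is exactly the distribution of $R$. Combining this with the independence $R_0 \perp \Sigma_0$ and the assumed independence of $R$ and $\Sigma_e$, the joint law of $(\Sigma_0, R_0^\top)$ agrees with that of $(\Sigma_e^{1/2}, R)$. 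Transposing $X = R_0 \Sigma_0$ gives $X^\top = \Sigma_0 R_0^\top \stackrel{d}{=} \Sigma_e^{1/2} R$, and since the columns of $X^\top$ are i.i.d.\ $\mN(0, \Sigma)$, so are those of $\Sigma_e^{1/2} R$.

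The main conceptual point, which I expect to be the only nontrivial step, is recognizing that the inverse Wishart sampling procedure is exactly the polar decomposition of Theorem \ref{thm:invwishart} read backward, with the $\mO_n$-invariance of $\mL(X)$ and the transposition-symmetry of the Haar measure doing the work; the remaining verifications (covariance computation, handling of the square root, and independence bookkeeping) are routine.
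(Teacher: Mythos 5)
Your proposal is correct and follows essentially the same route as the paper's proof: realize the data matrix of i.i.d.\ $\mN(0,\Sigma)$ samples as a left $\mathcal{O}_n$-invariant random matrix, apply Theorem \ref{thm:invwishart} to obtain the polar-type factorization with a $\mu_{\mathcal{F}_{d,n}}$-distributed orthonormal factor independent of the positive semidefinite factor, identify the latter as the square root of a $\mathcal{W}_n(\Sigma,d)$ sample, and conclude the distributional identity. The only differences are cosmetic (you verify left invariance by a direct covariance computation rather than by whitening to $\mN(0,1)^{\otimes d \times n}$, and you make the transposition-invariance of the Haar measure explicit where the paper leaves it implicit).
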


\begin{proof}
Let $X \in \mathbb{R}^{d \times n}$ be a random matrix such that its columns $X_1, X_2, \dots, X_n$ are i.i.d. samples from $\mN(0, \Sigma)$. It follows that the columns of $Y = \Sigma^{-1/2} X$ are $\Sigma^{-1/2} X_i$ for $1 \le i \le n$ which are i.i.d. samples from $\mN(0, I_d)$. This implies that $Y \sim \mN(0, 1)^{\otimes d \times n}$ and therefore $\mL(Y) = \mL(YR_1)$ for any $R_1 \in \mathcal{O}_n$. Hence we also have that $\mL(X) = \mL(\Sigma^{1/2} Y) = \mL(\Sigma^{1/2} YR_1) = \mL(XR_1)$ and the distribution of $X^\top$ is left invariant under $\mathcal{O}_n$. Applying Theorem \ref{thm:invwishart} yields that $X^\top = R_2 \Sigma$ where $R_2 \sim \mu_{\mathcal{F}_{d, n}}$ and $\Sigma$ are independent. Note that
$$\Sigma^2 = \Sigma R_2^\top R_2 \Sigma = X^\top X = \sum_{i = 1}^n X_i X_i^\top \sim \mathcal{W}_n(\Sigma, d)$$
by the definition of the Wishart distribution. Therefore $\Sigma$ is distributed as the positive semidefinite square root of a sample from $\mathcal{W}_n(\Sigma, d)$. This implies that $\mL(X) = \mL(\Sigma R_2^\top) = \mL(\Sigma_e^{1/2} R)$, which completes the proof of the lemma.
\end{proof}

\section{Proof of Theorem \ref{thm:wishartsamp}}
\label{sec:proofmain}

We now prove Theorem \ref{thm:wishartsamp} in a sequence of propositions. The proof will consist of bounding the total variation distances of the outputs of each step of the reduction to target intermediate distributions and applying Lemma \ref{lem:tvacc}. Let $\mathcal{A}$ denote the reduction $\textsc{Clique-to-Wishart}$ and $\mathcal{A}_{s\text{-}t}$ denote the $s$th to $t$th step of the reduction. Throughout this proof, let $p, q$ and $\epsilon$ be fixed constants with $0 < q < p \le 1$ and $\epsilon \in (0, 1)$. Also let $Q = 1 - (1 - p)^{5/6}(1 - q)^{1/6} + \mathbf{1}_{\{p = 1\}} \left( q^{1/6} - 1 \right)$ and $Q' = 1 - \sqrt{(1 - p)(1 - q)} + \mathbf{1}_{\{p = 1\}} \left( \sqrt{q} - 1 \right)$ and observe that $p > Q$ and $p > Q'$. We will assume that $d$ and $n$ are less than some $\text{poly}(N)$, $k \ge k_0$ for some sufficiently large constant $k_0$ and that
$$\theta \le c \cdot \sqrt{\frac{k^{2}}{n \log N}} \quad \text{and} \quad n = \omega(N^3) \quad \text{and} \quad d \ge \left( \frac{p}{Q'} + \epsilon \right) N + k$$
for a sufficiently small constant $c > 0$. Note that $d \ge m$ where $m$ is the smallest multiple of $k$ greater than $\left( \frac{p}{Q'} + \epsilon \right) N$. We now proceed to sequentially analyze the steps of $\mathcal{A}$.

\begin{proposition}[Planting Diagonals and Cloning] \label{prop:cloning}
Let $\mathcal{A}_{\text{1-2}}$ have input $G$ and output $(X_1, X_2, X_3)$. Let $\textnormal{Unif}_{k, m}$ be the uniform distribition on subsets of $[m]$ of size $k$. Then it follows that
\begin{align*}
\TV\left(\mathcal{A}_{\text{1-2}} \left( \mathcal{G}(N, k, p, q) \right), \, \bE_{S \sim \textnormal{Unif}_{k, m}} \, \mathcal{M}\left(m, S, p, Q \right)^{\otimes 3} \right) &\le O\left( \frac{k}{\sqrt{N}} \right) + e^{-C_1 N} \\
\TV\left( \mathcal{A}_{\text{1-2}} \left( \mathcal{G}(N, q) \right), \, \left( \textnormal{Bern}\left(Q \right)^{\otimes m \times m} \right)^{\otimes 3} \right) &\le e^{-C_1 N}
\end{align*}
for some constant $C_1 = C_1(p, q, \epsilon) > 0$.
\end{proposition}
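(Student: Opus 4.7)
The plan is to decompose $\mathcal{A}_{1\text{-}2} = \mathcal{A}_2 \circ \mathcal{A}_1$ into the planting and cloning sub-steps, bound the total variation contribution of each via the lemmas proven in Sections \ref{sec:avgcasereductions} and \ref{sec:mappingsubmatrix}, and then combine using Lemma \ref{lem:tvacc}. Essentially all of the loss comes from planting; the cloning step is an \emph{exact} distributional identity and only propagates the earlier bound through the data-processing inequality.

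\textbf{Step 1 (planting diagonals).} I would invoke Lemma \ref{lem:submatrix} with its ``$n$'' taken to be $N$, its ``$N$'' taken to be $m$, and its output background probability ``$Q$'' taken to be $Q'$. Its two size hypotheses, $k \le Q' \epsilon N / 2$ and $k^2/m \le \min\{Q'/(1-Q'),(1-Q')/Q'\}$, both follow from $m = \Theta(N)$, $Q' = \Theta(1)$, and the standing assumption $N = \omega(k^2)$. Since $m = \Theta(N)$ and $Q', \epsilon$ are constants, the exponential term in Lemma \ref{lem:submatrix} becomes $4e^{-C_1 N}$ for some $C_1 = C_1(p,q,\epsilon) > 0$, while the two square-root terms combine to $O(k/\sqrt{N})$. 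Using that by definition $\mathcal{M}(m,k,p,Q') = \bE_{S \sim \textnormal{Unif}_{k,m}}\, \mathcal{M}(m,S,p,Q')$, this yields the desired intermediate bound under $H_1$ and its exponential-only analogue under $H_0$.

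\textbf{Step 2 (cloning).} I would then invoke Lemma \ref{lem:matrixcloning} with $t = 3$, starting probabilities $(p, Q')$, and target probabilities $(P, Q) = (p, Q)$. The two cloning hypotheses to verify are $\frac{1-p}{1-Q'} \le \bigl(\frac{1-p}{1-Q}\bigr)^{3}$ and $(p/Q)^{3} \le p/Q'$. For $p < 1$, substituting $1 - Q = (1-p)^{5/6}(1-q)^{1/6}$ and $1 - Q' = (1-p)^{1/2}(1-q)^{1/2}$ makes both sides of the first condition equal to $(1-p)^{1/2}(1-q)^{-1/2}$, so it holds with equality; the second condition rearranges to $Q^3 \ge p^{2} Q'$, which is tight at $p = 1$ (where $Q = q^{1/6}$ and $Q' = \sqrt{q}$, both sides equalling $\sqrt{q}$) and holds strictly for $p < 1$ by a direct calculation from the same exponents. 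Lemma \ref{lem:matrixcloning} then gives the exact identities $\mathcal{A}_2(\mathcal{M}(m,S,p,Q')) = \mathcal{M}(m,S,p,Q)^{\otimes 3}$ for every $k$-subset $S$ and $\mathcal{A}_2(\textnormal{Bern}(Q')^{\otimes m \times m}) = (\textnormal{Bern}(Q)^{\otimes m \times m})^{\otimes 3}$; because the cloning rule does not depend on $S$, averaging the first identity over $S \sim \textnormal{Unif}_{k,m}$ commutes with $\mathcal{A}_2$.

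\textbf{Combining.} Applying Lemma \ref{lem:tvacc} to the two-step chain with intermediate target $\mathcal{P}_1 = \mathcal{M}(m,k,p,Q')$ (planted) or $\textnormal{Bern}(Q')^{\otimes m \times m}$ (null), and final target $\mathcal{P}_2 = \bE_S \mathcal{M}(m,S,p,Q)^{\otimes 3}$ or $(\textnormal{Bern}(Q)^{\otimes m \times m})^{\otimes 3}$, yields TV bounds of $O(k/\sqrt{N}) + e^{-C_1 N}$ and $e^{-C_1 N}$, matching the statement. The only non-routine algebra is the inequality $Q^3 \ge p^{2} Q'$ underlying Step 2; this is the step I would expect to require the most care, but it is a one-variable calculus exercise that is the whole point of the $5/6$ exponent in the definition of $Q$. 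Everything else is a direct transcription of Lemmas \ref{lem:submatrix} and \ref{lem:matrixcloning}.
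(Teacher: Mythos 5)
Your proposal is correct and follows essentially the same route as the paper: plant diagonals via Lemma \ref{lem:submatrix} (with $m=\Theta(N)$ and $N = \omega(k^2)$ absorbing the hypotheses and the error terms into $O(k/\sqrt{N}) + e^{-C_1 N}$), verify the two cloning conditions for starting probabilities $(p,Q')$ and targets $(p,Q)$ with $t=3$, and then push the planting error through the exact, $S$-independent cloning step by data processing (equivalently your use of Lemma \ref{lem:tvacc} with zero loss in the second step). The one piece you deferred, $Q^3 \ge p^2 Q'$ for $p<1$, is precisely what the paper verifies, via the AM-GM chain $p^{2/3}Q'^{1/3} \le \tfrac{2p+Q'}{3} \le 1-(1-p)^{2/3}(1-Q')^{1/3} = Q$.
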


\begin{proof}
First observe that $\frac{1 - p}{1 - Q'} = \left( \frac{1 - p}{1 - Q} \right)^3$ by definition if $p \neq 1$ and that this equality also holds if $p = 1$. Furthermore, we have that by AM-GM
$$p^{2/3} Q'^{1/3} \le \frac{2p + Q'}{3} = 1 - \frac{2(1 - p) + (1 - Q')}{3} \le 1 - (1 - p)^{2/3} (1 - Q')^{1/3}$$
If $p \neq 1$, this rearranges to $\left( \frac{p}{Q} \right)^3 \le \frac{p}{Q'}$. If $p = 1$, then $\left( \frac{p}{Q} \right)^3 = \frac{p}{Q'}$ by definition. Applying Lemma \ref{lem:matrixcloning} now yields that $\mathcal{A}_{2}(\text{Bern}(Q')^{\otimes m \times m}) \sim \left( \text{Bern}(Q)^{\otimes m \times m} \right)^{\otimes 3}$ and that
$$\mL\left( \mathcal{A}_{2} \left( \mathcal{M}\left(m, k, p, Q' \right) \right) \right) = \bE_{S \sim \textnormal{Unif}_{k, m}} \, \mL\left( \mathcal{A}_{2} \left( \mathcal{M}\left(m, S, p, Q' \right) \right) \right) = \bE_{S \sim \textnormal{Unif}_{k, m}} \, \mathcal{M}\left(m, S, p, Q \right)^{\otimes 3}$$
Applying Lemma \ref{lem:submatrix} with the data-processing property in Fact \ref{tvfacts} yields that
\begin{align*}
\TV\left(\mathcal{A}_{\text{1-2}} \left( \mathcal{G}(N, k, p, q) \right), \, \bE_{S \sim \textnormal{Unif}_{k, m}} \, \mathcal{M}\left(m, S, p, Q \right)^{\otimes 3} \right) &\le \TV\left(\mathcal{A}_{1} \left( \mathcal{G}(N, k, p, q) \right), \mathcal{M}\left(m, k, p, Q' \right) \right) \\
&\le 4 \cdot \exp\left( - \frac{Q'^2 \epsilon^2 n^2}{16m} \right) + \sqrt{\frac{k^2(1 - Q')}{2mQ'}} \\
&\quad \quad + \sqrt{\frac{k^2Q'}{2m(1 - Q')}} \\
&\le O\left( \frac{k}{\sqrt{N}} \right) + e^{-C_1 N}
\end{align*}
for some $C_1 = C_1(p, q, \epsilon) > 0$ since $N \le m \le \left( \frac{p}{Q'} + \epsilon \right) N + k$. Similarly, observe that
$$\TV\left(\mathcal{A}_{\text{1-2}} \left( \mG(N, q) \right), \, \left( \text{Bern}(Q)^{\otimes m \times m} \right)^{\otimes 3} \right) \le \TV\left(\mathcal{A}_{1} \left( \mathcal{G}(N, q) \right), \text{Bern}(Q')^{\otimes m \times m} \right) \le e^{-C_1 N}$$
completing the proof of the proposition.
\end{proof}

Now let $v_S = \frac{1}{\sqrt{k}} \cdot \mathbf{1}_S \in \mathbb{R}^m$ for each $S \subseteq [m]$ of size $k$. The analysis of $\mathcal{A}_3$ in the next proposition is where we use the fact that $\theta$ is beneath the optimal barrier of $\theta = \tilde{o}(\sqrt{k^2/n})$.

\begin{proposition}[Generating the Mean Matrix] \label{prop:mean}
Let $\mathcal{A}_3$ has input $X_1$ and output $M$. Then for each fixed $S \subseteq [N]$ of size $k$, we have that
\begin{align*}
\TV\left( \mathcal{A}_3 \left( \mathcal{M}\left(m, S, p,Q\right) \right), \, \theta \sqrt{3} \left( \sqrt{n/2} + g \right) v_S v_S^\top + \mN(0, 1)^{\otimes m \times m} \right) &= O(m^{-1}) \\
\TV\left(\mathcal{A}_3 \left( \textnormal{Bern}\left(Q \right)^{\otimes m \times m} \right), \, \mN(0, 1)^{\otimes m \times m} \right) &= O(m^{-1})
\end{align*}
where $g \sim \mN(0, 1)$ is sampled independently.
\end{proposition}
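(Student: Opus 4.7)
The plan is to handle the two substeps of $\mathcal{A}_3$ in order: first absorb the truncation of $g$ in substep (1) into a negligible total variation error, and then analyze the $\textsc{Gaussianize}$ call in substep (2) by conditioning on the realized post-truncation value. Let $g'$ denote the value produced by substep (1), so that $g'=g$ if $|g|<4\sqrt{\log m}$ and $g'=0$ otherwise. A standard Gaussian tail bound gives $\bP[|g|\ge 4\sqrt{\log m}]\le 2m^{-8}$, and coupling $g'$ with $g$ to agree on this event yields $\TV(\mL(g'),\mN(0,1))=O(m^{-8})$.

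The main obstacle is to verify the hypotheses of Lemma~\ref{lem:gaussianize} conditional on $g'=g_0$ with $|g_0|\le 4\sqrt{\log m}$. In this case the mean parameters $\mu_{ij}=\tfrac{\theta\sqrt 3}{k}(\sqrt{n/2}+g_0)$ are deterministic and constant in $(i,j)$; nonnegativity is immediate since $\sqrt{n/2}\gg 4\sqrt{\log m}$ when $n=\omega(N^3)$ and $m=\mathrm{poly}(N)$. For the upper bound, substituting $\theta\le c\sqrt{k^2/(n\log N)}$ gives
$$\mu_{ij}\;\le\;\frac{c\sqrt 3}{\sqrt{n\log N}}\bigl(\sqrt{n/2}+4\sqrt{\log m}\bigr)\;=\;\frac{c\sqrt 3}{\sqrt{2\log N}}+O\!\bigl(c\sqrt{\log m/(n\log N)}\bigr),$$
where the second term is $o(1)$. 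Since $P=p$ and $Q$ are fixed constants with $p-Q=\Theta(1)$ and $\min(Q,1-Q)=\Theta(1)$, the threshold $\tau$ permitted by Lemma~\ref{lem:gaussianize} scales as $\Theta(1/\sqrt{\log m})=\Theta(1/\sqrt{\log N})$. Choosing $c$ small enough (independent of all other parameters) therefore guarantees $\mu_{ij}\le\tau$ uniformly over $|g_0|\le 4\sqrt{\log m}$. This is precisely where the hypothesis $\theta\le c\sqrt{k^2/(n\log N)}$ is used, and it is the one nontrivial calculation in the argument.

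With the hypotheses verified, Lemma~\ref{lem:gaussianize} gives that conditional on $g'=g_0$ the total variation between the output of $\mathcal{A}_3$ and the relevant conditional target is $O(m^{-1})$: against $\theta\sqrt 3(\sqrt{n/2}+g_0)\,v_Sv_S^\top+\mN(0,1)^{\otimes m\times m}$ under $H_1$, and against $\mN(0,1)^{\otimes m\times m}$ under $H_0$, where the planted block is rewritten using $\mathbf 1_S\mathbf 1_S^\top=k\,v_Sv_S^\top$. Averaging these conditional bounds via the conditioning property of Fact~\ref{tvfacts} yields unconditional $O(m^{-1})$ bounds in which $g'$ appears in place of $g$; a last application of the triangle inequality together with $\TV(\mL(g'),\mN(0,1))=O(m^{-8})$ and data processing replaces $g'$ by an independent $g\sim\mN(0,1)$ in the $H_1$ target, producing the stated $O(m^{-1})$ bounds under both hypotheses.
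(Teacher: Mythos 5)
Your proposal is correct and follows essentially the same route as the paper's proof: verify the $\mu_{ij}\le\tau$ hypothesis of Lemma~\ref{lem:gaussianize} using $\theta\le c\sqrt{k^2/(n\log N)}$ with $c$ small and $m=\Theta(N)=O(n)$, apply the lemma conditionally on the realized value of $g$, and then average via the conditioning property of Fact~\ref{tvfacts} while charging the truncation to a Gaussian tail bound. The only difference is organizational — you separate the truncation error into a final triangle-inequality/data-processing step, whereas the paper absorbs it into a single application of the conditioning property — and both give the stated $O(m^{-1})$ bounds.
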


\begin{proof}
First observe that it almost surely holds $|g| \le 4 \sqrt{\log m}$. Therefore it follows that $\mu_{ij} > 0$ for all $i, j \in [m]$ and that
$$\mu_{ij} = \frac{\theta\sqrt{3}}{k} \left( \sqrt{\frac{n}{2}} + g \right) \le \frac{c \sqrt{3}}{\sqrt{n \log N}} \cdot \left( \sqrt{\frac{n}{2}} + 4 \sqrt{\log m} \right) \le \frac{\min\left\{ \log\left( \frac{p}{Q} \right), \log \left( \frac{1 - Q}{1 - p} \right) \right\}}{2\sqrt{6 \log m + 2\log(p - Q)^{-1}}}$$
as long as $c > 0$ is sufficiently small since $m = \Theta(N) = O(n)$. If $X_1 \sim \mathcal{M}(n, S, p, Q)$ and $M = \mathcal{A}_3(X_1)$, then applying Lemma \ref{lem:gaussianize} for each fixed $g = g'$ with $|g'| \le 4 \sqrt{\log m}$ with the conditioning property of Fact \ref{tvfacts} implies that if $g_1 \sim \mN(0, 1)$, then
\begin{align*}
&\TV\left( \mL(M), \, \theta \sqrt{3} \left( \sqrt{n/2} + g_1 \right) v_S v_S^\top + \mN(0, 1)^{\otimes m \times m} \right) \\
&\quad \quad \quad \le \TV\left( \mL(g), \mL(g_1) \right) + \bE_{g' \sim \mL(g)} \left[ \TV\left( \mL(M | g=g'), \, \frac{\theta \sqrt{3}}{k} \left( \sqrt{n/2} + g' \right) \mathbf{1}_S \mathbf{1}_S^\top + \mN(0, 1)^{\otimes m \times m} \right) \right] \\
&\quad \quad \quad = \bP\left[|g_1| > 4 \sqrt{\log m}\right] + O(m^{-1}) \\
&\quad \quad \quad \le 2 \cdot \frac{1}{4\sqrt{2\pi \log m}} \cdot e^{-\left(4\sqrt{\log m}\right)^2/2} + O(m^{-1}) \\
&\quad \quad \quad = O(m^{-1})
\end{align*}
where the last inequality uses the fact that $1 - \Phi(t) \le \frac{1}{\sqrt{2\pi}} \cdot t^{-1} e^{-t^2/2}$ for all $t \ge 1$. Similarly, if $X_1 \sim \textnormal{Bern}\left(Q \right)^{\otimes m \times m}$, then applying Lemma \ref{lem:gaussianize} for each fixed $g = g'$ yields that
$$\TV\left( \mL(M), \, \mN(0, 1)^{\otimes m \times m} \right) \le \bE_{g' \sim \mL(g)} \left[ \TV\left( \mL(M | g=g'), \, \mN(0, 1)^{\otimes m \times m} \right) \right] = O(m^{-1})$$
which completes the proof of the proposition.
\end{proof}

\begin{proposition}[Generating the Cross-Term Matrices] \label{prop:crossterms}
Let $\mathcal{A}_{4L}$ have input $X_2$ and output $C_L$. Then for each fixed $S \subseteq [N]$ of size $k$, we have that
\begin{align*}
\TV\left( \mathcal{A}_{4L} \left( \mathcal{M}\left(m, S, p, Q\right) \right), \, \sqrt{3\theta} \cdot w v_S^\top + \mN(0, 1)^{\otimes m \times m} \right) &\le O(m^{-1}) + k(4e^{-3})^{m/2k} \\
\TV\left(\mathcal{A}_{4L} \left( \textnormal{Bern}\left(Q \right)^{\otimes m \times m} \right), \, \mN(0, 1)^{\otimes m \times m} \right) &= O(m^{-1})
\end{align*}
where $w \sim \mN(0, I_m)$ is sampled independently.
\end{proposition}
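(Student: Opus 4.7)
\medskip

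\textbf{Proof Proposal.} The plan is to directly invoke Lemma \ref{lem:randomrotations} applied to $\mathcal{A}_{4L}$, and then match the resulting Gaussian distribution with the desired rank-one-plus-noise representation $\sqrt{3\theta} \cdot w v_S^\top + \mN(0, 1)^{\otimes m \times m}$. Recall $\mathcal{A}_{4L}$ is the composition of $\chi^2\textsc{-Random-Rotation}$ (with dimension $m$, Bernoulli probabilities $P$ and $Q$, and $\tau = \frac{2\sqrt{3m\theta}}{k}$) followed by a transposition that produces $C_L$ from $C_L^\top$.

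First I would verify that $\tau$ satisfies the hypothesis of Lemma \ref{lem:randomrotations}. Since $p$ and $Q$ are fixed constants with $p > Q$, the quantity $\delta = \min\{ \log(P/Q), \log((1-Q)/(1-P)) \}$ and $\log(P-Q)^{-1}$ are both $\Theta(1)$, so the RHS of the required bound is $\Theta(1/\sqrt{\log m})$. On the other hand, using $m = \Theta(N)$, $n = \omega(N^3)$, and $\theta \le c \sqrt{k^2/(n \log N)}$,
\[
\tau^2 \;=\; \frac{12 m \theta}{k^2} \;\le\; \frac{12 c\, m}{k \sqrt{n \log N}} \;=\; O\!\left(\frac{1}{k N^{1/2} \sqrt{\log N}}\right),
\]
which is $o(1/\log m)$; hence for $c$ small enough the hypothesis $\tau \le \delta / (2\sqrt{6 \log m + 2 \log(P-Q)^{-1}})$ holds. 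In particular the signal strength is $\frac{\tau^2 k^2}{4m} = 3\theta$, which will match the target covariance.

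Next, I would apply Lemma \ref{lem:randomrotations} together with the data-processing inequality (Fact \ref{tvfacts}) applied to the transposition map to obtain
\[
\TV\!\Big( \mL(C_L) , \; \mN\!\big(0,\, I_m + 3\theta\, v_S v_S^\top\big)^{\otimes m}\text{ arranged as rows of an }m\times m\text{ matrix} \Big) \;\le\; O(m^{-1}) + k(4e^{-3})^{m/2k},
\]
and the corresponding $O(m^{-1})$ bound under $H_0$ is immediate since $\mN(0,1)^{\otimes m\times m}$ is preserved by transposition. The final step is to identify the tensorized Gaussian on the right with $\sqrt{3\theta}\, w v_S^\top + \mN(0,1)^{\otimes m \times m}$. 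For this, observe that the rows of the latter are $\sqrt{3\theta}\, w_i v_S + g_i$ for $i \in [m]$, where the $w_i$ are the coordinates of $w \sim \mN(0, I_m)$ and the $g_i \sim \mN(0, I_m)$ are the independent rows of the standard Gaussian matrix. These rows are independent and each has mean zero and covariance $3\theta\, v_S v_S^\top + I_m$, so they are exactly i.i.d.\ $\mN(0, I_m + 3\theta\, v_S v_S^\top)$, matching the right side of the bound above. Combining yields both inequalities in the proposition.

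The routine checks are the $\tau$-bound verification and the identification of the two Gaussian parametrizations; the only mild subtlety is bookkeeping the transpose between $C_L$ and the output of $\chi^2\textsc{-Random-Rotation}$, together with confirming $\tau^2 k^2/(4m) = 3\theta$. The rest follows mechanically from Lemma \ref{lem:randomrotations}.
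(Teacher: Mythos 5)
Your proposal is correct and follows essentially the same route as the paper's proof: verify the $\tau$ hypothesis of Lemma \ref{lem:randomrotations} using $\theta \le c\sqrt{k^2/(n\log N)}$ and $m = \Theta(N) = o(n^{1/3})$, note that $\tau^2 k^2/(4m) = 3\theta$, and identify $\sqrt{3\theta}\, w v_S^\top + \mN(0,1)^{\otimes m \times m}$ with the i.i.d.\ spiked Gaussian law produced by the lemma. Your explicit handling of the transpose between $C_L$ and the $\chi^2\textsc{-Random-Rotation}$ output via data processing is a detail the paper leaves implicit, but it is the same argument.
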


\begin{proof}
First observe that the independence of the two terms in $X \sim \mL(\sqrt{3\theta} \cdot w v_S^\top + \mN(0, 1)^{\otimes m \times m})$ implies the entries of this matrix are zero-mean and jointly Gaussian. Furthermore, the rows are i.i.d. with covariance matrix
$$\bE\left[ XX^\top \right] = I_m + 3 \theta \cdot v_S v_S^\top = I_m + \frac{\tau^2 k^2}{4m} \cdot v_S v_S^\top$$
Now note that since $m = O(N) = o(n^{1/3})$, we have that
$$\tau = \frac{2\sqrt{3m\theta}}{k} \le \sqrt[4]{\frac{c^2 m^2}{k^2 n \log N}} \le \frac{\min\left\{ \log\left( \frac{p}{Q} \right), \log \left( \frac{1 - Q}{1 - p} \right) \right\}}{2\sqrt{6 \log m + 2\log(p - Q)^{-1}}}$$
as long as $k \ge k_0$ is sufficiently large. The proposition is now a direct application of Lemma \ref{lem:randomrotations}.
\end{proof}

Note that in Step 6, $Y = 0$ if $\Sigma_e$ is not positive semidefinite. We remark that total variation loss from this failure mode is essentially baked into the bounds in the following lemma. In particular, any coupling representing the total variations below has $\Sigma_e$ not equal to the corresponding Wishart matrix if $\Sigma_e$ is not positive semidefinite.

\begin{proposition}[Forming the Scaled Empirical Covariance Matrix]
Let the input of $\mathcal{A}_5$ be $(M, C_L, C_R)$ and the output of $\mathcal{A}_5$ be $\Sigma_e$. If it holds that $M, C_L, C_R \sim_{\textnormal{i.i.d.}} \mN(0, 1)^{\otimes m \times m}$, then
$$\TV\left( \mL(\mathcal{A}_5(M, C_L, C_R)), \, \mathcal{W}_n(I_m, m) \right) \to 0 \quad \text{as } k \to \infty$$
Furthermore, suppose that $M, C_L, C_R$ are independent with
\begin{align*}
M &\sim \theta \sqrt{3} \left( \sqrt{n/2} + g \right) v_S v_S^\top + \mN(0, 1)^{\otimes m \times m} \\
C_L &\sim \sqrt{3\theta} \cdot w_1 v_S^\top + \mN(0, 1)^{\otimes m \times m} \\
C_R &\sim \sqrt{3\theta} \cdot v_S w_2^\top + \mN(0, 1)^{\otimes m \times m}
\end{align*}
where $g \sim \mN(0, 1)$, $w_1, w_2 \sim_{\textnormal{i.i.d.}} \mN(0, I_m)$ and $S \subseteq [m]$ with $|S| = k$ is fixed. Then it follows that
$$\max_S \TV\left( \mL(\mathcal{A}_5(M, C_L, C_R)), \, \mathcal{W}_n(I_m + \theta v_S v_S^\top, m) \right) \to 0 \quad \text{as } k \to \infty$$
where the maximum is over all $S \subseteq [N]$ with $|S| = k$.
\end{proposition}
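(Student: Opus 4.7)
The plan is to show that the output $\Sigma_e$ of Step~5 has \emph{exactly} the distribution $nI_m+\sqrt{n}\cdot\textsc{goe}(m)$ in the null case and $n(I_m+\theta v_Sv_S^\top)+\textsc{sgw}_m(n,S,\theta)$ in the planted case, at which point Theorem~\ref{thm:bubeck} and Lemma~\ref{lem:plantedwishart} close out the two statements respectively. The only nontrivial content is an algebraic identity: verifying that the particular linear combination $\sqrt{n/6}(M+M^\top+C_L+C_L^\top+C_R+C_R^\top)$ produces, when expanded, precisely the symmetrized decomposition from Lemma~\ref{lem:indrep}.

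For the null case, since $M,C_L,C_R\sim_{\text{i.i.d.}}\mN(0,1)^{\otimes m\times m}$, I will set $W=(M+C_L+C_R)/\sqrt{3}$, which is itself distributed as $\mN(0,1)^{\otimes m\times m}$ because the sum of three i.i.d.\ standard Gaussians divided by $\sqrt{3}$ is standard Gaussian. Direct computation then gives $\sqrt{n/6}(M+M^\top+C_L+C_L^\top+C_R+C_R^\top)=\sqrt{n/2}(W+W^\top)$, and since $(W+W^\top)/\sqrt{2}\sim\textsc{goe}(m)$ by definition, $\Sigma_e=nI_m+\sqrt{n}\cdot\textsc{goe}(m)$ identically. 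The assumption $n=\omega(N^3)$ together with $m=\Theta(N)$ gives $n/m^3\to\infty$, so Theorem~\ref{thm:bubeck} yields the first claim.

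For the planted case I will decompose each input into its rank-one signal piece plus an independent noise matrix: $M=\theta\sqrt{3}(\sqrt{n/2}+g)v_Sv_S^\top+M'$, $C_L=\sqrt{3\theta}\,w_1v_S^\top+C_L'$, $C_R=\sqrt{3\theta}\,v_Sw_2^\top+C_R'$, where $M',C_L',C_R',w_1,w_2,g$ are all mutually independent by construction. After multiplying the symmetrized sum by $\sqrt{n/6}$, the constants telescope cleanly: the $v_Sv_S^\top$ coefficient becomes $n\theta+\theta\sqrt{2n}\,g$, the cross-term contribution becomes $\sqrt{n\theta/2}(v_S(w_1+w_2)^\top+(w_1+w_2)v_S^\top)$, and the noise contribution again collapses to $\sqrt{n/2}(W+W^\top)$ with $W=(M'+C_L'+C_R')/\sqrt{3}\sim\mN(0,1)^{\otimes m\times m}$. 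Comparing this term by term with the symmetrization $\tfrac{1}{\sqrt{2}}(X+X^\top)$ of the representation $X=\sqrt{n}(\tilde W+\sqrt{\theta}\,v_S\tilde w_1^\top+\sqrt{\theta}\,\tilde w_2v_S^\top+\theta\tilde g\,v_Sv_S^\top)$ from Lemma~\ref{lem:indrep}, the two laws agree exactly: after symmetrization only the sum $\tilde w_1+\tilde w_2$ appears on the target side, and both $w_1+w_2$ and $\tilde w_1+\tilde w_2$ are $\mN(0,2I_m)$ and independent of the other components. Hence $\mL(\Sigma_e)=\mL\bigl(n(I_m+\theta v_Sv_S^\top)+\textsc{sgw}_m(n,S,\theta)\bigr)$ as equalities of distributions, and Lemma~\ref{lem:plantedwishart} closes the second claim uniformly in $S$ (its statement is already in terms of $\max_S$).

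The main obstacle is just the constant bookkeeping: the prefactors $\theta\sqrt{3}$, $\sqrt{3\theta}$, and the outer $\sqrt{n/6}$ were chosen in Propositions~\ref{prop:mean} and~\ref{prop:crossterms} precisely so that three independent noise matrices average to one $\textsc{goe}$-like symmetric matrix at the correct scale, and so that the $1/\sqrt{2}$ in the definition of $\textsc{sgw}$ absorbs the extra $\sqrt{3}$'s from the signal parts, giving $n\theta$ (not some other multiple) as the deterministic coefficient of $v_Sv_S^\top$. There are no analytic difficulties beyond this algebra; the event $\{\Sigma_e\not\succeq 0\}$ that triggers the degenerate output $Y=0$ in Step~6 is not an issue here, as it is absorbed into the already-accounted-for total variation loss in Lemma~\ref{lem:plantedwishart}.
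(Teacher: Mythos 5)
Your proposal is correct and follows essentially the same route as the paper: combine the three i.i.d.\ Gaussian matrices into a single $\textsc{goe}$-scaled term and apply Theorem~\ref{thm:bubeck} under $H_0$, and under $H_1$ match $\Sigma_e$ with $n(I_m+\theta v_Sv_S^\top)+\textsc{sgw}_m(n,S,\theta)$ via the Lemma~\ref{lem:indrep} representation and conclude with Lemma~\ref{lem:plantedwishart}. The only cosmetic difference is that the paper writes an exact pointwise identity using the same $g,w_1,w_2$, whereas you match the laws of the independent components after symmetrization; both are valid.
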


\begin{proof}
If $M, C_L, C_R \sim_{\textnormal{i.i.d.}} \mN(0, 1)^{\otimes m \times m}$, it follows that $X = \frac{1}{\sqrt{3}} \left(  M + C_L + C_R \right)$ is also distributed as $\mN(0, 1)^{\otimes m \times m}$ and therefore that $\frac{1}{\sqrt{2}} \left( X + X^\top \right) \sim \textsc{goe}(m)$. Since $n = \omega(m^3)$, it follows that
$$\TV\left( \mathcal{W}_n(I_m, m), \mL(\Sigma_e) \right) = \TV\left( \mathcal{W}_n(I_m, m), n \cdot I_m + \sqrt{n} \cdot \textsc{goe}(m) \right) \to 0 \quad \text{as } k \to \infty$$
by Theorem \ref{thm:bubeck}, proving the first claim in the statement of the proposition. Now suppose that
\begin{align*}
M &= \theta \sqrt{3} \left( \sqrt{n/2} + g \right) v_S v_S^\top + G_1 \\
C_L &= \sqrt{3\theta} \cdot w_1 v_S^\top + G_2 \\
C_R &= \sqrt{3\theta} \cdot v_S w_2^\top + G_3
\end{align*}
where $G_1, G_2, G_3 \sim_{\text{i.i.d.}} \mN(0, 1)^{\otimes m \times m}$, $g \sim \mN(0, 1)$ and $w_1, w_2 \sim_{\textnormal{i.i.d.}} \mN(0, I_m)$ are independent, and $S \subseteq [m]$ is fixed. Now consider the matrix
$$X = \sqrt{n} \cdot \left[ \frac{1}{\sqrt{3}} \left( G_1 + G_2 + G_3 \right) + \sqrt{\theta} \cdot w_1 v_S^\top + \sqrt{\theta} \cdot v_S w_2^\top + \theta g v_S v_S^\top \right]$$
Note that $\frac{1}{\sqrt{3}} \left( G_1 + G_2 + G_3 \right) \sim \mN(0, 1)^{\otimes m \times m}$ and the four terms above are independent, which implies by Lemma \ref{lem:indrep} that $X \sim \textsc{gw}_m(n, S, \theta)$. Now observe that
\begin{align*}
\Sigma_e &= n \cdot I_m + \sqrt{\frac{n}{6}} \left( M + M^\top + C_L + C_L^\top + C_R + C_R^\top \right) \\
&= n \left( I_m + \theta v_S v_S^\top \right) + \frac{1}{\sqrt{2}} \left( X + X^\top \right)
\end{align*}
Now by Lemma \ref{lem:plantedwishart}, it follows that $\max_S \TV\left( \mathcal{W}_n( I_m + \theta v_S v_S^\top, m), \mL(\Sigma_e) \right) \to 0$ as $k \to \infty$ since $n = \omega(m^3)$ and $\frac{1}{\sqrt{2}} \left( X + X^\top \right) \sim \textsc{sgw}_m(n, S, \theta)$. This completes the proof of the proposition.
\end{proof}

We now deduce the correctness of the inverse Wishart sampling step from Lemma \ref{lem:invwishart} and give a simple analysis of the final padding step.

\begin{proposition}[Inverse Wishart Sampling]
If $\mathcal{A}_6$ has input $\Sigma_e$ and output $Y = (Y_1, Y_2, \dots, Y_n)$, then for each fixed $S \subseteq [m]$ of size $k$
\begin{align*}
&\mathcal{A}_{6} \left( \mathcal{W}_n(I_m, m) \right) \sim \mN(0, I_m)^{\otimes n} \quad \text{and}\\
&\mathcal{A}_{6} \left( \mathcal{W}_n(I_m + \theta v_S v_S^\top, m) \right) \sim \mN\left(0, I_m + \theta v_S v_S^\top\right)^{\otimes n}
\end{align*}
\end{proposition}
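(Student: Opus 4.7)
The plan is to obtain this proposition as an essentially immediate corollary of Lemma~\ref{lem:invwishart}, applied once to each of the two target covariance matrices. The operation $\mathcal{A}_6$ is exactly the map $\Sigma_e \mapsto \Sigma_e^{1/2} R_\Sigma$ of that lemma, modulo the defensive clause that sets $Y = 0$ when $\Sigma_e$ fails to be positive semidefinite; so the main work is to confirm that the hypotheses of the lemma are met in both cases and to argue that the defensive clause contributes nothing distributionally.

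First I would verify the structural hypotheses. Lemma~\ref{lem:invwishart} requires $n \ge d$ with $d = m$, which holds since $m \le \left(\tfrac{p}{Q'} + \epsilon\right) N + k$ is $O(N)$ while $n = \omega(N^3)$. It also requires the true covariance to be positive definite; this is immediate for $I_m$, and for $I_m + \theta v_S v_S^\top$ it follows from $\theta > 0$. Next I would observe that a sample from $\mathcal{W}_n(\Sigma, m)$ with $n \ge m$ and $\Sigma \succ 0$ is almost surely positive definite (its density is supported on the positive definite cone), so the event $\{\Sigma_e \not\succeq 0\}$ carries probability zero and the branch $Y = 0$ does not affect the output law. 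Finally, the matrix $R_\Sigma \in \mathbb{R}^{m \times n}$ constructed in Step 6 from the top $m$ rows of a Haar-random element of $\mathcal{O}_n$, independently of $\Sigma_e$, is precisely the matrix called $R$ in Lemma~\ref{lem:invwishart} with $d = m$.

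With these verifications in hand, Lemma~\ref{lem:invwishart} applied with $\Sigma = I_m$ yields the first identity, and applied with $\Sigma = I_m + \theta v_S v_S^\top$ yields the second. No genuine obstacle is expected here; all of the substantive content—namely the isotropy argument packaged via Eaton's characterization of $\mathcal{O}_n$-left-invariant distributions in Theorem~\ref{thm:invwishart}—has already been carried out inside Lemma~\ref{lem:invwishart}, so the present proposition reduces to matching the notation of $\mathcal{A}_6$ to that lemma.
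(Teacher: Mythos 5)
Your proposal is correct and takes essentially the same route as the paper: the paper's proof is exactly the two applications of Lemma~\ref{lem:invwishart} with $\Sigma = I_m$ and $\Sigma = I_m + \theta v_S v_S^\top$. Your additional checks (that $n \ge m$, that both covariances are positive definite, and that the $Y=0$ branch occurs with probability zero under the exact Wishart law) are sound and merely make explicit what the paper leaves implicit.
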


\begin{proof}
The first statement in the proposition follows from Lemma \ref{lem:invwishart} applied with $\Sigma = I_m$ and the second follows from Lemma \ref{lem:invwishart} applied with $\Sigma = I_m + \theta v_S v_S^\top$.
\end{proof}

\begin{proposition}[Padding]
Let $\mathcal{A}_7$ have input $Y$ and output the columns $(Z_1, Z_2, \dots, Z_n)$ of $Z$. Then for each fixed $S \subseteq [m]$ of size $k$
\begin{align*}
\mathcal{A}_7\left( \mN\left(0, I_m + \theta v_S v_S^\top\right)^{\otimes n} \right) &\sim \bE_{S \sim \textnormal{Unif}_{k, d}} \, \mN\left(0, I_d + \theta v_Sv_S^\top \right)^{\otimes n} \\
\mathcal{A}_7\left(  \mN(0, I_m)^{\otimes n} \right) &\sim \mN(0, I_d)^{\otimes n}
\end{align*}
\end{proposition}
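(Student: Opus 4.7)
The plan is to decouple the randomness of the row embedding from the randomness of the padding and the input, analyze each column of $Z$ conditionally on the embedding, and then marginalize. Let $\pi : [m] \hookrightarrow [d]$ denote the uniformly random injection implicit in $\mathcal{A}_7$ (equivalently, a uniformly random size-$m$ subset $T \subseteq [d]$ together with a uniformly random bijection $[m] \to T$). For $i \in [m]$, row $\pi(i)$ of $Z$ equals row $i$ of $Y$, and the remaining $d - m$ rows of $Z$ are i.i.d.\ $\mN(0, I_n)$. First I would condition on $\pi$ and read off the column law of $Z$, and then take expectation over $\pi$.

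For the $H_0$ identity, when $(Y_1, \dots, Y_n) \sim \mN(0, I_m)^{\otimes n}$, the matrix $Y$ has i.i.d.\ $\mN(0,1)$ entries, and $\mathcal{A}_7$ fills the remaining entries of $Z$ with i.i.d.\ $\mN(0,1)$ entries as well. Conditional on any $\pi$ the matrix $Z$ has i.i.d.\ $\mN(0,1)$ entries, so its columns are i.i.d.\ $\mN(0, I_d)$, and this identity survives marginalization over $\pi$ trivially since it holds pointwise.

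For the $H_1$ identity, fix $S \subseteq [m]$ with $|S| = k$ and assume the columns $Y_1, \ldots, Y_n$ are i.i.d.\ from $\mN(0, I_m + \theta v_S v_S^\top)$. Condition on $\pi$ and let $S' := \pi(S) \subseteq [d]$. The embedding induces the partial isometry $P_\pi : \mathbb{R}^m \to \mathbb{R}^d$ defined by $(P_\pi y)_{\pi(i)} = y_i$ and zero elsewhere, and we may write $Z_j = P_\pi Y_j + \widetilde{Z}_j$ where the $\widetilde{Z}_j$ are independent of the $Y_j$ with $(\widetilde{Z}_j)_\ell \sim \mN(0,1)$ for $\ell \notin \pi([m])$ and $(\widetilde{Z}_j)_\ell = 0$ otherwise. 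Gaussianity is preserved, and since $P_\pi v_S = v_{S'}$ and $P_\pi P_\pi^\top$ is the diagonal projector onto $\pi([m])$, the two independent Gaussian contributions sum to
\[
P_\pi (I_m + \theta v_S v_S^\top) P_\pi^\top + (I_d - P_\pi P_\pi^\top) \;=\; I_d + \theta v_{S'} v_{S'}^\top.
\]
Independence across $j$ is inherited from independence of the $Y_j$ and of the padding noise, so $\mL(Z_1, \ldots, Z_n \mid \pi) = \mN(0, I_d + \theta v_{S'} v_{S'}^\top)^{\otimes n}$.

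Marginalizing over $\pi$ then gives the claim: by symmetry, the image $\pi(S)$ of a uniformly random injection is a uniformly random size-$k$ subset of $[d]$, so $\mL(Z_1, \ldots, Z_n) = \bE_{S' \sim \textnormal{Unif}_{k,d}} \mN(0, I_d + \theta v_{S'} v_{S'}^\top)^{\otimes n}$. The whole argument yields equality in law rather than only closeness in total variation, so no quantitative estimates are needed; the only steps requiring any verification are the covariance computation above and the uniformity of $\pi(S)$, both of which follow directly from the construction of $\mathcal{A}_7$. There is no real obstacle — this step is the bookkeeping that collects the randomization of the placement of the $m$ planted rows inside $[d]$ into the desired mixture over spike supports.
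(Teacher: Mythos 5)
Your proof is correct and follows essentially the same route as the paper's: condition on the random embedding, observe that the conditional law of the columns of $Z$ is $\mN(0, I_d + \theta v_{S'} v_{S'}^\top)^{\otimes n}$ with $S'$ the image of $S$, and marginalize using the fact that $S' \sim \textnormal{Unif}_{k,d}$. The explicit covariance computation via the partial isometry $P_\pi$ is just a more detailed writing-out of the step the paper states directly.
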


\begin{proof}
It follows by definition that $Z \sim \mN(0, 1)^{\otimes d \times n}$ if $Y \sim \mN(0, 1)^{\otimes m \times n}$, implying the second identity. Now let $S' \subseteq [d]$ be the set of rows of $Z$ that the rows of $Y$ with indices in $S$ are embedded to. It follows that $\mathcal{A}_7( \mN\left(0, I_m + \theta v_S v_S^\top\right)^{\otimes n} )$ conditioned on $S'$ is distributed as $\mN\left(0, I_d + \theta v_{S'}v_{S'}^\top \right)^{\otimes n}$. Now since $S' \sim \textnormal{Unif}_{k, d}$, the first identity follows. This proves the proposition.
\end{proof}

These propositions provide upper bounds on the $\epsilon_i$ needed to apply Lemma \ref{lem:tvacc} to the steps $\mathcal{A}_i$ and complete the proof of Theorem \ref{thm:wishartsamp}.

\begin{proof}[Proof of Theorem \ref{thm:wishartsamp}]
Let $\mathcal{A}_{4R}$ denote the subprocedure of Step 3 of $\mathcal{A}$ that maps the input $X_3$ to $C_R$. Note that a symmetric statement for $\mathcal{A}_{3R}$ as in Proposition \ref{prop:crossterms} holds. With the notation introduced in the propositions above, the steps of $\mathcal{A}$ map inputs to outputs as follows
\begin{align*}
G &\xrightarrow{\mathcal{A}_1} X \xrightarrow{\mathcal{A}_2}(X_1, X_2, X_3) \xrightarrow{\mathcal{A}_{3}} \left( M, X_2, X_3 \right) \xrightarrow{\mathcal{A}_{4L}} \left( M, C_L, X_3 \right) \xrightarrow{\mathcal{A}_{4R}} \left( M, C_L, C_R \right) \xrightarrow{\mathcal{A}_5} \Sigma_e \\
&\xrightarrow{\mathcal{A}_6} Y \xrightarrow{\mathcal{A}_7} (Z_1, Z_2, \dots, Z_n)
\end{align*}
We first prove the desired result in the case that $H_0$ holds. Consider Lemma \ref{lem:tvacc} applied to the steps $\mathcal{A}_i$ above and the following sequence of distributions
\begin{align*}
\mathcal{P}_0 &= \mG(N, q) \\
\mathcal{P}_{\text{1-2}} &= \left( \text{Bern}(Q)^{\otimes m \times m} \right)^{\otimes 3} \\
\mathcal{P}_3 &= \left( \mN(0, 1)^{\otimes m \times m}, \text{Bern}(Q)^{\otimes m \times m}, \text{Bern}(Q)^{\otimes m \times m} \right) \\
\mathcal{P}_{4L} &= \left( \mN(0, 1)^{\otimes m \times m}, \, \mN(0, 1)^{\otimes m \times m}, \, \text{Bern}(Q)^{\otimes m \times m} \right) \\
\mathcal{P}_{4R} &= \left( \mN(0, 1)^{\otimes m \times m}, \, \mN(0, 1)^{\otimes m \times m}, \, \mN(0, 1)^{\otimes m \times m} \right) \\
\mathcal{P}_{5} &= \mathcal{W}_n(I_m, m) \\
\mathcal{P}_{6} &= \mN(0, I_m)^{\otimes n} \\
\mathcal{P}_{7} &= \mN(0, I_d)^{\otimes n}
\end{align*}
As shown in sequence of propositions above, $\TV( \mathcal{A}_i(\mathcal{P}_{i_-}), \mathcal{P}_i) \le \epsilon_i$ for $\epsilon_i \to 0$ as $k \to \infty$, where $i_-$ denotes the step before step $i$ in the order shown above. Therefore it follows that
$$\TV\left( \mathcal{A}\left( G(N, q) \right), \mN(0, I_d)^{\otimes n} \right) = \TV\left( \mathcal{A}\left( \mathcal{P}_0 \right), \mathcal{P}_6 \right) \to 0 \quad \text{as } k \to \infty$$
which establishes the theorem in the case that $H_0$ holds. We now prove the theorem under $H_1$. Consider first applying the conditioning property in Fact \ref{tvfacts} to mix over the choice $S \sim \textnormal{Unif}_{k, m}$ in the $H_1$ case of each of the propositions proven in this section. Now apply Lemma \ref{lem:tvacc} to the steps $\mathcal{A}_i$ and to the sequence of distributions
\begin{align*}
\mathcal{P}_0 &= \mG(N, k, p, q) \\
\mathcal{P}_{\text{1-2}} &= \bE_{S \sim \textnormal{Unif}_{k, m}} \, \mathcal{M}\left(m, S, p, Q \right)^{\otimes 3} \\
\mathcal{P}_3 &= \bE_{S \sim \textnormal{Unif}_{k, m}} \, \left( \theta\sqrt{3}\left(\sqrt{n/2} + g \right) v_S v_S^\top + \mN(0, 1)^{\otimes m \times m}, \, \mathcal{M}\left(m, S, p, Q \right), \, \mathcal{M}\left(m, S, p, Q \right) \right) \\
\mathcal{P}_{4L} &= \bE_{S \sim \textnormal{Unif}_{k, m}} \, \left( \theta\sqrt{3}\left(\sqrt{n/2} + g \right) v_S v_S^\top + \mN(0, 1)^{\otimes m \times m}, \, \theta\sqrt{3 \theta}\cdot w_1 v_S^\top + \mN(0, 1)^{\otimes m \times m}, \right.\\
&\quad \quad \quad \quad \quad \quad \quad \left. \, \mathcal{M}\left(m, S, p, Q \right) \right) \\
\mathcal{P}_{4R} &= \bE_{S \sim \textnormal{Unif}_{k, m}} \, \left( \theta\sqrt{3}\left(\sqrt{n/2} + g \right) v_S v_S^\top + \mN(0, 1)^{\otimes m \times m}, \, \theta\sqrt{3 \theta}\cdot w_1 v_S^\top + \mN(0, 1)^{\otimes m \times m}, \right. \\
&\quad \quad \quad \quad \quad \quad \quad \left. \, \theta\sqrt{3 \theta}\cdot v_S w_2^\top + \mN(0, 1)^{\otimes m \times m} \right) \\
\mathcal{P}_{5} &= \bE_{S \sim \textnormal{Unif}_{k, m}} \, \mathcal{W}_n(I_m + \theta v_S v_S^\top, m) \\
\mathcal{P}_{6} &= \bE_{S \sim \textnormal{Unif}_{k, m}} \, \mN\left(0, I_m + \theta v_S v_S^\top \right)^{\otimes n} \\
\mathcal{P}_{7} &= \bE_{S \sim \textnormal{Unif}_{k, d}} \,\mN\left(0, I_d + \theta v_S v_S^\top \right)^{\otimes n}
\end{align*}
where $g \sim \mN(0, 1)$ and $w_1, w_2 \sim \mN(0, I_m)$ are sampled independently. By the same reasoning as above, we have that
$$\TV\left( \mathcal{A}\left( G(N, k, p, q) \right), \mN\left(0, I_d + \theta v_S v_S^\top \right)^{\otimes n} \right) \to 0 \quad \text{as } k \to \infty$$
This completes the proof of the theorem.
\end{proof}

\section{Lower Bounds for Sparse PCA via $\chi^2$ Random Rotations}
\label{sec:randomrotations}

In this section, we present subsampling and cloning internal reductions within sparse PCA which, when combined with $\chi^2\textsc{-Random-Rotations}$, yield tight computational lower bounds for $\textsc{cbspca}$ in the regime $k = \tilde{O}(\sqrt{n})$ and $\textsc{ubspca}$ in the regime $k = \omega(\sqrt{n})$, respectively.

\subsection{Subsampling Internal Reduction and $\chi^2$ Random Rotations}

\begin{figure}[t!]
\begin{algbox}
\textbf{Algorithm} \textsc{Subsampling-Random-Rotations}

\vspace{2mm}

\textit{Inputs}: Graph $G \in \mG_N$, edge probabilities $0 < q < p \le 1$ with $q = N^{-O(1)}$, planted subgraph size $K \ll \sqrt{N}$, a parameter $\tau > 0$, target sample count $n$, dimension $d$, approximate sparsity $k$ and $\theta$ where $n$ is the smallest multiple of $K$ larger than $\left( \frac{p}{Q} + \epsilon \right) N + K$ where $Q = 1 - \sqrt{(1 - p)(1 - q)} + \mathbf{1}_{\{p = 1\}} \left( \sqrt{q} - 1 \right)$ and $\epsilon > 0$, $d \ge n$, $k \le K/2$ and $\theta = \tilde{o}\left(\sqrt{k^2/n}\right)$
\begin{enumerate}
\item Form $M \in \mathbb{R}^{n \times n}$ by setting $M \gets \textsc{To-Bernoulli-Submatrix}(G)$ applied with initial dimension $N$, edge probabilities $p$ and $q$ and target dimension $n$
\item Form $X \in \mathbb{R}^{n \times n}$ by setting $X \gets \chi^2\textsc{-Random-Rotation}(M)$ applied with dimension $n$, Bernoulli probabilities $p$ and $Q$ and $\tau = 2 \sqrt{\frac{n\theta}{K \cdot k}}$
\item Form $Y \in \mathbb{R}^{d \times n}$ by embedding each row of $X$ independently with probability $\frac{k}{K}$ into a random empty row of $Y$ and sampling all remaining rows of $Y$ independently from $\mN(0, I_n)$
\item Output the columns $Y_1, Y_2, \dots, Y_n$ of $Y$
\end{enumerate}
\vspace{1mm}
\end{algbox}
\caption{$\chi^2$ Random Rotations reduction with subsampling from $\textsc{pc}$ and $\textsc{pds}$ to $\textsc{cbspca}$.}
\label{fig:subsampling}
\end{figure}

In this section, we observe a subsampling property of sparse PCA and combine this with $\chi^2$ random rotations to show tight lower bounds for the composite formulation $\textsc{cbspca}$ in the regime $k \ll \sqrt{n}$. Our reduction $\textsc{Subsampling-Random-Rotations}$ is described in Figure \ref{fig:subsampling}. Together with the lower bounds shown in \cite{brennan2018reducibility}, this reduction establishes the first tight lower bounds for all sparsities $k$ in a variant of the spiked covariance model of sparse PCA. Omitting Step 3 from $\textsc{Subsampling-Random-Rotations}$ establishes lower bounds for the non-composite formulation $\textsc{ubspca}$ up to the barrier of $\theta = \tilde{o}(k^2/n)$, which is only tight at the single point $k = \tilde{\Theta}(\sqrt{n})$ and $\theta = \tilde{\Theta}(1)$, matching the reduction $\textsc{Random-Rotations}$ in \cite{brennan2018reducibility}. 

We remark that this reduction crucially relies on the planted clique conjecture up to $k = o(\sqrt{n})$. Unlike $\textsc{Clique-to-Wishart}$, given the planted clique conjecture for $k = \tilde{\Theta}(n^{\alpha})$ where $\alpha < 1/2$, $\textsc{Subsampling-Random-Rotations}$ shows computational lower bounds that are not tight at any sparsity $k$. If $\alpha = 1/3$, then it fails to show computational lower bounds exceeding the information theoretic lower bound of $k = \tilde{\Theta}(\sqrt{k/n})$. We now establish the subsampling property of sparse PCA. Let $\mathcal{A}$ denote $\textsc{Subsampling-Random-Rotations}$ and $\mathcal{A}_3$ be Step 3 with input $X$ and output $Y$.

\begin{lemma}[Subsampling Property of Sparse PCA] \label{lem:subsampling}
Let $\theta_1 > 0$ and $d \ge \frac{2kn}{K}$ and $k \le K/2$. Then there is a distribution $\pi$ over pairs $(\theta, v)$ supported on $A_{\theta_2} \times B_{k + \frac{1}{2}\gamma \sqrt{k}}$ where $A_{\theta_2}$, $B_{k + \frac{1}{2}\gamma \sqrt{k}}$ and $\gamma$ are as in the definition of $\textsc{cbspca}$ in Section \ref{sec:problems} and $\theta_2 = \frac{\theta_1 k}{K}$ satisfying
\begin{align*}
&\TV\left( \mathcal{A}_3 \left( \mN\left(0, I_n + \theta_1 v_S v_S^\top \right)^{\otimes n} \right), \,  \bE_{(\theta, v) \sim \pi} \, \mN\left(0, I_d + \theta vv^\top \right)^{\otimes n} \right) \le 2 \exp\left( - \frac{\gamma^2}{16} \right) \\
&\mathcal{A}_3 \left( \mN(0, I_n)^{\otimes n} \right) \sim \mN(0, I_d)^{\otimes n} 
\end{align*}
for each fixed subset $S \subseteq [n]$ with $|S| = K$.
\end{lemma}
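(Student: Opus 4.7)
The plan is to track the covariance of a column of $Y$ conditional on which coordinates of the planted spike survive subsampling, and then absorb the remaining randomness into the mixing measure $\pi$. The $H_0$ identity is immediate: if $X$ has i.i.d.\ $\mathcal{N}(0,1)$ entries, then retaining a random subset of rows of $X$ and filling the rest of $Y$ with independent $\mathcal{N}(0, I_n)$ rows produces $Y \sim \mathcal{N}(0,1)^{\otimes d \times n}$, whose columns are i.i.d.\ $\mathcal{N}(0, I_d)$, giving the second identity exactly.

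For $H_1$, let $J \subseteq [n]$ be the random subset of rows of $X$ retained by Step 3 and set $T = J \cap S$. Conditional on $J$, the retained rows form $n$ i.i.d.\ column samples from $\mathcal{N}(0, \Sigma|_{J \times J})$ where $\Sigma = I_n + \theta_1 v_S v_S^\top$, so $\Sigma|_{J \times J} = I_{|J|} + \theta_1 w w^\top$ with $w \in \mathbb{R}^{|J|}$ having entries $1/\sqrt{K}$ on the coordinates indexed by $T$ and zero elsewhere. Step 3 then places these rows into a uniformly random size-$|J|$ subset $J' \subseteq [d]$ of rows of $Y$ and fills $[d] \setminus J'$ with independent $\mathcal{N}(0, I_n)$ rows. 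A short bijection argument will show that conditional on $|T| = t$, the image $T' \subseteq J'$ of $T$ is uniform over $\binom{[d]}{t}$. Because the fill-in rows are i.i.d.\ $\mathcal{N}(0,1)$, and the embedded rows indexed by $J \setminus T$ are already i.i.d.\ $\mathcal{N}(0,1)$ and independent of the $S$-indexed rows of $X$, every row of $Y$ outside $T'$ is i.i.d.\ $\mathcal{N}(0,1)$ and independent of the $T'$-rows. A direct computation of the covariance of a single column of $Y$ then yields $I_d + \theta_1 \tilde v \tilde v^\top$ with $\tilde v = \mathbf{1}_{T'}/\sqrt{K}$, and rewriting $\tilde v = \sqrt{|T|/K} \cdot v$ for the unit vector $v = \mathbf{1}_{T'}/\sqrt{|T|}$ gives that the columns of $Y$ are i.i.d.\ $\mathcal{N}(0, I_d + \theta' v v^\top)$ with $\theta' = \theta_1 |T|/K = \theta_2 |T|/k$.

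To finish, I will define $\mathcal{E} = \{|T| \in [k - \gamma\sqrt{k}/2,\, k + \gamma\sqrt{k}/2]\}$ and let $\pi$ be the conditional law of $(\theta', v)$ given $\mathcal{E}$. Since each of the $K$ rows in $S$ is retained independently with probability $k/K$, we have $|T| \sim \mathrm{Bin}(K, k/K)$ with mean $k$, and standard Chernoff-Bernstein bounds for the binomial yield $\mathbb{P}[\mathcal{E}^c] \le 2\exp(-\gamma^2/16)$. On $\mathcal{E}$, $\theta' = \theta_2 |T|/k$ lies in $[\theta_2(1 - \gamma/(2\sqrt{k})), \theta_2(1 + \gamma/(2\sqrt{k}))] \subseteq A_{\theta_2}$, while the support size of $v$ lies in $[k - \gamma\sqrt{k}/2, k + \gamma\sqrt{k}/2]$, which is contained in the support-size interval defining $B_{k + \frac{1}{2}\gamma\sqrt{k}}$, so $\pi$ is supported on $A_{\theta_2} \times B_{k + \frac{1}{2}\gamma\sqrt{k}}$. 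Writing the unconditional law of $(\theta', v)$ as $\mathbb{P}(\mathcal{E}) \cdot \pi + \mathbb{P}(\mathcal{E}^c) \cdot (\text{law on } \mathcal{E}^c)$ and mixing the Gaussian kernels $\mathcal{N}(0, I_d + \theta v v^\top)^{\otimes n}$ against this decomposition, the total variation distance to $\mathbb{E}_{(\theta,v)\sim\pi} \mathcal{N}(0, I_d + \theta v v^\top)^{\otimes n}$ is at most $\mathbb{P}[\mathcal{E}^c] \le 2\exp(-\gamma^2/16)$, giving the first bound. The main delicate step is verifying the uniformity of $T'$ and the covariance identity under the random embedding; once these are in hand, the rest is routine binomial concentration together with the arithmetic check that the good event $\mathcal{E}$ places $(\theta', v)$ inside $A_{\theta_2} \times B_{k + \frac{1}{2}\gamma\sqrt{k}}$.
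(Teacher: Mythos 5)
Your proposal is correct and follows essentially the same route as the paper's proof: condition on the set $T$ of spike coordinates that survive subsampling and its image in $[d]$, identify the conditional law of the output as $\mN\left(0, I_d + \frac{\theta_1 |T|}{K} v_{T'} v_{T'}^\top\right)^{\otimes n}$, note $|T| \sim \mathrm{Bin}(K, k/K)$, take $\pi$ to be the law of $(\theta', v)$ conditioned on the good event $\left||T| - k\right| \le \frac{1}{2}\gamma\sqrt{k}$, and bound the total variation by the bad-event probability via binomial concentration. The only cosmetic differences are that you compute the column covariance directly from the block structure where the paper uses the rank-one representation $\sqrt{\theta_1/K}\,\mathbf{1}_S w^\top + \mN(0,1)^{\otimes n \times n}$, and you invoke a standard Bernstein/Chernoff bound where the paper Taylor-expands the relative-entropy Chernoff bound (the claimed uniformity of $T'$ is not even needed, since $\pi$ may be any distribution on the stated support).
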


\begin{proof}
Note that the second identity $\mathcal{A}_3 \left( \mN(0, I_n)^{\otimes n} \right) \sim \mN(0, I_d)^{\otimes n}$ follows from the definition of $\mathcal{A}_3$. Now fix a subset $S \subseteq [n]$ with $|S| = K$. Since the entries of the two matrices shown below are jointly Gaussian and have the same covariance matrix, we have that
$$\mN\left(0, I_n + \theta_1 v_S v_S^\top \right)^{\otimes n} = \mL\left( \sqrt{\frac{\theta_1}{K}} \cdot \mathbf{1}_S w^\top + \mN(0, 1)^{\otimes n \times n} \right)$$
where $w \sim \mN(0, I_n)$ is sampled independently. Let $T \subseteq [d]$ with $|T| \le K$ be the indices of the rows of $Y$ that the rows indexed by $S$ in $X$ are successfully embedded to. If $T' \subseteq [d]$ is a fixed subset with $|T'| \le K$, then observe that
\begin{align*}
\mL\left( \mathcal{A}_3 \left( \mN\left(0, I_n + \theta_1 v_S v_S^\top \right)^{\otimes n} \right) \Big| T = T' \right) &\sim \mL\left( \sqrt{\frac{\theta_1}{K}} \cdot \mathbf{1}_{T'} w^\top + \mN(0, 1)^{\otimes d \times n} \right) \\
&= \mL\left( \sqrt{\frac{\theta_1|T'|}{K}} \cdot v_{T'} w^\top + \mN(0, 1)^{\otimes d \times n} \right) \\
&= \mN\left(0, I_d + \frac{\theta_1 |T'|}{K} \cdot v_{T'} v_{T'}^\top \right)^{\otimes n}
\end{align*}
Now note that $T$ is a subset of $[d]$ with size $|T| \sim \text{Bin}(K, k/K)$ chosen uniformly at random. Now let $\pi'$ be the distribution over pairs $(\theta, v)$ where $\theta = \frac{\theta_1 |T|}{K}$ and $v = v_{T}$ induced by the distribution of $T$. Now observe that if $k - \frac{1}{2} \gamma \sqrt{k} \le |T| \le k + \frac{1}{2} \gamma \sqrt{k}$ then
$$k + \frac{1}{2}\gamma \sqrt{k} - \gamma \sqrt{k + \gamma \sqrt{k}} \le |T| \le k + \frac{1}{2} \gamma \sqrt{k}$$
and thus $v = v_T \in B_{k + \gamma \sqrt{k}}$. Furthermore, it holds that
$$\left| \theta - \theta_2 \right| = \frac{\theta_1}{K} \cdot \left| |T| - k \right| \le \frac{\theta_2 \gamma}{2\sqrt{k}}$$
and therefore $\theta \in A_{\theta_2}$. Now let $\pi$ be the distribution on $(\theta, v)$ conditioned on the event $k - \frac{1}{2} \gamma \sqrt{k} \le |T| \le k + \frac{1}{2} \gamma \sqrt{k}$. Now observe that
\begin{align*}
\bP\left[ |T| \ge k + t \right] &\le \exp\left( - K \cdot D\left( \frac{k + t}{K} \Big\| \frac{k}{K} \right) \right) \\
\bP\left[ |T| \le k - t \right] &\le \exp\left( - K \cdot D\left( \frac{k - t}{K} \Big\| \frac{k}{K} \right) \right)
\end{align*}
where $D(\cdot \| \cdot)$ denotes the binary relative entropy function. These two inequalities are well-known and can be derived by standard Chernoff bounds. Note that if $t/k \to 0$ and $k \le K/2$, then Taylor expanding yields that
\begin{align*}
K \cdot D\left( \frac{k + t}{K} \Big\| \frac{k}{K} \right) &= (k + t) \log \left( \frac{k + t}{k} \right) + (K - k - t) \log \left( \frac{K - k - t}{K - k} \right) \\
&= (k + t) \left( \frac{t}{k} - \frac{t^2}{2k^2} + O\left( \frac{t^3}{k^3} \right) \right) \\
&\quad \quad - (K - k - t) \left( \frac{t}{K - k} + \frac{t^2}{2(K - k)^2} + O\left( \frac{t^3}{(K - k)^3} \right) \right) \\
&= \frac{t^2}{k} + \frac{t^2}{K - k} - \frac{t^2(k + t)}{2k^2} - \frac{t^2(K - k - t)}{2(K - k)^2} + O\left( \frac{t^3}{k^2} \right) \\
&= \frac{t^2}{2k} + \frac{t^2}{2(K - k)} + O\left( \frac{t^3}{k^2} \right)
\end{align*}
By a nearly identical computation, we have that
$$K \cdot D\left( \frac{k - t}{K} \Big\| \frac{k}{K} \right) = \frac{t^2}{2k} + \frac{t^2}{2(K - k)} + O\left( \frac{t^3}{k^2} \right)$$
By the data-processing and conditioning property in Fact \ref{tvfacts}, we have
\begin{align*}
&\TV\left( \mathcal{A}_3 \left( \mN\left(0, I_n + \theta_1 v_S v_S^\top \right)^{\otimes n} \right), \,  \bE_{(\theta, v) \sim \pi} \, \mN\left(0, I_d + \theta vv^\top \right)^{\otimes n} \right) \\
&\quad \quad \le \TV\left( \bE_{(\theta, v) \sim \pi'} \, \mN\left(0, I_d + \theta vv^\top \right)^{\otimes n}, \bE_{(\theta, v) \sim \pi} \, \mN\left(0, I_d + \theta vv^\top \right)^{\otimes n} \right) \\
&\quad \quad = \TV\left( \pi, \pi' \right) \\
&\quad \quad = \bP \left[ \left| |T| - k \right| > \frac{1}{2} \gamma \sqrt{k} \right] \\
&\quad \quad \le 2 \exp\left( - \frac{t^2}{2k} - \frac{t^2}{2(K - k)} + O\left( \frac{t^3}{k^2} \right) \right) \\
&\quad \quad \le 2 \exp\left( - \frac{\gamma^2}{8} + o(1) \right) \le 2 \exp\left( - \frac{\gamma^2}{16} \right)
\end{align*}
where $t = \frac{1}{2} \gamma \sqrt{k}$, which completes the proof of the lemma.
\end{proof}

We now combine this lemma with the guarantees from Section \ref{sec:mappingsubmatrix} for $\textsc{To-Bernoulli-Submatrix}$ and $\chi^2\textsc{-Random-Rotations}$ to derive total variation guarantees for $\mathcal{A}$.

\begin{theorem}[Subsampling Random Rotations] \label{thm:subsamplingrandomrot}
Let $0 < q < p \le 1$ and $\epsilon > 0$ be constants. Let $N, K, n, k, d$ and $\theta > 0$ be parameters such that $n$ is the smallest multiple of $K$ large than $\left( \frac{p}{Q} + \epsilon \right) N + K$ where $Q = 1 - \sqrt{(1 - p)(1 - q)} + \mathbf{1}_{\{p = 1\}} \left( \sqrt{q} - 1 \right)$,
$$d \ge n \quad \textnormal{and} \quad k \le K/2 \quad \textnormal{and} \quad \theta \le c \cdot \frac{K \cdot k}{n \sqrt{\log N}}$$
for a sufficiently small constant $c > 0$. Then $\mathcal{A} = \textsc{Subsampling-Random-Rotations}$ runs in $\textnormal{poly}(N)$ time and
\begin{align*}
\TV\left( \mathcal{A} \left( \mG(N, k, p, q) \right), \,  \bE_{(\theta', v) \sim \pi} \, \mN\left(0, I_d + \theta' vv^\top \right)^{\otimes n} \right) &\le O\left( \frac{K}{\sqrt{N}} \right) + e^{-C_1 N} + O(n^{-1}) \\
&\quad \quad + K(4e^{-3})^{n/2K} + 2 \exp\left( - \frac{\gamma^2}{16} \right) \\
\TV\left( \mathcal{A} \left( \mG(N, q) \right), \, \mN\left(0, I_d \right)^{\otimes n} \right) &\le e^{-C_1 N} + O(n^{-1})
\end{align*}
for some distribution $\pi$ over pairs $(\theta, v)$ supported on $A_{\theta} \times B_{k + \frac{1}{2}\gamma \sqrt{k}}$ and some fixed constant $C_1 = C_1(p, q, \epsilon) > 0$.
\end{theorem}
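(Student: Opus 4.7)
The plan is to chain the three main subroutines used in \textsc{Subsampling-Random-Rotations} — namely \textsc{To-Bernoulli-Submatrix}, $\chi^2\textsc{-Random-Rotation}$, and the subsampling embedding $\mathcal{A}_3$ — through the composition bound of Lemma \ref{lem:tvacc}, bounding the total variation loss at each step using Lemma \ref{lem:submatrix}, Lemma \ref{lem:randomrotations}, and Lemma \ref{lem:subsampling}, respectively. Under $H_0$ we set the intermediate distributions to $\textnormal{Bern}(Q)^{\otimes n\times n}$ after Step 1, $\mathcal{N}(0,1)^{\otimes n\times n}$ after Step 2, and $\mathcal{N}(0,I_d)^{\otimes n}$ after Step 3. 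Under $H_1$ we average over the uniformly chosen planted subset $S \subseteq [n]$ of size $K$, with intermediate distributions $\mathcal{M}(n,S,p,Q)$, then $\mathcal{N}(0, I_n + \theta_1 v_S v_S^\top)^{\otimes n}$ where $\theta_1 = \tau^2 K^2/(4n)$, and finally $\mathbb{E}_{(\theta',v)\sim \pi_S}\,\mathcal{N}(0, I_d + \theta' vv^\top)^{\otimes n}$ provided by Lemma \ref{lem:subsampling}; mixing these over uniform $S$ gives a distribution $\pi$ over pairs $(\theta',v)$ supported on $A_\theta \times B_{k+\frac{1}{2}\gamma\sqrt{k}}$.

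The first step I would execute is a parameter check. With $\tau = 2\sqrt{n\theta/(Kk)}$ we get $\theta_1 = \tau^2 K^2/(4n) = \theta K/k$, so applying Lemma \ref{lem:subsampling} with input strength $\theta_1$ and subsampling factor $k/K$ yields target strength $\theta_2 = \theta_1 k/K = \theta$, matching the target \textsc{cbspca} parameter. I also need to verify the hypothesis on $\tau$ in Lemma \ref{lem:randomrotations}: since $n = \Theta(N)$ and $\theta \le c \cdot Kk/(n\sqrt{\log N})$ with a small enough $c>0$, we have $\tau = 2\sqrt{n\theta/(Kk)} \le 2\sqrt{c/\sqrt{\log N}}$, which is bounded by $\delta/(2\sqrt{6\log n+2\log(P-Q)^{-1}}) = \Theta(1/\sqrt{\log N})$ when $c$ is chosen small. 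The parameter constraints of Lemma \ref{lem:submatrix} follow from $n \ge \bigl(\tfrac{p}{Q}+\epsilon\bigr)N+K$ and $K \ll \sqrt{N}$.

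Next I would write out the composition. Under $H_0$, Lemma \ref{lem:submatrix} gives Step-1 loss $O(e^{-C_1 N})$ into $\textnormal{Bern}(Q)^{\otimes n\times n}$, Lemma \ref{lem:randomrotations} gives Step-2 loss $O(n^{-1})$ into $\mathcal{N}(0,1)^{\otimes n\times n}$, and Step 3 is exact under $H_0$. Summing via Lemma \ref{lem:tvacc} gives the stated $H_0$ bound. Under $H_1$, condition on the planted subset $S$ and apply the three lemmas using the conditioning property in Fact \ref{tvfacts}. Lemma \ref{lem:submatrix} yields loss $O(K/\sqrt{N}) + O(e^{-C_1 N})$ for fixed $S$, after averaging over $S$; Lemma \ref{lem:randomrotations} contributes $O(n^{-1}) + K(4e^{-3})^{n/(2K)}$ for each fixed $S$; and Lemma \ref{lem:subsampling} contributes $2\exp(-\gamma^2/16)$. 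Summing these via Lemma \ref{lem:tvacc} and mixing over $S$ (using the data-processing inequality to push the mixture through each step) produces the claimed $H_1$ bound, with $\pi$ being the $S$-averaged version of the per-$S$ mixtures $\pi_S$ from Lemma \ref{lem:subsampling}.

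The only mildly delicate point will be the bookkeeping for the $H_1$ composition: each of our step-wise lemmas is stated for a fixed planted set $S$, so I must carefully apply the mixing-over-$S$ identity (via the conditioning property of total variation) at every step, and verify that the support of the final mixture $\pi$ over $(\theta',v)$ lies inside $A_\theta \times B_{k+\frac{1}{2}\gamma\sqrt{k}}$ — this is inherited directly from the support property already established in Lemma \ref{lem:subsampling}. Polynomial running time follows because each of the three subroutines runs in $\mathrm{poly}(N)$ time.
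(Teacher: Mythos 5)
Your proposal follows essentially the same route as the paper's proof: the same three-step decomposition ($\textsc{To-Bernoulli-Submatrix}$, then $\chi^2\textsc{-Random-Rotation}$, then the subsampling embedding), the same intermediate distributions under $H_0$ and under $H_1$ (with the mixture over the planted set $S$ handled via the conditioning property of Fact \ref{tvfacts}), the same per-step losses from Lemmas \ref{lem:submatrix}, \ref{lem:randomrotations} and \ref{lem:subsampling}, and the same composition via Lemma \ref{lem:tvacc}; your identification $\theta_1 = \tau^2K^2/(4n) = \theta K/k$ and $\theta_2 = \theta$ is exactly the bookkeeping the paper does.

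One quantitative point in your added parameter check does not hold as written. From $\theta \le c\cdot Kk/(n\sqrt{\log N})$ you get $\tau = 2\sqrt{n\theta/(Kk)} \le 2\sqrt{c}\,(\log N)^{-1/4}$, and no fixed constant $c$ makes this smaller than the threshold $\delta/\bigl(2\sqrt{6\log n + 2\log(p-Q)^{-1}}\bigr) = \Theta\bigl((\log N)^{-1/2}\bigr)$ required by Lemma \ref{lem:randomrotations}: the inequality $2\sqrt{c}\,(\log N)^{-1/4} \le C(\log N)^{-1/2}$ fails for all large $N$. The verification goes through only under the stronger condition $\theta \le c\cdot Kk/(n\log N)$ (equivalently, a $\log N$ rather than $\sqrt{\log N}$ in the denominator), which costs only a logarithmic factor downstream. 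To be fair, this traces back to the theorem's stated hypothesis itself, and the paper's proof simply invokes Lemma \ref{lem:randomrotations} without carrying out this check; but since you made the check explicit, you should either strengthen the assumed bound on $\theta$ or flag that the stated $\sqrt{\log N}$ condition does not literally certify the hypothesis on $\tau$.
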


\begin{proof}
Consider applying Lemma \ref{lem:tvacc} to steps $\mathcal{A}_1, \mathcal{A}_2$ and $\mathcal{A}_3$ and the sequence of distributions
$$\mP_0 = \mG(N, q), \quad \mP_1 = \text{Bern}(Q)^{\otimes n \times n}, \quad \mP_2 = \mN\left(0, I_n \right)^{\otimes n}, \quad \mP_3 = \mN\left(0, I_d \right)^{\otimes n}$$
By Lemmas \ref{lem:submatrix}, \ref{lem:randomrotations} and \ref{lem:subsampling}, we have that $\TV\left( \mathcal{A}_i(\mP_{i - 1}), \mP_i \right) \le \epsilon_i$ for each $1 \le i \le 3$ where
$$\epsilon_1 = 4 \exp\left( - \frac{Q^2 \epsilon^2 N^2}{16n} \right) \le e^{-C_1 N}, \quad \epsilon_2 = O(n^{-1}), \quad \epsilon_3 = 0$$
for some constant $C_1 = C_1(p, q, \epsilon) > 0$. Applying Lemma \ref{lem:tvacc} now yields the second bound in the theorem statement on $\TV\left( \mathcal{A} \left( \mG(N, q) \right), \, \mN\left(0, I_d \right)^{\otimes n} \right)$. Now consider applying Lemma \ref{lem:tvacc} to steps $\mathcal{A}_1, \mathcal{A}_2$ and $\mathcal{A}_3$ and the sequence of distributions
\begin{align*}
\mP_0 &= \mG(N, K, p, q) \\
\mP_1 &= \mathcal{M}(n, K, p, Q) \\
\mP_2 &= \bE_{S \sim \text{Unif}_{K, n}} \, \mN\left( 0, I_n + \frac{\tau^2 K^2}{4n} \cdot v_S v_S^\top \right)^{\otimes n} = \bE_{S \sim \text{Unif}_{K, n}} \, \mN\left( 0, I_n + \frac{\theta K}{k} \cdot v_S v_S^\top \right)^{\otimes n} \\
\mP_3 &= \bE_{(\theta', v) \sim \pi} \, \mN\left( 0, I_d + \theta' vv^\top \right)^{\otimes n}
\end{align*}
where $\pi$ is the distribution in Lemma \ref{lem:randomrotations} supported on $A_\theta \times B_{k + \frac{1}{2} \gamma \sqrt{k}}$. Now by Lemmas \ref{lem:submatrix} and \ref{lem:randomrotations} and the conditioning property of Fact \ref{tvfacts} applied to the total variation bound in Lemma \ref{lem:subsampling}, we have that $\TV\left( \mathcal{A}_i(\mP_{i - 1}), \mP_i \right) \le \epsilon_i$ for each $1 \le i \le 3$ where
\begin{align*}
\epsilon_1 &= 4 \exp\left( - \frac{Q^2 \epsilon^2 N^2}{16n} \right) + \sqrt{\frac{K^2(1 - Q)}{2nQ}} + \sqrt{\frac{K^2 Q}{2n (1 - Q)}} \le O\left( \frac{K}{\sqrt{N}} \right) + e^{-C_1 N} \\
\epsilon_2 &= O(n^{-1}) + K(4e^{-3})^{n/2K} \\
\epsilon_3 &\le 2 \exp\left( - \frac{\gamma^2}{16} \right)
\end{align*}
Applying Lemma \ref{lem:tvacc} now yields the bound on $\TV\left( \mathcal{A} \left( \mG(N, k, p, q) \right), \,  \bE_{(\theta, v) \sim \pi} \, \mN\left(0, I_d + \theta vv^\top \right)^{\otimes n} \right)$, completing the proof of the theorem.
\end{proof}

Now set $K = \left\lceil \kappa^{-1} \sqrt{N} \right\rceil$ for an arbitrarily slow-growing function $\kappa \to \infty$ in Theorem \ref{thm:subsamplingrandomrot}. This proves hardness for $\textsc{cbspca}(n, k', d, \theta)$ for true sparsity parameter $k' = \left\lceil k + \frac{1}{2} \gamma \sqrt{k} \right\rceil$ as long as $d \ge n$, $k' \ll \sqrt{n}$ and
$$\theta = O\left( \frac{K \cdot k}{n \sqrt{\log N}} \right) = O\left( \kappa^{-1} \cdot \sqrt{\frac{k'^2}{n \log N}} \right)$$
for this choice of $K$. Note that $k$ is the approximate sparsity parameter described in Figure \ref{fig:subsampling}. More formally, we arrive at the following tight hardness for $\textsc{cbspca}$ given the strongest form of the $\textsc{pc}$ conjecture up to the boundary of $K = o(\sqrt{n})$.

\begin{corollary} \label{thm:chirandomrot}
Let $(n_t, k_t, d_t, \theta_t)$ be a sequence of parameters with $d_t \ge n_t$, $k_t \ll \sqrt{n_t}$ and
$$\theta_t \ll \sqrt{\frac{k_t^2}{n_t \log n_t}}$$
For any sequence of randomized polynomial time tests $\mathcal{A}_t$, the asymptotic Type I$+$II error of $\mathcal{A}_t$ on the detection problem $\textsc{cbspca}(n_t, k_t, d_t, \theta_t)$ is at least $1$ assuming the $\textsc{pds}$ conjecture for all $k = o(\sqrt{n})$ and some pair of constant densities $0 < q < p \le 1$.
\end{corollary}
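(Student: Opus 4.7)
The plan is to derive the corollary as an immediate consequence of Theorem \ref{thm:subsamplingrandomrot} via the reduction framework of Lemma \ref{lem:3a}, using the $\textsc{pds}$ conjecture at clique size $K = o(\sqrt{N})$ to rule out any successful polynomial-time tester. Given a target sequence $(n_t, k_t, d_t, \theta_t)$ satisfying the corollary's hypotheses, the key is to instantiate the input parameters of $\textsc{Subsampling-Random-Rotations}$ so that its output instance lies in $\textsc{cbspca}(n_t, k_t, d_t, \theta_t)$ while the driving $\textsc{pds}$ instance has dense subgraph size $K_t = o(\sqrt{N_t})$.

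For the parameter choice, I would set $K_t = \lceil \kappa_t^{-1} \sqrt{N_t} \rceil$ for a slowly growing $\kappa_t \to \infty$, choose $N_t$ so that $n_t$ is the smallest multiple of $K_t$ exceeding $(p/Q + \epsilon) N_t + K_t$ (which forces $N_t \asymp n_t$), introduce a slowly growing $\gamma$, and pick the reduction's internal sparsity parameter $k$ so that $k_t = \lceil k + \tfrac{1}{2}\gamma\sqrt{k}\rceil$; this is possible for $\gamma$ growing slowly since $k \approx k_t$. With $K_t \asymp \kappa_t^{-1}\sqrt{n_t}$, the theorem's signal-budget constraint $\theta_t \le c K_t k / (n_t \sqrt{\log N_t})$ becomes $\theta_t \le O(\kappa_t^{-1} \sqrt{k_t^2/(n_t \log n_t)})$, which is achievable for any $\kappa_t$ tending to infinity slower than the rate at which $\theta_t/\sqrt{k_t^2/(n_t \log n_t)} \to 0$. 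The constraint $k \le K_t/2$ similarly becomes $\kappa_t = O(\sqrt{n_t}/k_t)$, and the right-hand side diverges by the hypothesis $k_t \ll \sqrt{n_t}$, so both upper envelopes on $\kappa_t$ grow and any sufficiently slowly growing $\kappa_t$ satisfies them jointly.

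Under this choice, every error term in Theorem \ref{thm:subsamplingrandomrot} vanishes: the leading $O(K_t/\sqrt{N_t})$ term is $O(\kappa_t^{-1})$, the terms $e^{-C_1 N_t}$ and $O(n_t^{-1})$ are trivially $o(1)$, the tail term $K_t(4e^{-3})^{n_t/(2K_t)}$ vanishes since $n_t/K_t \asymp \kappa_t \sqrt{n_t} \to \infty$, and $2\exp(-\gamma^2/16) \to 0$ by the choice of $\gamma$. Thus the reduction maps $\textsc{pds}(N_t, K_t, p, q)$ to $\textsc{cbspca}(n_t, k_t, d_t, \theta_t)$ in the sense required by Lemma \ref{lem:3a}: $H_0$ maps to $H_0$ and $H_1$ maps to a mixture lying inside the $H_1$ composite family $A_\theta \times B_{k + \frac{1}{2}\gamma\sqrt{k}}$, both with total variation error $\delta_t \to 0$. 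Hence a polynomial-time tester $\mathcal{A}_t$ achieving asymptotic Type I+II error strictly below $1$ for $\textsc{cbspca}(n_t, k_t, d_t, \theta_t)$ would, via composition with the reduction, yield such a tester for $\textsc{pds}(N_t, K_t, p, q)$ with $K_t = o(\sqrt{N_t})$, contradicting the $\textsc{pds}$ conjecture.

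The main (quite mild) obstacle is just the joint diagonalization of the slowly growing parameters $\kappa_t$ and $\gamma$: $\kappa_t$ must be slow enough to absorb both the hypothesized $\theta$-slack and respect $\kappa_t = O(\sqrt{n_t}/k_t)$, while being fast enough that $\kappa_t^{-1} = o(1)$. Since both upper envelopes on $\kappa_t$ diverge, the standard trick of taking $\kappa_t$ to be (essentially) the minimum of these envelopes, scaled down so it still tends to infinity, works. All other work is a direct invocation of Theorem \ref{thm:subsamplingrandomrot} and Lemma \ref{lem:3a}.
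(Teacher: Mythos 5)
Your proposal is correct and follows essentially the same route as the paper: the paper's own proof consists of setting $K = \lceil \kappa^{-1}\sqrt{N}\rceil$ for a slowly growing $\kappa \to \infty$ in Theorem \ref{thm:subsamplingrandomrot}, noting that the signal constraint then reads $\theta = O(\kappa^{-1}\sqrt{k'^2/(n\log N)})$ with true sparsity $k' = \lceil k + \tfrac{1}{2}\gamma\sqrt{k}\rceil$, and invoking Lemma \ref{lem:3a}. Your additional bookkeeping (matching $N_t$ to $n_t$, choosing the internal $k$ to hit $k_t$, and diagonalizing $\kappa_t$ against the $\theta$-slack and the $k \le K/2$ constraint) just makes explicit the same parameter instantiation the paper performs implicitly.
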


\subsection{Internal Cloning Reductions within Sparse PCA}
\label{sec:internal}

\begin{figure}[t!]
\begin{algbox}
\textbf{Algorithm} \textsc{Sparsity-Cloning}

\vspace{2mm}

\textit{Inputs}: Sparse PCA samples $X_1, X_2, \dots, X_n \in \mathbb{R}^d$, number of iterations $\ell$
\begin{enumerate}
\item Begin by setting $X^0_i = X_i$ for each $1 \le i \le n$
\item For $j = 1, 2, \dots, \ell$ do:
\begin{enumerate}
\item[(1)] Sample $G_1, G_2, \dots, G_n \sim_{\text{i.i.d.}} \mN(0, I_{2^{j - 1} d})$
\item[(2)] For each $1 \le i \le n$, form $X_i^j \in \mathbb{R}^{2^j d}$ by concatenating the two vectors
$$\left(X_i^j\right)_{1:2^{j - 1} d} = \frac{1}{\sqrt{2}} \left( X^{j - 1}_i + G_i \right) \quad \text{and} \quad \left(X_i^j\right)_{2^{j - 1} d + 1:2^j d} = \frac{1}{\sqrt{2}} \left( X^{j - 1}_i - G_i \right)$$
\end{enumerate}
\item Sample a permutation $\pi$ of $[2^\ell d]$ u.a.r. and output $(X^{\ell}_1)^\pi, (X^{\ell}_2)^\pi, \dots, (X^{\ell}_n)^\pi$
\end{enumerate}
\vspace{1mm}
\end{algbox}
\caption{Internal cloning reduction for increasing the sparsity $k$ of an instance of sparse PCA.}
\label{fig:internalcloning}
\end{figure}

In this section, we give a simple internal reduction showing that a lower bound for an instance of the spiked covariance model with at a particular triple $(n, k, d, \theta)$ implies lower bounds at $(n, tk, td, \theta)$ for $t(n) \ge 1$ with $t(n) = \text{poly}(n)$. This reduction is shown in Figure \ref{fig:internalcloning}. The correctness of this reduction essentially reduces to the following simple lemma.

\begin{lemma}[One Step of Sparsity Cloning]
Let $\mathcal{B}$ be the first iteration of Step 2 with $j = 1$ of the algorithm $\textsc{Sparsity-Cloning}$. Then it holds for all subsets $S \subseteq [d]$ that
\begin{align*}
\mathcal{B}\left( \mN\left(0, I_d + \theta v_S v_S^\top \right)^{\otimes n} \right) &\sim \mN\left(0, I_{2d} + \theta v_T v_T^\top \right)^{\otimes n} \\
\mathcal{B}\left( \mN\left(0, I_d \right)^{\otimes n} \right) &\sim \mN\left(0, I_{2d} \right)^{\otimes n}
\end{align*}
where $T = S \cup \{ d + i : i \in S\} \subseteq [2d]$.
\end{lemma}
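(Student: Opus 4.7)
The plan is to verify both identities by direct covariance computation, noting that each output sample $X_i^1$ is a linear function of the independent Gaussian vectors $X_i$ and $G_i$, so $X_i^1$ is itself Gaussian; moreover, the output samples are independent across $i$ since the pairs $(X_i, G_i)$ are drawn independently. It therefore suffices to compute the mean (which is zero by linearity) and the covariance of a single $X_i^1$ under each hypothesis.

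Under $H_0$, we have $X_i, G_i \sim \mathcal{N}(0, I_d)$ independent. The two halves $U = \frac{1}{\sqrt{2}}(X_i + G_i)$ and $V = \frac{1}{\sqrt{2}}(X_i - G_i)$ each have covariance $\frac{1}{2}(I_d + I_d) = I_d$, and their cross-covariance is $\frac{1}{2}(I_d - I_d) = 0$. Hence $X_i^1 = (U, V) \sim \mathcal{N}(0, I_{2d})$, giving the second identity.

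Under $H_1$, $X_i \sim \mathcal{N}(0, I_d + \theta v_S v_S^\top)$ and $G_i \sim \mathcal{N}(0, I_d)$ are independent, so a short block computation gives
\[
\mathrm{Cov}(X_i^1) \;=\; \frac{1}{2}\begin{pmatrix} 2 I_d + \theta v_S v_S^\top & \theta v_S v_S^\top \\ \theta v_S v_S^\top & 2 I_d + \theta v_S v_S^\top \end{pmatrix} \;=\; I_{2d} + \frac{\theta}{2} \begin{pmatrix} v_S v_S^\top & v_S v_S^\top \\ v_S v_S^\top & v_S v_S^\top \end{pmatrix}.
\]
The key algebraic observation is that if we let $T = S \cup \{d + i : i \in S\} \subseteq [2d]$, then $T$ has size $2k$ and the unit vector $v_T = \frac{1}{\sqrt{2k}} \mathbf{1}_T \in \mathbb{R}^{2d}$ satisfies $v_T = \frac{1}{\sqrt{2}}(v_S, v_S)$, so that $v_T v_T^\top$ is precisely the block matrix with each of the four blocks equal to $\frac{1}{2} v_S v_S^\top$. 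Substituting, $\mathrm{Cov}(X_i^1) = I_{2d} + \theta v_T v_T^\top$, which proves the first identity.

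The argument is essentially bookkeeping and has no real obstacle; the only point worth verifying is that $v_T$ is indeed a unit vector (it has $2k$ nonzero coordinates each equal to $1/\sqrt{2k}$) and that the block-matrix identity $v_T v_T^\top = \frac{1}{2}\bigl(\begin{smallmatrix} v_S v_S^\top & v_S v_S^\top \\ v_S v_S^\top & v_S v_S^\top \end{smallmatrix}\bigr)$ matches the covariance computed from the independence of $X_i$ and $G_i$. Note that the final permutation step in $\textsc{Sparsity-Cloning}$ is irrelevant to this lemma, since $\mathcal{B}$ refers only to one pass of Step 2 with $j = 1$ before any permutation is applied.
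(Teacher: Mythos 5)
Your proposal is correct and is essentially the same argument as the paper's: both reduce the lemma to joint Gaussianity plus covariance bookkeeping, using the key identity $v_T = \tfrac{1}{\sqrt{2}}(v_S, v_S)$ so that $v_T v_T^\top$ is the block matrix with all four blocks equal to $\tfrac{1}{2} v_S v_S^\top$. The only cosmetic difference is that the paper first writes $X_i^0 = \sqrt{\theta/k}\, g_i \mathbf{1}_S + H_i$ with $g_i \sim \mN(0,1)$, $H_i \sim \mN(0, I_d)$ and verifies that $\tfrac{1}{\sqrt{2}}(H_i + G_i, H_i - G_i)$ has covariance $I_{2d}$, whereas you compute the full block covariance of $X_i^1$ directly; both computations are equivalent and your handling of independence across samples and of the (irrelevant) permutation step is fine.
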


\begin{proof}
First suppose that $X_1^0, X_2^0, \dots, X_n^0 \sim \mN\left(0, I_d + \theta v_S v_S^\top \right)^{\otimes n}$. Since the $X_i^0$ and $G_i$ are independent, it follows that the $X_i^1$ are independent by construction. The same decomposition as in Lemma \ref{lem:subsampling} yields that the $X_i^0$ can be written as
$$X_i^0 = \sqrt{\frac{\theta}{k}} \cdot g_i \mathbf{1}_S + H_i$$
where the $g_i \sim \mN(0, 1)$ and $H_i \sim \mN(0, I_d)$ are independent for $1 \le i \le n$. Now note that the entries of the $2d$-dimensional vector $\frac{1}{\sqrt{2}} \cdot (H_i + G_i, H_i - G_i)^\top$ are zero-mean and jointly Gaussian since the entries of $H_i$ and $G_i$ are i.i.d. $\mN(0, 1)$. Observe that $(H_i + G_i, H_i - G_i)^\top = (H_i, H_i)^\top + (G_i,- G_i)^\top$ and these two terms are independent and zero-mean. The covariance matrix of $\frac{1}{\sqrt{2}}(H_i + G_i, H_i - G_i)^\top$ is the sum of the covariance matrices of these two terms and therefore equal to
$$\frac{1}{2} \cdot \bE\left[ \left( \begin{matrix} H_i \\ H_i \end{matrix} \right) \left( \begin{matrix} H_i \\ H_i \end{matrix} \right)^\top \right] + \frac{1}{2} \cdot \bE\left[ \left( \begin{matrix} G_i \\ -G_i \end{matrix} \right) \left( \begin{matrix} G_i \\ -G_i \end{matrix} \right)^\top \right] = \frac{1}{2} \cdot \left[ \begin{matrix} I_d & I_d \\ I_d & I_d \end{matrix} \right] + \frac{1}{2} \cdot \left[ \begin{matrix} I_d & -I_d \\ -I_d & I_d \end{matrix} \right] = I_{2d}$$
Therefore we have that
$$X_i^1 = \sqrt{\frac{\theta}{2k}} \cdot \left( \begin{matrix} \mathbf{1}_S \\ \mathbf{1}_S \end{matrix} \right) + \frac{1}{\sqrt{2}} \cdot \left( \begin{matrix} H_i + G_i \\ H_i - G_i \end{matrix} \right) \sim \mL\left( \sqrt{\theta} \cdot v_T + \mN(0, I_{2d})\right)$$
Thus it follows that $(X_1^1, X_2^1, \dots, X_n^1) \sim \mN\left(0, I_{2d} + \theta v_T v_T^\top \right)^{\otimes n}$ since the $X_i^1$ are independent. Applying this with $\theta = 0$ yields that $\mathcal{B}\left( \mN\left(0, I_d \right)^{\otimes n} \right) \sim \mN\left(0, I_{2d} \right)^{\otimes n}$. This completes the proof of the lemma.
\end{proof}

Applying this lemma to each iteration in Step 2 of $\textsc{Sparsity-Cloning}$, applying induction and accounting for the permutation in Step 3, we arrive at the following guarantees for the reduction.

\begin{lemma}[Sparsity Cloning]
If $\mathcal{A} = \textsc{Sparsity-Cloning}$, then $\mathcal{A}$ runs in $\textnormal{poly}(2^\ell d, n)$ time and it holds for all subsets $S \subseteq [d]$ that
\begin{align*}
\mathcal{A}\left( \mN\left(0, I_d + \theta v_S v_S^\top \right)^{\otimes n} \right) &\sim \bE_{T \sim \textnormal{Unif}_{2^\ell k, 2^\ell d}} \, \mN\left(0, I_{2^\ell d} + \theta v_T v_T^\top \right)^{\otimes n} \\
\mathcal{A}\left( \mN\left(0, I_d \right)^{\otimes n} \right) &\sim \mN\left(0, I_{2^\ell d} \right)^{\otimes n}
\end{align*}
\end{lemma}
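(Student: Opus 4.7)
The plan is to prove the statement by induction on the number of iterations $\ell$, using the preceding one-step lemma as the inductive step, and then account separately for the final uniform permutation and the runtime.

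\textbf{Inductive claim.} I would prove that for every $0 \le j \le \ell$, right after the $j$-th iteration of Step~2, the vectors $(X_1^j, \dots, X_n^j)$ are distributed as $\mathcal{N}(0, I_{2^j d} + \theta v_{T_j} v_{T_j}^\top)^{\otimes n}$ under $H_1$ and as $\mathcal{N}(0, I_{2^j d})^{\otimes n}$ under $H_0$, where $T_j \subseteq [2^j d]$ is the deterministic set defined recursively by $T_0 = S$ and $T_j = T_{j-1} \cup \{2^{j-1} d + i : i \in T_{j-1}\}$. A simple induction gives $|T_j| = 2^j k$. The base case $j = 0$ is the definition $X_i^0 = X_i$. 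For the inductive step, note that the $j$-th iteration is literally the transformation described by the one-step lemma applied in dimension $2^{j-1} d$ (relabeling $d \gets 2^{j-1} d$ and $S \gets T_{j-1}$), so the one-step lemma gives the desired distribution at level $j$ under both hypotheses. At level $\ell$ we obtain samples from $\mathcal{N}(0, I_{2^\ell d} + \theta v_{T_\ell} v_{T_\ell}^\top)^{\otimes n}$ under $H_1$.

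\textbf{Randomizing the support via the permutation.} Step~3 applies a uniformly random permutation $\pi$ of $[2^\ell d]$ to each $X_i^\ell$ in parallel. For any fixed permutation $\pi$ of coordinates and fixed subset $T \subseteq [2^\ell d]$, writing $P_\pi$ for the coordinate permutation matrix, we have $P_\pi \mathcal{N}(0, I_{2^\ell d} + \theta v_T v_T^\top) P_\pi^\top = \mathcal{N}(0, I_{2^\ell d} + \theta v_{\pi(T)} v_{\pi(T)}^\top)$. Applied to $T = T_\ell$ with $\pi$ uniform on the symmetric group, $\pi(T_\ell)$ is uniformly distributed over all $2^\ell k$-subsets of $[2^\ell d]$. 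Marginalizing over $\pi$ therefore yields exactly the mixture $\mathbb{E}_{T \sim \text{Unif}_{2^\ell k,\, 2^\ell d}}\, \mathcal{N}(0, I_{2^\ell d} + \theta v_T v_T^\top)^{\otimes n}$ on the $H_1$ side. Under $H_0$, the distribution $\mathcal{N}(0, I_{2^\ell d})^{\otimes n}$ is permutation-invariant, so Step~3 preserves it.

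\textbf{Runtime.} Each iteration $j$ doubles the ambient dimension from $2^{j-1} d$ to $2^j d$, produces $n$ fresh Gaussian vectors, and performs $O(n \cdot 2^{j-1} d)$ arithmetic operations under the $O(1)$-sampling convention from Section~\ref{sec:avgcasereductions}. Summing over $j = 1, \dots, \ell$ and adding the $O(n \cdot 2^\ell d)$ cost of applying one uniform permutation of $[2^\ell d]$ gives a total of $\text{poly}(2^\ell d, n)$, as claimed.

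There is no substantial obstacle here; the only point requiring care is to observe that each inner iteration of Step~2 is an exact instance of the one-step lemma at a doubled dimension, so the inductive step is legitimate, and to verify that the permutation invariance argument yields the uniform mixture over $2^\ell k$-subsets rather than a mixture biased by the combinatorial structure of $T_\ell$.
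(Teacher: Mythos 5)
Your proposal is correct and matches the paper's argument, which is only sketched there: induct on the iterations of Step 2 using the one-step lemma at doubled dimension, then use permutation invariance in Step 3 to turn the deterministic support $T_\ell$ (of size $2^\ell k$) into the uniform mixture over $2^\ell k$-subsets, with the $H_0$ case following since $\mN(0,I_{2^\ell d})^{\otimes n}$ is permutation-invariant. The runtime accounting is also as intended.
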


Note that this lemma implies the following conditional hardness result for sparse PCA, effectively increasing $k$ and $d$ while preserving hardness.

\begin{corollary} \label{thm:internalcloning}
Let $(n_t, k_t, d_t, \theta_t)$ be a sequence of parameters and $2^{\ell_t} = \textnormal{poly}(n_t)$. Then if the asymptotic Type I$+$II error is at least $1$ for any sequence $\mathcal{A}_t$ of randomized polynomial time tests for $\textsc{ubspca}(n_t, k_t, d_t, \theta_t)$, then the same is true for $\textsc{ubspca}(n_t, 2^{\ell_t} k_t, 2^{\ell_t} d_t, \theta_t)$.
\end{corollary}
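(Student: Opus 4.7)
The plan is to apply the reduction framework of Lemma \ref{lem:3a} directly, using $\mathcal{A} = \textsc{Sparsity-Cloning}$ with parameter $\ell = \ell_t$ as the map, and to observe that the preceding Sparsity Cloning Lemma already does essentially all of the work: it shows that $\mathcal{A}$ exactly maps instances of $\textsc{ubspca}(n_t, k_t, d_t, \theta_t)$ to instances of $\textsc{ubspca}(n_t, 2^{\ell_t} k_t, 2^{\ell_t} d_t, \theta_t)$, with zero total variation loss on both sides.

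More concretely, the first step is to verify the runtime of $\mathcal{A}$. The Sparsity Cloning Lemma states that $\mathcal{A}$ runs in $\textnormal{poly}(2^{\ell_t} d_t, n_t)$ time, and since $2^{\ell_t} = \textnormal{poly}(n_t)$ and $d_t \le \textnormal{poly}(n_t)$ (which is implicit in the problem sizes), this is $\textnormal{poly}(n_t)$, so $\mathcal{A}$ is a valid randomized polynomial-time map from the smaller problem to the larger one. The second step is to check the total variation conditions of Lemma \ref{lem:3a} with $\delta = 0$. Under $H_0$, the Sparsity Cloning Lemma gives $\mathcal{A}(\mN(0, I_{d_t})^{\otimes n_t}) \sim \mN(0, I_{2^{\ell_t}d_t})^{\otimes n_t}$ exactly, which is $H_0'$ of $\textsc{ubspca}(n_t, 2^{\ell_t}k_t, 2^{\ell_t}d_t, \theta_t)$. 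Under $H_1$, for any fixed $k_t$-subset $S \subseteq [d_t]$ (equivalently, any $v_S \in B_{k_t}$), the lemma gives
\[
\mathcal{A}\bigl(\mN(0, I_{d_t} + \theta_t v_S v_S^\top)^{\otimes n_t}\bigr) \;\sim\; \mathbb{E}_{T \sim \textnormal{Unif}_{2^{\ell_t}k_t,\, 2^{\ell_t}d_t}}\, \mN(0, I_{2^{\ell_t}d_t} + \theta_t v_T v_T^\top)^{\otimes n_t},
\]
which is exactly a mixture over distributions lying in $H_1'$ of the target $\textsc{ubspca}$ instance. Taking $\pi$ to be the induced distribution over $v_T$ achieves zero total variation in the second term of Lemma \ref{lem:3a}.

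The final step is the standard contrapositive. Suppose, for contradiction, that some sequence of randomized polynomial-time tests $\mathcal{B}_t$ achieves asymptotic Type I$+$II error strictly less than $1$ on $\textsc{ubspca}(n_t, 2^{\ell_t}k_t, 2^{\ell_t}d_t, \theta_t)$. Then by Lemma \ref{lem:3a} applied with $\delta = 0$, the composition $\mathcal{B}_t \circ \mathcal{A}$ is a randomized polynomial-time test achieving the same asymptotic Type I$+$II error on $\textsc{ubspca}(n_t, k_t, d_t, \theta_t)$, contradicting the hypothesized hardness.

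There is no real obstacle here: once the Sparsity Cloning Lemma is in hand, the corollary is an immediate consequence of the reduction framework, with the only things worth checking being that $2^{\ell_t} = \textnormal{poly}(n_t)$ keeps the reduction polynomial-time and that the induced mixture distribution on $T$ is supported inside $H_1'$ of the target problem. The corollary would follow by essentially the same argument if we instead used the simple vs.\ simple formulation of $\textsc{ubspca}$ (where the planted subset is uniformly random by definition), since the mixture appearing in the output of $\mathcal{A}$ precisely matches the marginal under that formulation.
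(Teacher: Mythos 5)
Your proposal is correct and follows essentially the same route as the paper, which presents the corollary as an immediate consequence of the Sparsity Cloning lemma combined with the reduction framework of Lemma \ref{lem:3a}: the exact (zero total variation) mapping under $H_0$ and the mixture over $\textnormal{Unif}_{2^{\ell_t}k_t, 2^{\ell_t}d_t}$ under $H_1$, together with the $\textnormal{poly}(2^{\ell_t}d_t, n_t)$ runtime bound and $2^{\ell_t} = \textnormal{poly}(n_t)$, are precisely the ingredients the paper relies on.
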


This implies that the tight lower bounds derived from the planted clique conjecture up to $K = o(N^\alpha)$ in our main theorem can also be extended to lower bounds with $k = \omega(n^{\alpha/3})$. Although not tight to the threshold of $\theta = \tilde{\Theta}(\sqrt{k^2/n})$, these lower bonds still show nontrivial statistical-computational gaps.

We remark that there is another natural cloning map that re-derives the tight computational lower bounds for $\textsc{fcspca}$ at $\theta = \tilde{o}(1)$ when $d = O(n)$ and $k = \omega(\sqrt{n})$ from \cite{brennan2018reducibility}. Consider the cloning map with $X_i^j \in \mathbb{R}^d$ for all $j$ that iteratively computes
$$X_i^j= \frac{1}{\sqrt{2}} \left( X^{j - 1}_i + \mathcal{R} X^{j - 1}_i \right) \quad \text{and} \quad X_i^j = \frac{1}{\sqrt{2}} \left( X^{j - 1}_i - \mathcal{R} X^{j - 1}_i \right)$$
where $\mathcal{R}$ is the operator reversing the indices of a vector. If $d$ is even and $k \ll d$, a similar analysis to that of $\textsc{Reflection-Cloning}$ in \cite{brennan2018reducibility} shows that this procedure maps an $\textsc{fcspca}(n, k, d, \theta)$ instance to $\textsc{fcspca}(n, 2^\ell k, d, \theta)$ approximately in total variation. Combining this with the hardness of $\textsc{fcspca}$ when $k = \tilde{\omega}(\sqrt{n})$ and $\theta = \tilde{\Theta}(1)$ from Corollary \ref{thm:chirandomrot}, yields tight lower bounds for all of $k = \omega(\sqrt{n})$, matching Theorem 8.5 from \cite{brennan2018reducibility}.

\section{Discussion, Extensions and Open Problems}
\label{sec:conclusion}

The computational lower bounds and techniques we introduce have several simple extensions and implications. The following is a sketch of these extensions.

\paragraph{Non-constant $p$ and $q$.} While our two main theorems derive computational lower bounds for sparse PCA from the $\textsc{pds}$ conjecture with $p$ and $q$ constant, both of the reductions $\textsc{Clique-to-Wishart}$ and $\textsc{Subsampling-Random-Rotations}$ have guarantees when $p$ and $q$ vary with $n$. In particular, if $q = \Theta(1)$, $1 - q = \Theta(1)$ and $p - q = \Theta(n^{-\beta})$ then we obtain weaker lower bounds, degrading as $\beta$ grows, that are still nontrivial.

\paragraph{Estimation and Recovery Lower Bounds.} Throughout this paper, we have focused on showing lower bounds for sparse PCA formulated as a detection problem. As noted in Section 10 of \cite{brennan2018reducibility}, reductions showing detection lower bounds for sparse PCA can often be modified to yield lower bounds for partially recovering the support of the planted vector. This can be achieved by cloning the instance at some early stage in the reduction and then running the reduction on each of the two clones while coupling coordinate permutations. The end result is two instances of sparse PCA coupled to have the same latent planted vector but otherwise independent. Consider a given partial recovery blackbox on one instance to output a set $S$ of size $k$ and then looking at the principal minor at $S$ of the empirical covariance matrix of the other instance. Thresholding the largest eigenvalue of this minor can be shown to solve the detection problem as long as the sparse PCA instance is statistically possible. In the case of $\textsc{Clique-to-Wishart}$, this additional cloning step can be implemented in Step 2 by making six copies of $X$ instead of three. The output of the reduction will then be approximately be two i.i.d. copies from $\mN(0, I_d + \theta v_S v_S^\top)^{\otimes n}$ where $S$ is random but shared between the two copies. Since the planted vectors in $\textsc{ubspca}$ are uniform over their support, an estimation blackbox yields a partial recovery algorithm by adding a step thresholding the entries of the output vector.

\paragraph{Reducing from Symmetric Biclustering and Sparse Spiked Wigner.} Our reductions can also be adapted to show hardness for sparse PCA instead based on lower bounds for symmetric Gaussian biclustering and sparse spiked Wigner, which are formulated below.
\begin{center}
\begin{tabular}{l l}
\vspace{2mm}
Symmetric Biclustering & Sparse Spiked Wigner \\
$H_0 : \mN(0, 1)^{\otimes N \times N}$ & $H_0 :  \mN(0, 1)^{\otimes N \times N}$ \\
$H_1 : \mu \cdot \mathbf{1}_S \mathbf{1}_S^\top + \mN(0, 1)^{\otimes N \times N}$ & $H_1 : \mu \cdot vv^\top + \mN(0, 1)^{\otimes N \times N}$
\end{tabular}
\end{center}
where $S \sim \text{Unif}_{k, N}$ and the entries of $v$ are chosen uniformly at random from the set of all $k$-sparse $N$ dimensional vectors with all nonzero entries equal to $\pm 1/\sqrt{k}$. Note that thresholding the entries of an instance of symmetric biclustering at zero produces an instance of planted dense subgraph, which then can be mapped to sparse PCA through our reductions. For sparse spiked Wigner, more modifications to $\textsc{Clique-to-Wishart}$ are needed. First produce three clones of the instance of sparse spiked Wigner with a similar trick to Step 2.2 of $\textsc{Sparsity-Cloning}$ and then apply the $\textsc{Random-Rotations}$ reduction from \cite{brennan2018reducibility} to the second and third clones. If $X$ is the first clone, then consider sampling $g \sim \mN(0, 1)$ and setting $M \gets \tau X + \sqrt{1 - \tau^2} \cdot \mN(0, 1)^{\otimes N \times N}$ where $\tau = \frac{1}{2} + (2n)^{-1/2} g$. With a similar analysis to that presented here, these steps can be shown to produce $(M, C_L, C_R)$ approximately in total variation for appropriate choices of parameters in subroutines. Proceeding with the remainder of $\textsc{Clique-to-Wishart}$ produces an instance of sparse PCA.

\paragraph{Quasipolynomial Time Algorithms.} While there is a an $N^{O(\log N)}$ quasipolynomial time algorithm for solving planted clique, no such algorithm is known for planted dense subgraph if $p - q = \Theta(n^{-\epsilon})$ and $q, 1 - q = \Theta(1)$ for some $\epsilon > 0$ where the planted dense subgraph has size $K = o(N^{\alpha})$. As outlined above, our reductions extend to this regime of non-constant $p$ and $q$, and can be verified to show lower bounds for $\pr{ubspca}(n, k, d, \theta)$ where
$$\theta = \tilde{o} \left( \sqrt{\frac{k^2}{n^{1 + 2\epsilon/3}}} \right)$$
Note that our reductions from planted dense subgraph with $N$ vertices map to sparse PCA in randomized polynomial time with parameters that can be chosen such that $d = O(N)$ and $n = \tilde{O}(N^3)$. Therefore if there is an algorithm solving sparse PCA in time $T(n, d)$ then there is an algorithm solving planted dense subgraph in $T(O(N), \tilde{O}(N^3)) + \text{poly}(N)$ time. Taking $\epsilon$ to be small shows that our reduction also excludes quasipolynomial time algorithms for sparse PCA for these $(n, k, d, \theta)$ assuming there are none for planted dense subgraph in the described parameter regime.

\subsection{Open Problems}

There are a number of questions that remain unresolved about computational lower bounds for sparse PCA based on the planted clique conjecture. The following is an overview of some problems that are open after this work.

\begin{itemize}
\item \textbf{Reducing the $k = o(n^{\alpha/3})$ Condition:} To use the $\textsc{goe}$-Wishart convergence theorem from \cite{bubeck2016testing} in our reduction from the $\textsc{pc}$ conjecture up to $K = o(N^{\alpha})$, it was necessary that the target sparse PCA parameters satisfied $k = o(n^{\alpha/3})$. Is there a polynomial time reduction improving this condition be improved to $k = o(n^{\alpha})$, so that there is no degradation in the relative level of sparsity? We suspect that answering this question affirmatively would require different techniques.
\item \textbf{Equivalence of Planted Dense Subgraph and Sparse PCA:} We show a reduction from planted dense subgraph to sparse PCA. Is there an explicit polynomial time reduction from planted dense subgraph to sparse PCA? Is there a reduction to planted dense subgraph and then back to sparse PCA preserving the relative sparsity $k$ i.e. from sparse PCA with $k = \Theta(n^\beta)$ to planted dense subgraph to sparse PCA with parameters $k' = \Theta(n'^\beta)$? We suspect that answering this question would require reducing the $k = o(n^{\alpha/3})$ condition.
\item \textbf{Hardness of Simple vs. Simple Hypothesis Testing:} Our reductions show tight hardness for the simple vs. simple hypothesis testing formulation $\textsc{ubspca}$ for $k = o(n^{\alpha/3})$ assuming the planted clique conjecture up to $K = o(N^\alpha)$ and for the simple vs. composite formulation $\textsc{cbspca}$ for $k = \tilde{O}(\sqrt{n})$. This leaves open the tight hardness of $\textsc{ubspca}$ for $n^{1/3} \ll k \ll n^{1/2}$. Is there a reduction from planted clique showing tight lower bounds for $\textsc{ubspca}$ in this parameter regime?
\item \textbf{Other Internal Reductions within Sparse PCA:} In Section \ref{sec:internal}, we give an internal reduction increasing the sparsity $k$ and dimension $d$ of a sparse PCA instance while preserving $n$ and $\theta$. Applying a similar trick to instead double the number of samples $n$ fails to produce independent samples because of nonzero correlations between entries on the planted part. Is there a reduction increasing $n$ or decreasing the relative sparsity $k$ while appropriately scaling $\theta$ to preserve the tight hardness of the sparse PCA instance? Another question is whether there is an internal reduction to increase $k$ and decrease $\theta$ while preserving tight hardness. This would provide a method of reducing the $k = o(n^{\alpha/3})$ condition.
\item \textbf{Universality of Computational Lower Bounds:} Our reduction heavily makes use of the isotropy of the Gaussian noise distribution in the spiked covariance model. Are there reductions showing tight computational lower bounds for formulations of sparse PCA in other noise models similar to the universal lower bounds for submatrix detection in \cite{brennan2019universality}?
\end{itemize}

\section*{Acknowledgements}

We thank Wasim Huleihel, Philippe Rigollet, Yury Polyanskiy, Elchanan Mossel, Frederic Koehler, Vishesh Jain, Yash Deshpande, Enric Boix, Austin Stromme, Dheeraj Nagaraj and Govind Ramnarayan for inspiring discussions on related topics. We also thank Sam Hopkins for discussions on the implications of our reductions for quasipolynomial algorithms solving sparse PCA. This work was supported in part by the grants ONR N00014-17-1-2147 and NSF CCF-1565516.

\bibliography{GB_BIB.bib}
\bibliographystyle{alpha}

\appendix

\section{Relationships Among Variants of Sparse PCA}
\label{sec:relationshipsspca}

In this section, we discuss the relationships among the variants of sparse PCA introduced in Section \ref{sec:problems} as well as the following variants.
\begin{itemize}
\item \textit{Composite unbiased sparse PCA} is denoted as $\pr{cspca}(n, k, d, \theta)$ where $A_\theta = \{ \theta \}$ and
$$B_k = \left\{ v \in \mathbb{S}^{d-1}  : k - \tau \sqrt{k} \le \| v \|_0 \le k \text{ and } |v_i| \ge \frac{1}{\sqrt{k}} \text{ for } i \in \text{supp}(v) \right\}$$
for some fixed parameter $\tau$ satisfying that $\tau \to \infty$ as $n \to \infty$.
\item \textit{Uniform unbiased sparse PCA} is denoted as $\pr{uspca}(n, k, d, \theta)$ where $A_\theta = \{ \theta \}$ and $B_k$ is the set of all $k$-sparse unit vectors in $\mathbb{S}^{d - 1}$ with nonzero coordinates equal to $\pm 1/\sqrt{k}$. 
\end{itemize}
As in the case of $\pr{ubspca}$, this formulation of $\pr{uspca}$ is equivalent to the simple vs. simple hypothesis testing problem where $v$ is drawn uniformly at random from this set $B_k$ under $H_1$. Furthermore, $\pr{uspca}$ can be exactly obtained from $\pr{ubspca}$ by multiplying the $i$th coordinate of each sample by a Rademacher random variable $x_i$. This implies that any reductions to $\pr{ubspca}$ also apply to $\pr{uspca}$. However, as observed in \cite{brennan2018reducibility}, $\pr{ubspca}$ appears to be a strictly easier problem than $\pr{uspca}$ when $k = \Omega(\sqrt{n})$. This is because the sum of the entries of the covariance matrix of $\pr{ubspca}$ allows detection at lower $\theta$ than the best known polynomial time algorithms for $\pr{uspca}$ in this regime.

Since the sets of distributions $H_1$ for $\pr{cbspca}$ and $\pr{cspca}$ include that of $\pr{ubspca}$, computational lower bounds for $\pr{ubspca}$ are the strongest. Similarly, lower bounds for $\pr{fcspca}$ are the weakest. As previously mentioned, reductions to simple vs. simple hypothesis testing formulations are technically more challenging than for composite problems. As previously mentioned, our main reduction $\textsc{Clique-to-Wishart}$ applies to $\pr{ubspca}$ while $\textsc{Subsampling-Random-Rotations}$ requires the composite formulation in $\pr{cbspca}$. Previous reductions to sparse PCA have mostly been to composite hypothesis testing formulations. In \cite{berthet2013complexity} and \cite{wang2016statistical}, lower bounds are shown for a composite vs. composite formulation of sparse PCA in a sub-Gaussian noise model rather than the spiked covariance model. This formulation and these reductions are discussed in more detail in Appendix \ref{sec:appendixprevred}. In \cite{gao2017sparse}, planted clique lower bounds were shown for sparse PCA in a formulation similar to $\pr{fcspca}$ at the suboptimal barrier of $\theta = \tilde{\Theta}(k^2/n)$. In \cite{brennan2018reducibility}, optimal lower bounds when $d = \tilde{\Theta}(n)$ and $k = \Omega(\sqrt{n})$ were shown for $\pr{cspca}$ and suboptimal lower bounds at $\theta = \tilde{\Theta}(k^2/n)$ were shown for $\pr{ubspca}$. Most algorithmic papers for sparse PCA consider either $\pr{ubspca}$ or $\pr{uspca}$ for the detection task or a variant of $\pr{cspca}$ where $B_k = \{ v \in \mathbb{S}^{d - 1} : \| v \|_0 = k \text{ and } |v_i| = \Omega(1/\sqrt{k}) \text{ for } i \in \text{supp}(v) \}$ for the recovery task.

\section{Previous Reductions to Sparse PCA}
\label{sec:appendixprevred}

Part of the rationale behind our reductions comes from a careful understanding of pre-existing reductions to sparse PCA and the dependences they induce in $n, k$ and $\theta$. In this section, we give a short overview of some key ideas in the reductions of \cite{berthet2013complexity}, \cite{brennan2018reducibility}, \cite{wang2016statistical} and \cite{gao2017sparse}, focusing on \cite{berthet2013complexity} and \cite{brennan2018reducibility}.

\subsection{Berthet-Rigollet (2013)}

The reduction in \cite{berthet2013complexity} begins with an instance $G$ of the planted clique problem $\pr{pc}(N, N^{1/2 - \epsilon}, 1/2)$ and maps it to a sub-Gaussian variant of sparse PCA with $\theta = \Theta(k/n^{1/2 + \epsilon})$. The reduction proceeds as follows:
\begin{enumerate}
\item Randomly permute the vertices of $G$ and let $B$ be the lower left $kN^{1/2 + \epsilon} \times n$ submatrix of the matrix $2 \cdot A(G) - \mathbf{1}_{N \times N}$ where $A(G)$ is the adjacency matrix of $G$ where $k \le \frac{1}{2} N^{1/2 - \epsilon}$.
\item For each column $B_i$ of $B$, multiply $B_i$ by a sign chosen u.a.r. from $\{-1, 1\}$ and then append $d - kN^{1/2 + \epsilon}$ u.a.r. samples from $\{-1, 1\}$ to the end of $B_i$.
\item Output $B_1, B_2, \dots, B_n \in \{-1, 1\}^d$ as the target sparse PCA samples.
\end{enumerate}
In \cite{berthet2013complexity}, $n$ is chosen such that $n = \Theta(N)$ and $n/N$ is a sufficiently small constant, $k$ satisfies $k \le \frac{1}{2}N^{1/2 - \epsilon}$ and $d$ satisfies $d \ge kN^{1/2 + \epsilon}$.

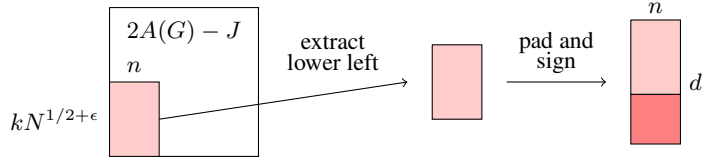
\begin{figure*}[t!]
\centering
\begin{tikzpicture}[scale=0.33]
\tikzstyle{every node}=[font=\footnotesize]
\draw (0, 3) -- (0, -3) -- (6, -3) -- (6, 3) -- (0, 3);
\draw [fill=red!20] (0, -3) -- (2, -3) -- (2, 0) -- (0, 0) -- (0, -3);
\node at (3, 2) {$2 A(G) - J$};
\node at (-2.2, -1.5) {$kN^{1/2 + \epsilon}$};
\node at (1, 0.6) {$n$};
\draw[->] (2, -1.5) -- (12, 0);
\node at (9, 1.6) {extract};
\node at (9, 0.8) {lower left};
\draw [fill=red!20] (13, 1.5) -- (15, 1.5) -- (15, -1.5) -- (13, -1.5) -- (13, 1.5);
\draw[->] (16, 0) -- (20, 0);
\node at (18, 1.6) {pad and};
\node at (18, 0.8) {sign};
\draw [fill=red!20] (21, 2.5) -- (23, 2.5) -- (23, -0.5) -- (21, -0.5) -- (21, 2.5);
\draw [fill=red!50] (21, -0.5) -- (23, -0.5) -- (23, -2.5) -- (21, -2.5) -- (21, -0.5);
\node at (22, 3) {$n$};
\node at (23.6, 0) {$d$};
\end{tikzpicture}
\caption{Visual depiction of the reduction in \cite{berthet2013complexity} from $\pr{pc}(N, N^{1/2 - \epsilon}, 1/2)$ to sub-Gaussian sparse PCA with parameters $n = \Theta(N)$, $k  = O(N^{1/2 - \epsilon})$, $d \ge kN^{1/2 + \epsilon}$ and $\theta = \Theta(k/n^{1/2 + \epsilon})$.}
\label{fig:berthetrigollet}
\end{figure*}

Since the output $B_i$ take values in $\{-1, 1\}^d$, this reduction does not show planted clique-hardness for sparse PCA in the spiked covariance model. Instead, \cite{berthet2013complexity} uses this reduction to show lower bounds for a sub-Gaussian variant of sparse PCA. Given a sequence of samples $X = (X_1, X_2, \dots, X_n) \in \mathbb{R}^d$, let $\hat{\Sigma}(X) = \frac{1}{n} \sum_{i = 1}^n X_i X_i^\top$ be the empirical covariance matrix of the samples. If $X$ is distributed according to $H_0$ in the spiked covariance model, then the spectral norm of $\hat{\Sigma}(X)$ satisfies $\| \hat{\Sigma}(X) - I_d \| = O(1/\sqrt{n})$ with high probability. If $X$ is distributed according to $H_1$, then it follows that $v^\top \hat{\Sigma}(X) v = 1 + \theta + O(1/n + \sqrt{k \theta /n})$ with high probability where $v$ denotes the planted $k$-sparse principal component. Extending to all distributions obeying these concentration inequalities yields the following composite vs. composite hypothesis testing formulation:
\begin{align*}
H_0 : X \sim \mP_0^{\otimes n} \quad \text{ where } \mP_0 &\text{ s.t. } \bP_{X \sim \mP_0^{\otimes n}} \left[ \| \hat{\Sigma}(X) - I_d \| = \Omega\left(\frac{1}{\sqrt{n}} \right) \right] \le \epsilon \\
H_1 : X \sim \mP_1^{\otimes n} \quad \text{ where } \mP_1 &\text{ s.t. } \bP_{X \sim \mP_1^{\otimes n}} \left[ v^\top \hat{\Sigma}(X) v \le  1 + \theta - \Omega\left(\frac{1}{n} + \sqrt{\frac{k \theta}{n}}\right) \right] \le \delta \\
&\text{ for some } v \in \mathcal{B}_0(k)
\end{align*}
In \cite{berthet2013complexity}, the constants in the $\Omega(\cdot)$ above and $\delta$ are specified. Under this formulation of sparse PCA, the Berthet-Rigollet reduction shows a lower bound against algorithms that can solve sparse PCA over all distributions satisfying the sub-Gaussian inequalities above.

To show correctness for this reduction, it suffices to show that $H_i$ of $\pr{pc}$ is mapped to a distribution with small total variation distance from a distribution $\mP_i^{\otimes n}$ in $H_i$ above for each $i = 1, 2$. The marginals of the columns $B_i$ can be verified to satisfy the sub-Gaussian concentration inequalities above. Therefore it suffices to show that the columns $B_1, B_2, \dots, B_n$ are close in total variation to independent. This amounts to showing that the number of the $n$ columns drawn from the original $N$ that correspond to clique vertices is close in total variation to $\text{Bin}(N, K/N)$. This follows from finite de Finetti's theorem and the fact that $n/N$ is small. We now argue that the instance of sub-Gaussian sparse PCA resulting from this reduction has $\theta = \Theta(k/n^{1/2 + \epsilon})$. Under $H_1$, let $v \in \mathbb{R}^d$ be the unit vector with entries $\pm 1/\sqrt{k}$ on the intersection of the rows of $B$ with the planted clique, with signs chosen to match those used to form $B$ in Step 2. Then $\theta$ is given by
$$1 + \theta = \bE\left[v^\top \hat{\Sigma}(B) v\right] = \bE\left[ \frac{1}{n} \sum_{i = 1}^n (B_i^\top v)^2 \right] = \bE\left[(B_1^\top v)^2\right] = 1 - \frac{1}{N^{1/2 + \epsilon}} + \frac{k}{N^{1/2 + \epsilon}}$$
Since vertex $1$ is in the clique with probability $N^{-1/2-\epsilon}$, in which case $(B_1^\top v)^2 = k$. Otherwise, $(B_1^\top v)^2$ has conditional expectation $1$. The reduction in \cite{wang2016statistical} adapts the analysis of this reduction to show lower bounds for the sparse PCA estimation task.

\subsection{Gao-Ma-Zhou (2017) and Brennan-Bresler-Huleihel (2018)}

\begin{figure*}[t!]
\centering
\begin{tikzpicture}[scale=0.33]
\tikzstyle{every node}=[font=\footnotesize]
\draw (0, 3) -- (0, -3) -- (6, -3) -- (0, 3);
\draw [fill=red!30] (0, 0) -- (0, 3) -- (3, 0) -- (0, 0);
\node at (1.25, 0.48) {clique};
\node at (2.5, -2.3) {lower adj.};
\draw[->] (6, 0) -- (12, 0);
\node at (9, 1.8) {gaussianize and};
\node at (9, 0.8) {plant diagonals};
\draw (13, 3) -- (13, -3) -- (19, -3) -- (13, 3);
\draw [fill=red!30] (13, 0) -- (13, 3.5) -- (19.5, -3) -- (19, -3) -- (16, 0) -- (13, 0);
\node at (16.8, 2) {$\mN(\mu, 1)$};
\node at (15.5, -2.3) {$\mN(0, 1)$};
\draw[->] (20, 0) -- (26, 0);
\node at (23, 1.8) {asymmetrize};
\node at (23, 0.8) {and permute};
\draw (27, 3.4) -- (27, -3.4) -- (33.8, -3.4) -- (33.8, 3.4) -- (27, 3.4);
\draw [fill=red!30] (27, 3.4) -- (27, 0) -- (30.4, 0) -- (30.4, 3.4) -- (27, 3.4);
\node at (28.75, 1.7) {$\mN(\mu, 1)$};
\node at (30.4, -1.7) {$\mN(0, 1)$};
\draw[->] (0, -10) -- (6, -10);
\node at (3, -8.2) {randomly};
\node at (3, -9.2) {rotate};
\draw (7, -6.6) -- (7, -13.4) -- (13.8, -13.4) -- (13.8, -6.6) -- (7, -6.6);
\draw [fill=red!10] (7, -6.6) -- (7, -10) -- (13.8, -10) -- (13.8, -6.6) -- (7, -6.6);
\draw[->] (9.2, -5) arc (112:428:3cm and 0.4cm);
\node at (10.4, -8.3) {$\mu \sqrt{k} \cdot \mathbf{1}_S u^\top$};
\node at (10.4, -11.7) {$+ \mN(0, 1)^{\otimes N \times N}$};
\draw[->] (15, -10) -- (21, -10);
\node at (18, -9.2) {subsample};
\draw (22, -6.6) -- (22, -13.4) -- (27, -13.4) -- (27, -6.6) -- (22, -6.6);
\draw [fill=red!10] (22, -6.6) -- (22, -10) -- (27, -10) -- (27, -6.6) -- (22, -6.6);
\end{tikzpicture}
\caption{Visual depiction of the reduction in \cite{brennan2018reducibility} as applied under $H_1$.}
\label{fig:randomrotations}
\end{figure*}
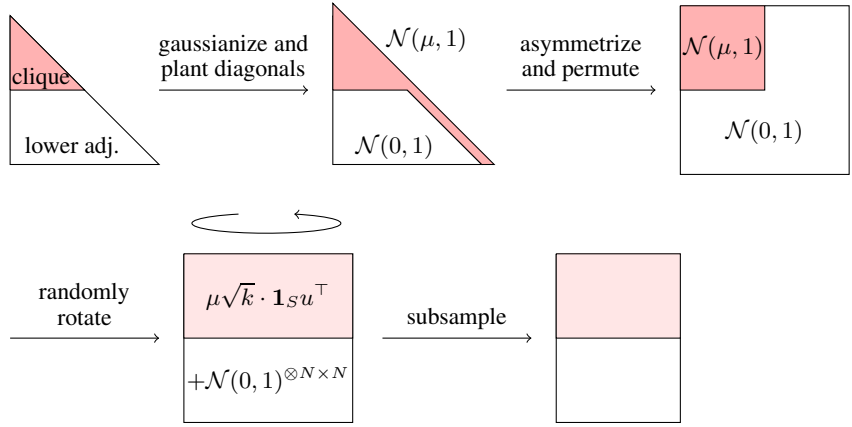

The reduction in \cite{gao2017sparse} is similar to that in \cite{berthet2013complexity}, but instead Gaussianizes the adjacency matrix of planted clique and then maps the coordinates corresponding to the planted clique to mixtures of two gaussians. This ultimately produces samples from the spiked covariance model, showing the suboptimal lower bound at $\theta = \tilde{\Theta}(k^2/n)$.

We focus in this section on providing an overview of the reduction in \cite{brennan2018reducibility} and its differences from \cite{berthet2013complexity}. This reduction obtains the same guarantees of the reduction in \cite{gao2017sparse} in the stronger simple vs. simple hypothesis testing formulation of the spiked covariance model. In order to extend the reduction in \cite{berthet2013complexity} to the spiked covariance model, a natural approach would be to devise a randomized map $f$ such that $f(B_i)$ is approximately distributed as a sample from the spiked covariance model under each of $H_0$ and $H_1$. However, observe that
$$\TV\left( \mL_{H_0}(B_1), \mL_{H_1}(B_1) \right) \le \frac{K}{N} \asymp \frac{k}{n}$$
for each $i \in [n]$. This is because under $H_1$, $B_i$ is distributed as a mixture with probability $1 - K/N$ of $\mL_{H_0}(B_1)$ as described above. However, such a map would satisfy that
\begin{align*}
\TV\left( \mL_{H_0}(f(B_1)), \mL_{H_1}(f(B_1)) \right) &\asymp \TV\left( \mN(0, I_d), \mN(0, I_d + \theta vv^\top) \right) \\
&\asymp \min \left\{ 1, \| \theta vv^\top \|_F \right\} = \theta
\end{align*}
by Theorem 1.1 in \cite{devroye2018total}. Therefore if such a map were to exist, the data processing inequality would require that $\theta = O(k/n)$, which is always below the statistic limit. Thus there are two criteria that any reduction from planted clique to the spiked covariance model would need to fulfill: (1) the reduction would need to somehow mix the columns of the adjacency matrix rather than process them in isolation; and (2) the reduction would need to output a jointly Gaussian distribution.

The key insight of the reduction to sparse PCA in \cite{brennan2018reducibility} and our $\chi^2\textsc{-Random-Rotations}$ reduction is that randomly rotating the \textit{rows} -- and thus across the different samples -- of a Gaussianized planted clique adjacency matrix fulfills these two criteria simultaneously. The reduction in \cite{brennan2018reducibility} maps an instance $G$ of the planted clique problem $\pr{pc}(N, K, 1/2)$ as follows:
\begin{enumerate}
\item Plant diagonal entries of $1$ in $A(G)$ and compute $M$ by applying a rejection kernel mapping $1 \to \mN(\mu, 1)$ and $\text{Bern}(1/2) \to \mN(0, 1)$ where $\mu = \Theta(1/\sqrt{\log n})$ symmetrically to $A(G)$.
\item Asymmetrize $M$ by setting $M_{ij} \gets \frac{1}{\sqrt{2}} \left(M_{ij} + G_{ij} \right)$ and $M_{ji} \gets \frac{1}{\sqrt{2}} \left(M_{ij} - G_{ij} \right)$ for each $i \le j$ where $G_{ij} \sim_{\text{i.i.d.}} \mN(0, 1)$ and randomly permute the rows of $M$.
\item Generate $R$ from the Haar measure on the orthogonal group $\mO_N$ and update $M \gets MR$.
\item Output the first $n$ columns of $M$ as the target sparse PCA samples where $n/N$ is small.
\end{enumerate}
It is shown through a $\chi^2$ divergence computation in \cite{brennan2018reducibility} that after Steps 1 and 2, the planted diagonals are hidden by the random rotations. The remainder of the correctness of this reduction follows an argument similar to the analysis of $\chi^2\textsc{-Random-Rotations}$, but instead of planting $\chi^2$-mean entries during Gaussianization, relies on the following finite de Finetti-style theorem for coordinates of random unit vectors.

\begin{theorem}[Diaconis and Freedman \cite{diaconis1987dozen}]
Suppose that $(v_1, v_2, \dots, v_n)$ is uniformly distributed according to the Haar measure on $\mathbb{S}^{n-1}$. Then for each $1 \le m \le n - 4$,
$$\TV\left( \mL\left( v_1, v_2, \dots, v_m \right), \mN(0, n^{-1})^{\otimes m} \right) \le \frac{2(m+3)}{n - m - 3}$$
\end{theorem}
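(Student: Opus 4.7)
The plan is to work directly with the explicit marginal density of $(v_1, \dots, v_m)$ on the unit ball in $\bR^m$ and to compare it pointwise with the target Gaussian density. Using the representation $v = g/\|g\|_2$ with $g \sim \mN(0, I_n)$ and integrating out the last $n-m$ coordinates (equivalently, using that $\|(v_1,\dots,v_m)\|_2^2$ has a Beta$(m/2,(n-m)/2)$ distribution and that the direction is independent and uniform on $\mathbb{S}^{m-1}$), a direct computation yields the marginal density
\[
f_m(x) \;=\; \frac{\Gamma(n/2)}{\pi^{m/2}\,\Gamma((n-m)/2)} \,(1-\|x\|_2^2)^{(n-m-2)/2}
\]
supported on $\{x \in \bR^m : \|x\|_2 < 1\}$, while the target density is $\phi_m(x) = (n/2\pi)^{m/2} \exp(-n\|x\|_2^2/2)$.

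My approach would then be to control the ratio $f_m/\phi_m$ as the product of a prefactor $\Gamma(n/2)/[(n/2)^{m/2}\,\Gamma((n-m)/2)]$ and an exponential factor coming from $(1-\|x\|_2^2)^{(n-m-2)/2}$ versus $\exp(-n\|x\|_2^2/2)$. For the prefactor I would use Gautschi-type estimates, writing $\Gamma(n/2)/\Gamma((n-m)/2)$ as a telescoping product of linear factors in the range $(n-m)/2$ to $(n-2)/2$ and comparing term-by-term with $(n/2)^{m/2}$. For the exponential factor, a Taylor expansion $(1-t)^{(n-m-2)/2} = \exp(-(n-m-2)t/2 + O(t^2 n))$ at $t = \|x\|_2^2$ produces a multiplicative discrepancy of $\exp((m+2)t/2)$ plus a lower-order correction, which is small provided $t$ lies in the typical Gaussian range $O(m/n)$.

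To pass from pointwise control to a TV bound, I would split the integral $\tfrac12 \int |f_m - \phi_m|\,dx$ into a typical region $\{\|x\|_2^2 \le Cm/n\}$ and its complement. On the typical region, integrating the first-order relative error against $\phi_m$ yields a contribution of order $(m+2)\,\bE_{\phi_m}[\|X\|_2^2] = O(m/n)$. On the complement, tail bounds for both the Beta and Gaussian marginals---equivalently $\chi^2_n$ concentration in the underlying Gaussian representation $g/\|g\|_2$---contribute negligibly once $C$ is chosen sufficiently large.

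The main obstacle will be sharpening this scheme to produce the explicit constant $2(m+3)/(n-m-3)$ rather than a generic $O(m/n)$. That tightness requires careful accounting of the Stirling error in the Gamma-function ratio, tracking the precise exponent $(n-m-2)/2$, optimizing the cutoff between typical and atypical regions, and retaining only the leading terms in the expansion of $\log(1-\|x\|_2^2)$. I suspect the classical Diaconis--Freedman argument avoids the region split altogether, instead exploiting the explicit product form of $f_m/\phi_m$ via a single direct inequality that delivers the stated constant in one stroke.
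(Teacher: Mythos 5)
First, a point of orientation: the paper does not prove this statement at all --- it is quoted verbatim from \cite{diaconis1987dozen} and used only in the expository appendix on prior reductions --- so your proposal can only be judged against the classical argument and on its own internal consistency. Your setup is right (the marginal density $f_m(x) \propto (1-\|x\|_2^2)^{(n-m-2)/2}$ is correct, and density comparison is indeed how Diaconis and Freedman proceed), but the error accounting in your plan does not work, even at the level of getting the right order in $m$, let alone the explicit constant. The concrete problem is a cancellation you discard by estimating the prefactor and the exponential factor separately. Writing $t = \|x\|_2^2$, the log-ratio is $\log C_{n,m} + \tfrac{n-m-2}{2}\log(1-t) + \tfrac{nt}{2}$ with $C_{n,m} = \Gamma(n/2)/\bigl[(n/2)^{m/2}\Gamma((n-m)/2)\bigr]$. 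At a typical point $t \asymp m/n$, your ``first-order discrepancy'' $\exp((m+2)t/2)$ is $\exp(\Theta(m^2/n))$, the quadratic term $\tfrac{n-m-2}{4}t^2$ you call a lower-order correction is also $\Theta(m^2/n)$, and $\log C_{n,m}$ (your Gautschi-type telescoping gives exactly this) is again $-\Theta(m^2/n)$. These three $\Theta(m^2/n)$ contributions cancel to leave a log-ratio whose typical size is $O(\sqrt{m}/(n-m))$; any scheme that bounds them in isolation and adds absolute values can only deliver a TV bound of order $m^2/n$. Relatedly, your claim that integrating the first-order relative error gives ``$(m+2)\,\bE_{\phi_m}[\|X\|_2^2] = O(m/n)$'' contains an arithmetic slip: $\bE_{\phi_m}[\|X\|_2^2] = m/n$, so this quantity is $\Theta(m^2/n)$. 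Consequently the plan fails to prove the stated bound $2(m+3)/(n-m-3)$ already for $m \gtrsim \sqrt{n}$ (where the theorem still gives a nontrivial, indeed $o(1)$, bound), independently of the constant-chasing difficulty you acknowledge at the end.

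The repair is essentially what you guess in your last sentence: do not linearize at all. Keep the exact one-dimensional likelihood ratio $L(t) = C_{n,m}(1-t)^{(n-m-2)/2}e^{nt/2}$, use that both densities integrate to one so that $\TV = \int (\phi_m - f_m)_+\,dx = \bE_{\phi_m}\bigl[(1 - L(\|X\|_2^2))_+\bigr]$, and exploit the monotonicity/single-crossing structure of $L$ in $t$ together with exact Beta--Gamma (equivalently $\chi^2$) computations; this is how the clean constant $2(m+3)/(n-m-3)$ emerges in the original argument, with no typical/atypical split and no Stirling bookkeeping. If you prefer to keep your expansion-based scheme, you must at minimum expand $\log C_{n,m}$ and $\log(1-t)$ jointly to second order and verify the $\Theta(m^2/n)$ terms cancel before taking absolute values --- and even then you should expect a bound of the form $C\sqrt{m}/(n-m)$ rather than the stated one, which would prove a different (in fact stronger for large $m$, weaker in constant) inequality than the theorem as quoted.
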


\subsection{Reducing Directly to Samples vs. the Empirical Covariance Matrix}

We now provide some rough intuition as to why one could expect mapping to the empirical covariance matrix to produce an optimal dependence among the parameters $n, k$ and $\theta$ and thus yield strong hardness for sparse PCA based on weak forms of the planted clique conjecture. In both of the reductions outlined above, the $\ell_2$ norm of the planted signal remains of the same order throughout the reduction. The $\ell_2$ norm squared of the clique component of the adjacency matrix of a planted clique instance is $\Theta(k^2)$. The samples from the spiked covariance model can be expressed under $H_1$ as $X_i = \sqrt{\theta} \cdot g v^\top + \mN(0, I_d)$ where $g \sim \mN(0, 1)$. The expected total $\ell_2$ norm squared of the components $\sqrt{\theta} \cdot g v^\top$ is $n \theta$ since $v$ is a unit vector and $\bE[g^2] = 1$. Thus a reduction that roughly preserves the total $\ell_2$ norm squared of the planted signal and that maps directly to samples would satisfy that $n \theta = O(k^2)$ and would show a lower bound with the suboptimal dependence $\theta = O(k^2/n)$. Note that each entry of the empirical covariance matrix of the $X_i$ has variance $\Theta(n)$. Scaling so that this is $\Theta(1)$ yields that the expectation of the empirical covariance matrix is $\sqrt{n} \cdot ( I_d + \theta vv^\top)$. Thus the $\ell_2$ norm squared of the planted part $\theta \sqrt{n} \cdot vv^\top$ is $\Theta(\theta^2 n)$. Equating this with the $\ell_2$ norm squared of the planted clique now yields the optimal dependence $\theta = O(\sqrt{k^2/n})$.

\end{document}